\documentclass[12pt]{article}
\usepackage[utf8]{inputenc}
\usepackage{amsthm,amsmath,amssymb,bbm,hyperref,graphicx,float,tikz}
\usepackage{xcolor}
\usepackage[normalem]{ulem}
\allowdisplaybreaks

\addtolength{\textwidth}{2.0cm}
\addtolength{\hoffset}{-1.0cm}
\addtolength{\textheight}{2.5cm}
\addtolength{\voffset}{-1.5cm}

\newtheorem{thm}{Theorem}
\newtheorem{prop}{Proposition}
\newtheorem{lemma}{Lemma}
\newtheorem{cor}{Corollary}
\theoremstyle{definition}
\newtheorem{defn}{Definition}
\theoremstyle{remark}
\newtheorem{rmk}{Remark}
\newtheorem{asm}{Assumption}

\DeclareMathOperator{\tr}{tr}

\newcommand{\RE}{\textup{Re\,}}
\newcommand{\IM}{\textup{Im\,}}
\newcommand{\Hilbert}{\mathcal{H}}
\newcommand{\CCC}{\mathbb{C}}
\newcommand{\EEE}{\mathbb{E}}
\newcommand{\PPP}{\mathbb{P}}
\newcommand{\RRR}{\mathbb{R}}
\newcommand{\SSS}{\mathbb{S}}
\newcommand{\be}{\begin{equation}}
\newcommand{\ee}{\end{equation}}

\newcommand{\scp}[2]{\langle #1|#2 \rangle}

\title{Typical Macroscopic Long-Time Behavior for Random Hamiltonians}
\author{
Stefan Teufel\footnote{Mathematics Institute, Eberhard Karls University T\"ubingen, 
	Auf der Morgenstelle 10, 72076 T\"ubingen, Germany. ORCID: 0000-0003-3296-4261,
	E-mail: stefan.teufel@uni-tuebingen.de},~
Roderich Tumulka\footnote{Mathematics Institute, Eberhard Karls University T\"ubingen, 
	Auf der Morgenstelle 10, 72076 T\"ubingen, Germany. 
	ORCID: 0000-0001-5075-9929
	E-mail: roderich.tumulka@uni-tuebingen.de},~ 
Cornelia Vogel\footnote{Mathematics Institute, Eberhard Karls University T\"ubingen, 
	Auf der Morgenstelle 10, 72076 T\"ubingen, Germany.
	ORCID: 0000-0002-3905-4730,
	E-mail: cornelia.vogel@uni-tuebingen.de}
}
\date{November 8, 2024}

\begin{document}

\maketitle

\begin{abstract}
We consider a closed macroscopic quantum system in a pure state $\psi_t$ evolving unitarily and take for granted that different macro states correspond to mutually orthogonal subspaces $\mathcal{H}_\nu$ (macro spaces) of Hilbert space, each of which has large dimension. We extend previous work on the question what the evolution of $\psi_t$ looks like macroscopically, specifically on how much of $\psi_t$ lies in each $\mathcal{H}_\nu$. Previous bounds concerned the \emph{absolute} error for typical $\psi_0$ and/or $t$ and are valid for arbitrary Hamiltonians $H$; now, we provide bounds on the \emph{relative} error, which means much tighter bounds, with probability close to 1 by modeling $H$ as a random matrix, more precisely as a random band matrix (i.e., where only entries near the main diagonal are significantly nonzero) in a basis aligned with the macro spaces. We exploit particularly that the eigenvectors of $H$ are delocalized in this basis. Our main mathematical results confirm the two phenomena of generalized normal typicality (a type of long-time behavior) and dynamical typicality (a type of similarity within the ensemble of $\psi_0$ from an initial macro space). They are based on an extension we prove of a no-gaps delocalization result for random matrices by Rudelson and Vershynin \cite{RV16}. 

\medskip

\noindent Key words: normal typicality; dynamical typicality; eigenstate thermalization hypothesis; delocalized eigenvector; macroscopic quantum system.
\end{abstract}

\section{Introduction}

We are concerned with an isolated quantum system in a pure state $\psi$ comprising a macroscopically large number of particles. Studying such systems in order to study macroscopic behavior is an approach that is particularly used in connection with the \textit{eigenstate thermalization hypothesis} (ETH) which has been investigated by physicists as well as mathematicians \cite{Deutsch91,Srednicki94,Erdoes21,CEH23}. In particular, the phenomena of equilibration and thermalization in closed quantum systems have attracted increasing attention in recent years \cite{BRGSR18,BG09,GHT13,GHT14,GHT15,GLMTZ09,GLTZ10,Reimann08,Reimann2015,Reimann2018b,Reimann2018a,RG20,Short11,SF12,SWGW22,TTV22-physik}.

We assume that the Hilbert space $\mathcal{H}$ of the system has very large but finite dimension.\footnote{The physical background is that $\Hilbert$ is really the subspace of the full Hilbert space corresponding to a micro-canonical energy interval $[E-\Delta E, E]$ with $\Delta E$ the resolution of macroscopic energy measurements, and this interval contains a very large but finite number of eigenvalues of the Hamiltonian, realistically of the order $10^{10^{10}}$. However, in this paper we will not assume that all eigenvalues of $H$ on $\Hilbert$ lie between $E-\Delta E$ and $E$.}
Following Wigner \cite{Wigner55}, we model the Hamiltonian as a Hermitian random matrix~$H$. Following von Neumann \cite{vonNeumann29}, we regard as given an orthogonal decomposition
\begin{align}
    \Hilbert = \bigoplus_{\nu}\Hilbert_\nu \label{eq: decomp}
\end{align}
of the system's Hilbert space $\Hilbert$ into subspaces $\Hilbert_\nu$ (``macro spaces'') associated with different macro states $\nu$ \cite{Leb93,GLTZ10,SWGW22,TTV22-physik}. The macroscopic behavior or macroscopic appearance of a vector $\psi\in\Hilbert$ is then represented by the weights given by $\psi$ to different macro states $\nu$, i.e., by the sizes of the contributions to $\psi$ from the various $\Hilbert_\nu$, $\|P_\nu\psi\|^2$ with $P_\nu$ the projection to $\Hilbert_\nu$. Here we ask how, under the unitary time evolution $\psi_t = \exp(-iHt)\psi_0$, the macroscopic appearance of $\psi_t$ typically evolves and equilibrates in the long run, more precisely, what $\|P_\nu\psi_t\|^2$ are for typical $\psi_0$ from a macro space and/or typical large $t$ (where ``typical $\psi_0$'' refers to the uniform distribution over the unit sphere; more precise formulations later). Thereby, we extend and refine previous work \cite{TTV22-physik} on the same physical question by means of a deeper mathematical analysis of the behavior of $\|P_\nu\psi_t\|^2$ for typical $H$ (on top of typical $\psi_0$ and typical $t$) from suitable random matrix ensembles.

\subsection{Normal Typicality}

The macroscopic appearance of a system with $\psi\in\Hilbert_\nu$ is summarized by the macro state~$\nu$; of course, a general state is a superposition of contributions from different macro spaces, which makes the superposition weights $\|P_\nu \psi\|^2$ relevant. It can be useful to compare a given state $\psi$ to a \emph{purely random} vector~$\phi$, i.e., one that is uniformly distributed over the unit sphere $\mathbb{S}(\Hilbert) = \{\psi\in\Hilbert: \|\psi\|=1\}$; $\phi$ has, with probability near 1, superposition weights approximately proportional to the dimension of $\Hilbert_\nu$,
\begin{align}
    \|P_\nu\phi\|^2 \approx \frac{d_\nu}{D},
\end{align}
provided that $d_\nu:= \dim\Hilbert_\nu$ and $D:=\dim\Hilbert$ are large; see, e.g., Lemma 1 in \cite{GLMTZ09} and the references therein. Such a behavior is sometimes called ``normal,'' in analogy to the concept of a normal number \cite{Normalnumber}.

In the cases that von Neumann considered in \cite{vonNeumann29}, it turned out that every $\psi_0\in\mathbb{S}(\Hilbert)$ evolves so that for most $t\in[0,\infty)$,
\begin{align}
    \|P_\nu\psi_t\|^2 \approx \frac{d_\nu}{D} \quad \forall \nu \label{eq: NT}
\end{align}
if $d_\nu$ and $D$ are sufficiently large. This behavior is called ``normal typicality'' and more elaborated in \cite{GLMTZ09, GLTZ10}. 

However, the cases von Neumann considered are not very realistic: His conditions on~$H$ are true with probability near 1 if the eigenbasis of $H$ is chosen purely randomly (i.e., uniformly) among all orthonormal bases (and some further technical conditions that are not very restrictive). This can be regarded as expressing that the energy eigenbasis is \textit{unrelated} to the orthogonal decomposition \eqref{eq: decomp}. In this case, the system would very rapidly go from any initial macro space $\Hilbert_\mu$ to the thermal equilibrium macro space $\Hilbert_\textup{eq}$ \cite{GHT13,GHT14,GHT15}, which is unphysical.\footnote{Thermal equilibrium requires that energy (and other quantities) is rather evenly distributed over all degrees of freedom, and for getting an even distribution, it needs to get transported through space, which requires time and passage through other macro states.} That is why we are interested in generalizations of normal typicality that apply also to Hamiltonians whose eigenbasis is not unrelated to the decomposition \eqref{eq: decomp}.
 
In such more general scenarios, we showed in \cite{TTV22-physik} that the following \emph{generalized normal typicality} still holds: for most $\psi_0$ from the unit sphere $\mathbb{S}(\Hilbert_\mu)$ in the macro space associated with a (possibly non-equilibrium) macro state $\mu$ and most $t\in[0,\infty)$,
\begin{align}
    \|P_\nu\psi_t\|^2 \approx M_{\mu\nu} \quad \forall\nu \label{eq: GNT}
\end{align}
for suitable values $M_{\mu\nu}$, in fact (for non-degenerate $H$)
\be\label{Mmunuexpression}
M_{\mu\nu}= \frac{1}{d_\mu} \sum_n \langle\phi_n|P_\mu|\phi_n\rangle \langle\phi_n|P_\nu|\phi_n\rangle 
\ee
with $\phi_1,\ldots,\phi_D$ an orthonormal eigenbasis of $H$. 
Put another way, generalized normal typicality means that the superposition weights $\|P_\nu\psi_t\|^2$ approach certain stable values and stay close to them for most times; we also say that the system \textit{equilibrates}. See Figure~\ref{fig:numerical example 1} and Figure~\ref{fig:numerical example 2} for a numerical example.

\begin{figure}[h]
    \centering
    \includegraphics[width = 0.7\textwidth]{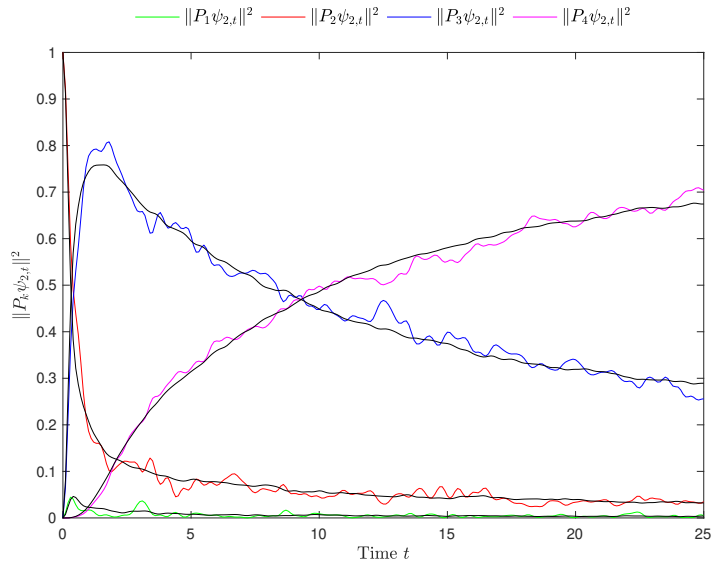}
    \caption{Numerical simulation of the functions  $t\mapsto\|P_\nu\psi_t\|^2$ as detailed in Appendix~\ref{sec: numerics}. A Hilbert space of dimension $D=2222$ is decomposed into 4 macro spaces of dimensions $d_1=2$ (green curve), $d_2=20$ (red curve), $d_3=200$ (blue curve), and $d_4=2000$ (purple curve). At large times, see also Figure~\ref{fig:numerical example 2}, the equilibrium subspace $\Hilbert_4$ has the biggest contribution to $\psi_t$. The initial wave function $\psi_0$ was chosen purely randomly from the unit sphere $\mathbb{S}(\Hilbert_2)$, and the Hamiltonian is a random band matrix in a basis aligned with the macro spaces with a band that is wide enough such that the eigenfunctions are still delocalized. Then parts of $\psi_t$ reach $\Hilbert_4$ only after passing through the macro space $\Hilbert_3$; in this example the blue curve increases first before it decreases and the purple curve increases. The black curves are the deterministic approximations $w_{2\nu}(t)$, see \eqref{eq: wmuB} according to dynamical typicality. (This figure already appeared in \cite{TTV22-physik}.)}
    \label{fig:numerical example 1}
\end{figure}

\begin{figure}[h]
    \centering
    \includegraphics[width = 0.7\textwidth]{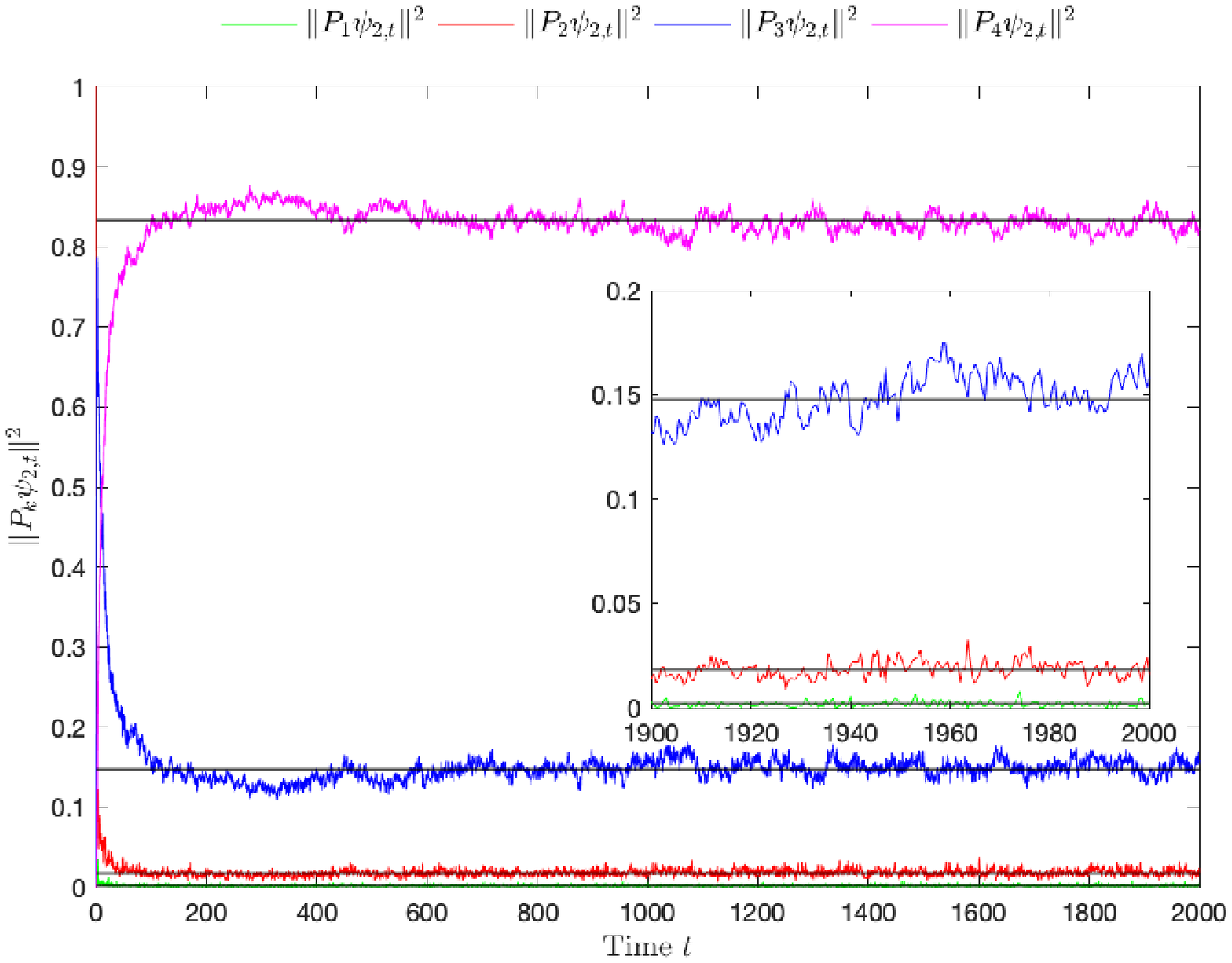}
    \caption{The same simulation as in Figure~\ref{fig:numerical example 1}, only for larger times. The inset shows a part of the figure in magnification. In the long run, the curves $t\mapsto \|P_\nu\psi_t\|^2$ reach 
    the values $M_{\mu\nu}$ (indicated by the black horizontal lines) and stay close to them, up to either small or rare fluctuations. (This figure also already appeared in \cite{TTV22-physik}.)}
    \label{fig:numerical example 2}
\end{figure}

One of our goals in this paper is to strengthen the results for suitable types of random Hamiltonians. We are particularly interested in  random matrices with band structure (see Figure~\ref{fig:bandmatrix}) in a basis that diagonalizes the projections onto the macro spaces. This band structure makes it less likely that a state in a small macro space (far away from equilibrium) goes directly into the thermal equilibrium macro space without passing through some other macro spaces in between.

\begin{figure}[h]
\begin{center}
\begin{tikzpicture}
\filldraw[color=gray] (6,0)--(6,0.3)--(0.3,6)--(0,6)--(0,5.7)--(5.7,0)--cycle;
\draw (0,0)--(6,0)--(6,6)--(0,6)--cycle;
\draw (0,6)--(6,0);
\draw (2,6)--(2,0);
\draw (6,4)--(0,4);
\draw (1,6)--(1,0);
\draw (6,5)--(0,5);
\draw (0.5,6)--(0.5,0);
\draw (6,5.5)--(0,5.5);
\end{tikzpicture}
\end{center}
\caption{A band matrix has significantly nonzero entries only in the grey region near the main diagonal. The first macro space is spanned by the first $d_1$ basis vectors, the second macro space by the next $d_2$ basis vectors, etc.; in the figure, a line is drawn after the first $d_1$ columns and the first $d_1$ rows, etc.}
\label{fig:bandmatrix}
\end{figure}
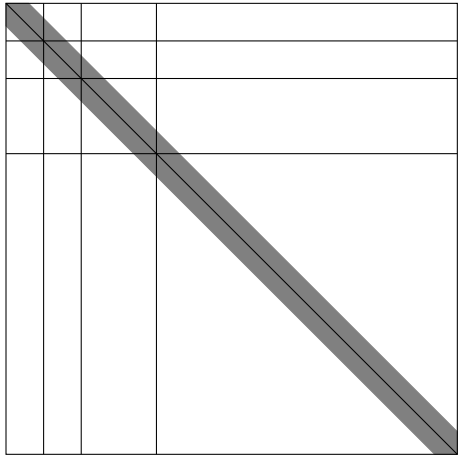

We showed in \cite{TTV22-physik} that for general Hamiltonians under suitable assumptions the \textit{absolute} errors 
\be
\bigl| \|P_\nu \psi_t\|^2 -M_{\mu\nu} \bigr|
\ee
are small (see Theorems \ref{thm: abs errors A}, \ref{thm: GNT} in Section~\ref{sec: prelim}). However, if 
$d_\nu \ll D$ then
we expect that $\|P_\nu\psi_t\|^2\ll 1$ and $M_{\mu\nu}\ll 1$. Then the absolute error 
would be small indeed, but this would not imply that $\|P_\nu \psi_t\|^2/M_{\mu\nu} \approx 1$. Therefore we want to show in the following that for suitable random Hamiltonians the \textit{relative} errors 
\be
\frac{\bigl| \|P_\nu \psi_t\|^2 -M_{\mu\nu}\bigr|}{M_{\mu\nu}}
\ee
are small as well. This is the first goal of this paper: to show for certain tractable models and under suitable conditions on the $d_\nu$ that \eqref{eq: GNT} holds in the sense that even the \emph{relative} error is small. And this goal is indeed highly relevant, since we expect that $d_\nu\ll D$ for \emph{all} non-equilibrium macrospaces $\Hilbert_\nu$.

Rigorous statements are formulated in Section \ref{sec: main results}: In Theorem \ref{thm: main new} we provide a lower bound on $M_{\mu\nu}$ under the assumption that $H=H_0+V$ where $H_0$ is any deterministic Hermitian matrix and $V$ is a Gaussian random matrix with entries that are independent up to Hermitian symmetry and have variances bounded away from 0 (so also far from the main diagonal, entries of $H$ cannot be \emph{too} small). From this lower bound and the upper bound on the absolute error provided in \cite{TTV22-physik} (and recapitulated in Section \ref{sec: prelim}), we obtain an upper bound on the relative error in Corollary \ref{cor: real dim rel}, valid with high probability (i.e., for most $H$) for most $\psi_0\in\SSS(\Hilbert_\mu)$ and most $t\in[0,\infty)$. That is, the upshot is that for typical $H$ from the ensemble considered, $\|P_\nu\psi_t\|^2$ is nearly independent of $\psi_0$ and in the long run nearly independent of $t$---it equilibrates.

This equilibration is related to the question of the increase of the quantum Boltzmann entropy observable
\begin{align}
    S_B := \sum_\nu S(\nu) P_\nu
\end{align}
with entropy values
\be
S(\nu):=k_B\log(d_\nu)\,,
\ee
where $k_B$ is the Boltzmann constant. Here is how: Physically, in every energy shell corresponding to a small energy interval $[E-\Delta E,E]$, there usually is one macro space $\Hilbert_\nu$, the one corresponding to thermal equilibrium, that has the overwhelming majority of dimensions, $\frac{d_\nu}{D}\approx 1$ \cite{GLMTZ10}; we will denote it by $\Hilbert_{\textup{eq}}$. Now if normal typicality holds as in \eqref{eq: NT}, or already if the $M_{\mu\nu}$ in \eqref{eq: GNT} tend to be bigger for $\nu$ with bigger dimensions $d_\nu$, then an initial state $\psi_0\in\mathbb{S}(\Hilbert_\mu)$ from a non-equilibrium (i.e., comparatively small) macro space $\Hilbert_\mu$ will evolve so that at most times $t$, the biggest contribution to $\psi_t$ lies in $\Hilbert_{\textup{eq}}$, and if $\|P_{\textup{eq}}\psi_t\|^2 \approx 1$, this implies further that $\|S_B\psi_t\| \approx S(\textup{eq})\gg S(\mu)=\|S_B\psi_0\|$, that is, the state has evolved to one with higher quantum Boltzmann entropy.

In order to obtain lower bounds for the $M_{\mu\nu}$ (and therefore upper bounds for the relative errors), we need lower bounds on $\|P_\nu\phi_n\|$, where $(\phi_n)$ is again an orthonormal eigenbasis of $H$. 

We take as fixed an orthonormal basis $(|j\rangle)_{j=1}^D$ of $\Hilbert$ that diagonalizes the $P_\nu$ (i.e., is such that each $|j\rangle$ lies in some $\Hilbert_\nu$). When we talk of $H$ as a matrix, we refer to this basis. In terms of this basis, 
\be \label{Pnuphinj}
\|P_\nu \phi_n\|^2 = \sum_{j\in I_\nu} |\langle \phi_n|j\rangle|^2 
\ee 
with $I_\nu$ the appropriate subset of $[D]:= \{1,\ldots,D\}$. 
In order to obtain a lower bound on this quantity we ask whether $\phi_n$ (expressed as an element of $\CCC^D$ relative to the basis $(|j\rangle)_j$) is localized or delocalized.

Our first main result, Theorem~\ref{thm: main new}, is based on results by Rudelson and Vershynin \cite{RV16} about the \textit{delocalization of eigenvectors} of a random matrix. While the delocalization of eigenvectors of different types of random matrices has been studied intensively in the literature in the recent years, e.g., \cite{ESY09,ESY09b,EK11,EK11b,EYY12,RV15,YYY21,AEK17,EKS19,BYY20,CEHK23}, most results were concerned with delocalization in the sup-norm, meaning that 
\be
\|\phi_n\|_\infty = \sup_{j=1}^D |\langle \phi_n|j\rangle|
\ee
cannot be too large, so many entries of $\phi_n$ must be non-negligible. However, that would still allow a negligible $\|P_\nu \phi_n\|$ for some $\nu$. Thus, for a lower bound on \eqref{Pnuphinj}, we need that only few entries are negligible. For this kind of reasons, results about the sup-norm only yield upper bounds for the $M_{\mu\nu}$ and, in very special cases, lower bounds for some of the $M_{\mu\nu}$; for example in Section \ref{sec: Improved LB} we show that with a sup-norm delocalization result from Ajanki, Erdős, and Krüger (2017) \cite{AEK17} we obtain useful lower bounds (Theorem~\ref{thm: lb AEK}) only if $\mu$ or $\nu$ is the thermal equilibrium macro state and $d_{\textup{eq}}$ is extremely large. 
The result from Rudelson and Vershynin (2016) \cite{RV16}, however, concerns another aspect of the delocalization of eigenvectors: it rules out gaps, i.e., no significant fraction of the coordinates 
of an eigenvector can carry only a negligible fraction of its mass. Their result gives lower bounds on \eqref{Pnuphinj} and enables us to prove nontrivial lower bounds for all $M_{\mu\nu}$. Unfortunately, the lower bounds for $M_{\mu\nu}$ obtained in this way seem to be far from optimal. More precisely, as long as one is in the delocalized regime, one expects that actually $M_{\mu\nu} \approx \frac{d_\nu}{D}$, as for normal typicality, even though the eigenvectors are not uniformly distributed over the sphere. (A matrix with uniformly chosen eigenvectors will be unlikely to have band structure, but a band matrix can very well have $M_{\mu\nu}\approx \frac{d_\nu}{D}$, as the simple example of the discrete Laplacian in 1d shows.) The bounds obtained by Rudelson and Vershynin were recently improved for matrices with independent entries \cite{LR20,LT20} but since we are interested in Hermitian random matrices (since they can serve as Hamiltonians), these improved results are not applicable in our situation.

Our theorems in this paper are proved for a general deterministic Hermitian matrix $B$ instead of a projection $P_\nu$, in parallel to previous works such as \cite{Short11,SF12,RG20}; in combination with the results from \cite{TTV22-physik} we can also obtain a finite-time result (that concerns most $t\in[0,T]$ instead of most $t\in [0,\infty)$).

Sharper estimates for all $M_{\mu\nu}$ can be obtained for certain ensembles of Hermitian random matrices for which stronger statements about the delocalization of eigenvectors are available. However, such statements are apparently not currently available for random band matrices, but only for matrices that have very substantially nonzero entries also far from the diagonal. Nevertheless, we report here also what would follow about $M_{\mu\nu}$ for those matrices, based on two results in the literature. First, in Section \ref{sec: Improved LB}, we use a delocalization result of Ajanki, Erdős, and Krüger \cite{AEK17} to obtain that $M_{\mu\nu}\approx d_\nu/D$ in case $\nu=\textup{eq}$ or $\mu=\textup{eq}$. Second, in Section \ref{sec: ETH}, we consider a version of the ETH formulated by Cipolloni, Erdős, and Henheik (2023) \cite{CEH23} and show that it implies that $M_{\mu\nu}\approx d_\nu/D$ for \emph{all} $\mu$ and $\nu$.

\subsection{Dynamical Typicality}

In \cite{TTV22-physik} we were not only concerned with generalized normal typicality but also with \textit{dynamical typicality}, which means that for any given $t$, $\|P_\nu\psi_t\|^2$ is nearly independent of $\psi_0\in\SSS(\Hilbert_\mu)$. In Figure \ref{fig:numerical example 1}, this phenomenon is visible in the proximity of the colored (exact) curves to the black ones. Here, we also provide an improved result about dynamical typicality for random $H$.

The name ``dynamical typicality'' was introduced by Bartsch and Gemmer \cite{BG09} for the following phenomenon: Given an observable $A$ and some value $a\in\mathbb{R}$, there is a function $a(t)$ such that for every $t\in\mathbb{R}$ and most initial wave functions $\psi_0\in\mathbb{S}(\Hilbert)$ with $\langle\psi_0|A|\psi_0\rangle \approx a$, it holds that $\langle\psi_t|A|\psi_t\rangle\approx a(t)$. A rigorous version of this fact was proved by Müller, Gross and Eisert \cite{MGE} who considered initial wave functions $\psi_0\in\mathbb{S}(\Hilbert)$ with $\langle\psi_0|A|\psi_0\rangle=a$ for a given $a\in\mathbb{R}$. Reimann \cite{Reimann2018a,Reimann2018b} argued that one also finds for most $\psi_0\in\mathbb{S}(\Hilbert)$ with $\langle\psi_0|A|\psi_0\rangle\approx a$ that $\langle\psi_t|B|\psi_t\rangle\approx b(t)$ for another observable $B$ and suitable ($\psi_0$-independent) $b(t)$. For technical reasons, these results do not cover the case that $A$ is a projection and $a=1$. In \cite{BRGSR18}, however, a quite general dynamical typicality result was proven that can also be applied to projections, and in \cite{TTV22-physik} we provided a simple proof for this special case. The result for projections can also be obtained as a special case of Eq.~(13) of Reimann \cite{Reimann07} (applied to a Gaussian distribution with covariance $P_\mu$, $K=1$, and $A=e^{iHt} P_\nu e^{-iHt}$).

As for generalized normal typicality, we only obtained bounds for the \emph{absolute} errors in \cite{TTV22-physik} (recapitulated as Theorem~\ref{thm: dyntyp phys} in Section~\ref{sec: prelim}).  Unfortunately, we cannot provide here an upper bound on the \emph{relative} errors either; but with the help of our improved version of the no-gaps delocalization result of Rudelson and Vershynin \cite{RV16} we are able to bound what we call the \textit{comparative error}, which is the absolute error divided by the time average, i.e.,
\begin{align}\label{comparativeerror}
    \frac{\bigl| \|P_\nu\psi_t\|^2 - \mathbb{E}_\mu\|P_\nu\psi_t\|^2 \bigr|}{\overline{\|P_\nu\psi_t\|^2}},
\end{align}
where $\EEE_\mu$ means the average over initial wave functions from the unit sphere in $\Hilbert_\mu$ and the overbar means time average over $[0,\infty)$. In Section~\ref{subsec: DynTyp}, we formulate a rigorous result about \eqref{comparativeerror} as Theorem \ref{thm: dyntyp rel 2}. It provides an upper bound on \eqref{comparativeerror} valid with high probability for random $H$ (distributed as before) for every $t$ and most $\psi_0\in\SSS(\Hilbert_\mu)$. If the constants, in particular $d_\mu,d_\nu$, and $D$, are such that this bound is small, then this means that $\|P_\nu\psi_t\|^2$ is nearly deterministic. Moreover, Theorem \ref{thm: dyntyp rel 2} says that also the whole function $t\mapsto \|P_\nu\psi_t\|^2$ is close to the function $t\mapsto \EEE_\mu\|P_\nu\psi_t\|^2$ on any interval $[0,T]$ in the $L^2$ norm.

\subsection{Contents}

The remainder of this paper is organized as follows: 
In Section \ref{sec: prelim}, we introduce some notation and recall our previous results from \cite{TTV22-physik}. In Section \ref{sec: main results}, we formulate our main results. In Section \ref{sec: no gaps}, we describe the result of Rudelson and Vershynin \cite{RV16} that we use (with minor corrections, see Theorem~\ref{thm: no gaps}) and point out which Gaussian random matrices will satisfy their hypotheses (Theorems~\ref{thm: GM bdd}, \ref{thm: GM nonzero mean}). In Section \ref{sec: more gen and proofs}, we prove our main results after formulating them in a somewhat more general version (Theorems~\ref{thm: main gen}, \ref{thm:11}) that applies to arbitrary operators $B$ instead of $P_\nu$. In Section \ref{sec: Improved LB}, we prove with the help of a delocalization result from Ajanki, Erdős, and Krüger \cite{AEK17} (quoted here as Theorem~\ref{thm: AEK deloc}) some improved lower bounds for the $M_{\mu\nu}$ that are nontrivial if $\mu$ or $\nu$ is the thermal equilibrium macro state (Theorem~\ref{thm: lb AEK}). In Section \ref{sec: ETH} we show that sharper estimates for all $M_{\mu\nu}$ follow from the version of the ETH due to Cipolloni, Erdős, and Henheik \cite{CEH23}. In Appendix~\ref{app: no res}, we include the proof that, with probability 1, a random $H$ with continuous distribution has no degeneracies or resonances. In Appendix \ref{sec: numerics}, we give more detail about our numerical examples.

\section{Prior Results} \label{sec: prelim}

We quote here the main results of \cite{TTV22-physik} about the absolute errors as Theorems~\ref{thm: abs errors A}, \ref{thm: GNT}, and~\ref{thm: dyntyp phys} alongside some related statements.
We start with a few definitions and in particular make precise some notions that appeared in the introduction: 

\begin{defn} \textbf{Most $\psi$, most $t$, and time-averages}.\\ 
Suppose that for each $\psi\in\mathbb{S}(\Hilbert)$, the statement $s(\psi)$ is either true or false, and let $\varepsilon>0$. We say that $s(\psi)$ is true for $(1-\varepsilon)$-\textit{most} $\psi\in\mathbb{S}(\Hilbert)$ if and only if
\begin{align}
    u\left(\left\{\psi\in\mathbb{S}(\Hilbert): s(\psi) \right\}\right) \geq 1-\varepsilon,
\end{align}
where $u$ is the normalized uniform measure over $\mathbb{S}(\Hilbert)$.\footnote{We do not consider here other ensembles of wave functions such as GAP \cite{gen-can} or the class of ensembles considered by Reimann in \cite{Reimann07}.}\\

Suppose that for each $t\in[0,\infty)$, the statement $s(t)$ is either true or false, and let $T,\delta>0$. We say that $s(t)$ is true for $(1-\delta)$-\textit{most} $t\in[0,T]$ if and only if 
\begin{align}
   \frac{1}{T}\lambda\left(\left\{t\in[0,T]: s(t)\right\} \right) \geq 1-\delta,
\end{align}
where $\lambda$ denotes the Lebesgue measure on $\mathbb{R}$. Moreover, we say that $s(t)$ is true for $(1-\delta)$-\textit{most} $t\in[0,\infty)$ if and only if
\begin{align}
    \liminf_{T\to\infty} \frac{1}{T}\lambda\left(\left\{t\in[0,T]: s(t)\right\} \right) \geq 1-\delta.
\end{align}
For any function $f: [0,\infty) \to \mathbb{C}$, define its \textit{time average} as
\begin{align}
    \overline{f(t)} := \lim_{T\to\infty} \frac{1}{T} \int_0^T f(t)\; dt
\end{align}
whenever the limit exists.
\end{defn}

In the following we consider Hamiltonians with spectral decomposition
\begin{align}
    H=\sum_{e\in\mathcal{E}} e\,\Pi_e,
\end{align}
where $\mathcal{E}$ is the set of distinct eigenvalues of $H$ and $\Pi_e$ the projection onto the eigenspace of $H$ with eigenvalue $e$.

\begin{defn}\textbf{Relevant properties of the Hamiltonian.}\\
Let $\kappa>0$. We define the \textit{maximum degeneracy} of an eigenvalue as
\begin{align}
    D_E := \max_{e\in\mathcal{E}} \tr(\Pi_e)
\end{align}
and the \textit{maximal number of gaps in an interval of length $\kappa$} as
\begin{align}
    G(\kappa) := \max_{E\in\mathbb{R}} \# \left\{(e,e')\in\mathcal{E}\times \mathcal{E}: e\neq e'\; \mbox{and}\; e-e' \in [E,E+\kappa) \right\}.
\end{align}
Moreover, we define the \textit{maximal gap degeneracy} as 
\be
D_G := \lim_{\kappa \to 0^+} G(\kappa).
\ee
\end{defn}

\begin{defn}\textbf{Asymptotic superposition weights.}\label{defn: sup weights}\\
Let $B\in\mathcal{L}(\Hilbert)$. Given any $\psi\in\mathbb{S}(\Hilbert)$ and any macro state $\mu$, define
\begin{align}
    M_{\psi B} &:= \sum_e \tr\left(|\psi\rangle\langle\psi|\Pi_e B \Pi_e \right),\\
    M_{\mu B} &:= \frac{1}{d_\mu} \sum_e \tr(P_\mu \Pi_e B \Pi_e).\label{eq: MmuB}
\end{align} In the special case that $B=P_\nu$ for some macro state $\nu$ we define 
\be
M_{\psi\nu} := M_{\psi P_\nu}
\ee
and
\be
M_{\mu\nu} := M_{\mu P_\nu}\,.
\ee
\end{defn}

In \cite{TTV22-physik} we proved the following theorem concerning the absolute errors:

\begin{thm}\label{thm: abs errors A} Let $B\in\mathcal{L}(\Hilbert)$, let $\varepsilon,\delta,\kappa,T >0$ and let $\mu$ be an arbitrary macro state. Then $(1-\varepsilon)$-most $\psi_0\in\mathbb{S}(\mathcal{H}_\mu)$ are such that for $(1-\delta)$-most $t\in [0,T]$
\begin{align}
    \bigl| \langle\psi_t|B|\psi_t\rangle - M_{\mu B}\bigr| \leq 4 \left(\frac{D_E G(\kappa) \|B\|}{\delta\varepsilon d_\mu} \left(1+\frac{8\log_2 d_E}{\kappa T}\right) \min\left\{\|B\|, \frac{\tr(|B|)}{d_\mu}\right\}\right)^{1/2},
\end{align}
where $D_E$ is the maximum degeneracy of an eigenvalue and $D_G$ is the maximum degeneracy of an eigenvalue gap of $H$.
\end{thm}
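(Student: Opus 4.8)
The plan is to decompose $\langle\psi_t|B|\psi_t\rangle - M_{\mu B}$ into a ``dynamical'' part that vanishes after time-averaging plus a ``statistical'' part that is controlled by averaging over $\psi_0\in\mathbb{S}(\Hilbert_\mu)$, and then to bound each via Markov's inequality. First I would expand in the eigenprojections of $H$: writing $\psi_t = \sum_e e^{-iet}\Pi_e\psi_0$, we get
\begin{align}
\langle\psi_t|B|\psi_t\rangle = \sum_{e,e'} e^{i(e-e')t}\, \langle\psi_0|\Pi_e B\Pi_{e'}|\psi_0\rangle\,.
\end{align}
The diagonal terms $e=e'$ sum to $M_{\psi_0 B} = \sum_e \tr(|\psi_0\rangle\langle\psi_0|\Pi_e B\Pi_e)$, which is time-independent, and the off-diagonal terms oscillate. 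So $\langle\psi_t|B|\psi_t\rangle - M_{\psi_0 B} = \sum_{e\neq e'} e^{i(e-e')t}\langle\psi_0|\Pi_e B\Pi_{e'}|\psi_0\rangle$, and the triangle inequality against $M_{\mu B}$ gives two error sources: $|M_{\psi_0 B}-M_{\mu B}|$, handled by typicality of $\psi_0$, and the oscillatory sum, handled by a time-average bound.

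For the statistical part, I would use that $\mathbb{E}_\mu|\psi_0\rangle\langle\psi_0| = P_\mu/d_\mu$, so $\mathbb{E}_\mu M_{\psi_0 B} = M_{\mu B}$; then I need the variance of $M_{\psi_0 B}$ under the uniform measure on $\mathbb{S}(\Hilbert_\mu)$, which requires the second-moment formula for $\mathbb{E}_\mu\bigl(|\psi_0\rangle\langle\psi_0|\otimes|\psi_0\rangle\langle\psi_0|\bigr)$ (the standard $\tfrac{1}{d_\mu(d_\mu+1)}(\mathbbm{1}+\text{SWAP})$ on $\Hilbert_\mu^{\otimes 2}$). This yields a variance of order $\tfrac{1}{d_\mu}\sum_e \tr(P_\mu\Pi_e B\Pi_e B\Pi_e\cdots)$-type quantities, which I would bound by $\tfrac{1}{d_\mu}\|B\|\min\{\|B\|,\tr(|B|)/d_\mu\}$ using $\|B\Pi_e B\|\le\|B\|^2$ on one hand and $\tr(P_\mu\Pi_e|B|) \le \tr(|B|)$-type estimates on the other. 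For the oscillatory part, the time average over $[0,T]$ of $\bigl|\sum_{e\neq e'}e^{i(e-e')t}c_{ee'}\bigr|^2$ produces $\sum_{e\neq e'}\sum_{f\neq f'}c_{ee'}\bar c_{ff'}\cdot\tfrac1T\int_0^T e^{i(e-e'-f+f')t}dt$; the Dirichlet-kernel bound $|\tfrac1T\int_0^T e^{i\omega t}dt|\le\min\{1,2/(T|\omega|)\}$ lets me group frequency differences into windows of width $\kappa$, where the number of contributing pairs is controlled by $G(\kappa)$ and the degeneracy by $D_E$, and the tail $|\omega|\gtrsim\kappa$ contributes the $\log_2 d_E/(\kappa T)$ factor through a dyadic sum over frequency scales. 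Combining, the time-averaged squared oscillatory error is $\lesssim \tfrac{D_E G(\kappa)}{d_\mu}(1+\tfrac{8\log_2 d_E}{\kappa T})\|B\|\min\{\|B\|,\tr(|B|)/d_\mu\}$.

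Finally I would assemble: two applications of Markov's inequality (one in $\psi_0$ with budget $\varepsilon$, one in $t$ with budget $\delta$), each costing a factor $1/\varepsilon$ and $1/\delta$ respectively inside the square root, together with the constant $4$ absorbing the sum of the two contributions and the numerical constants from the variance/Dirichlet estimates, gives exactly the stated bound. The main obstacle I anticipate is the careful bookkeeping in the oscillatory time-average: correctly counting pairs $(e,e')$ with $e-e'$ in a given window via $G(\kappa)$ while simultaneously tracking the multiplicities via $D_E$, and organizing the dyadic decomposition of frequency scales so that the tail sum telescopes into the clean $\log_2 d_E$ factor rather than something worse. Getting the $\min\{\|B\|,\tr(|B|)/d_\mu\}$ factor (rather than just $\|B\|^2$) also requires a slightly delicate use of the two ways of bounding $\tr(P_\mu\Pi_e B\Pi_e B\Pi_e)$-type terms, but this is routine once the structure is set up.
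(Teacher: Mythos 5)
Your outline is correct and follows essentially the same route as the source: the paper only quotes Theorem~\ref{thm: abs errors A} from \cite{TTV22-physik}, and the proof there is exactly this decomposition into the diagonal part $M_{\psi_0 B}$ (controlled by the second-moment formula for the uniform measure on $\mathbb{S}(\Hilbert_\mu)$, giving the $\min\{\|B\|,\tr(|B|)/d_\mu\}/d_\mu$ variance) plus the oscillatory off-diagonal sum (controlled by the Short--Farrelly finite-time averaging argument, which is where $D_E$, $G(\kappa)$ and the $\log_2 d_E/(\kappa T)$ term originate), assembled with Markov's inequality in $\psi_0$ and in $t$. No gaps beyond the bookkeeping you already flag.
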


By setting $B=P_\nu$ for some macro state $\nu$, choosing $\kappa$ small enough such that $G(\kappa)=D_G$ and then taking the limit $T\to\infty$ we immediately obtain:

\begin{thm}[Generalized normal typicality: absolute errors]\label{thm: GNT} Let $\varepsilon,\delta>0$ and let $\mu,\nu$ be two arbitrary macro states. Then $(1-\varepsilon)$-most $\psi_0\in\mathbb{S}(\Hilbert_\mu)$ are such that for $(1-\delta)$-most $t\in [0,\infty)$ 
\begin{align}
    \bigl|\|P_\nu\psi_t\|^2 - M_{\mu\nu}\bigr| \leq 4\left(\frac{D_E D_G}{\delta\varepsilon d_\mu}\min\left\{1,\frac{d_\nu}{d_\mu}\right\}\right)^{1/2}.\label{ineq: GNT}
\end{align}
\end{thm}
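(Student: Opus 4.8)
The plan is to obtain Theorem~\ref{thm: GNT} from Theorem~\ref{thm: abs errors A} by specialization. First I would set $B=P_\nu$ for the given macro state $\nu$. If $\Hilbert_\nu=\{0\}$ both sides of \eqref{ineq: GNT} vanish and there is nothing to prove, so assume $d_\nu\ge1$, whence $\|P_\nu\|=1$ and $\tr(|P_\nu|)=\tr(P_\nu)=d_\nu$. Since $P_\nu$ is an orthogonal projection, $\langle\psi_t|P_\nu|\psi_t\rangle=\|P_\nu\psi_t\|^2$, and $M_{\mu P_\nu}=M_{\mu\nu}$ by Definition~\ref{defn: sup weights}; the product $\|B\|\min\{\|B\|,\tr(|B|)/d_\mu\}$ occurring in Theorem~\ref{thm: abs errors A} thus reduces to $\min\{1,d_\nu/d_\mu\}$, exactly the factor in \eqref{ineq: GNT}.

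Next I would choose $\kappa$ small. Since $\kappa\mapsto G(\kappa)$ is nonnegative, integer-valued and nondecreasing with $G(\kappa)\to D_G$ as $\kappa\to0^+$, there is $\kappa_0>0$ with $G(\kappa)=D_G$ for all $\kappa\in(0,\kappa_0]$; fix such a $\kappa$. Theorem~\ref{thm: abs errors A} then gives, for every $T>0$, that $(1-\varepsilon)$-most $\psi_0\in\mathbb{S}(\Hilbert_\mu)$ satisfy, for $(1-\delta)$-most $t\in[0,T]$,
\[
\bigl|\|P_\nu\psi_t\|^2-M_{\mu\nu}\bigr|\;\le\;4\left(\frac{D_E D_G}{\delta\varepsilon d_\mu}\Bigl(1+\frac{8\log_2 d_E}{\kappa T}\Bigr)\min\Bigl\{1,\tfrac{d_\nu}{d_\mu}\Bigr\}\right)^{1/2}=:C_T ,
\]
and $C_T$ decreases to the right-hand side of \eqref{ineq: GNT} as $T\to\infty$.

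The final step is to pass to $T\to\infty$, and this is the only point I would flag as requiring care (the minor ``obstacle''): the set of admissible $\psi_0$ in Theorem~\ref{thm: abs errors A} could a priori depend on $T$, so one cannot naively intersect it over $T=1,2,\dots$ while keeping measure $\ge1-\varepsilon$. The clean fix is to re-run the argument of \cite{TTV22-physik} on $[0,\infty)$ directly: that proof bounds $\mathbb{E}_{\psi_0}\bigl[\tfrac1T\int_0^T\bigl|\|P_\nu\psi_t\|^2-M_{\mu\nu}\bigr|^2\,dt\bigr]$ by a constant $a_T$ converging as $T\to\infty$, and since $t\mapsto\|P_\nu\psi_t\|^2$ is a finite sum of oscillatory terms its time average exists, so Fatou's lemma yields a bound on $\mathbb{E}_{\psi_0}\bigl[\,\overline{\bigl|\|P_\nu\psi_t\|^2-M_{\mu\nu}\bigr|^2}\,\bigr]$; applying Markov's inequality first in $\psi_0$ and then in $t$ over $[0,\infty)$ (via the $\liminf$ definition) produces a single $T$-independent set of good $\psi_0$ for which \eqref{ineq: GNT} holds for $(1-\delta)$-most $t\in[0,\infty)$. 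Equivalently, one may observe that the good $\psi_0$-set in Theorem~\ref{thm: abs errors A} is in fact already obtained from a $T$-free time-average estimate and is therefore $T$-independent, after which the $\liminf$ definition of ``$(1-\delta)$-most $t\in[0,\infty)$'' finishes the proof.
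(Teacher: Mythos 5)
Your proposal is correct and follows exactly the paper's route: Theorem~\ref{thm: GNT} is obtained from Theorem~\ref{thm: abs errors A} by setting $B=P_\nu$, choosing $\kappa$ small enough that $G(\kappa)=D_G$, and letting $T\to\infty$. Your extra care about the $T$-(in)dependence of the good $\psi_0$-set is well placed and resolved correctly — the underlying estimate in \cite{TTV22-physik} is a time-average bound whose good set does not depend on $T$ — but this is a refinement of, not a departure from, the paper's argument.
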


Theorem \ref{thm: GNT} tells us, roughly speaking, that as soon as
\begin{align}
    d_\mu \gg D_E D_G,
\end{align}
i.e., as soon as the dimension of $\Hilbert_\mu$ is huge and no eigenvalues or gaps are macroscopically degenerate, the superposition weight $\|P_\nu\psi_t\|^2$ will be close to the fixed value $M_{\mu\nu}$ for most times $t$ and most initial states $\psi_0\in\mathbb{S}(\Hilbert_\mu)$.

Of course, the fixed value $M_{\mu\nu}$, or more generally $M_{\mu B}$, can be computed by taking the average of over time and over the sphere in $\Hilbert_\mu$. This is the content of the following well known proposition, see also \cite{TTV22-physik}.

\begin{prop}
Let $B\in \mathcal{L}(\Hilbert)$. Then
\begin{align}
    M_{\psi_0 B} &= \overline{\langle\psi_t|B|\psi_t\rangle}\; \mbox{and} \label{MpsiBtimeavg}\\
    M_{\mu B} &= \int_{\mathbb{S}(\Hilbert_\mu)} M_{\psi_0 B}\; u_\mu(d\psi_0) = \EEE_\mu M_{\psi_0 B},
\end{align}
where $u_\mu$ is the normalized uniform measure over $\mathbb{S}(\Hilbert_\mu)$.
\end{prop}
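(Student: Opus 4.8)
The plan is to prove both identities by a direct computation in the spectral representation of $H$. For the first identity \eqref{MpsiBtimeavg}, I would write $\psi_t = e^{-iHt}\psi_0$ and insert the spectral decomposition $H=\sum_{e\in\mathcal{E}}e\,\Pi_e$ (with $\sum_e\Pi_e=\mathbbm{1}$) on both sides of $B$, obtaining
\be
\langle\psi_t|B|\psi_t\rangle = \sum_{e,e'\in\mathcal{E}} e^{i(e-e')t}\,\langle\psi_0|\Pi_e B\Pi_{e'}|\psi_0\rangle\,.
\ee
Since $\mathcal{E}$ is finite ($|\mathcal{E}|\le D$), the time average may be taken term by term. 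For $e\ne e'$ one has $e-e'\ne 0$, hence $\overline{e^{i(e-e')t}}=\lim_{T\to\infty}\frac1T\int_0^T e^{i(e-e')t}\,dt=0$ (the integral is bounded by $2/(T|e-e'|)$), while for $e=e'$ the average equals $1$. Thus $\overline{\langle\psi_t|B|\psi_t\rangle}=\sum_e\langle\psi_0|\Pi_e B\Pi_e|\psi_0\rangle=\sum_e\tr(|\psi_0\rangle\langle\psi_0|\Pi_e B\Pi_e)=M_{\psi_0 B}$ by Definition~\ref{defn: sup weights}.

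For the second identity, I would use that $M_{\psi_0 B}=\sum_e\tr(|\psi_0\rangle\langle\psi_0|\Pi_e B\Pi_e)$ is linear in the operator $|\psi_0\rangle\langle\psi_0|$, so that it suffices to compute $A:=\EEE_\mu|\psi_0\rangle\langle\psi_0|$, the $u_\mu$-average of this projector over $\psi_0\in\SSS(\Hilbert_\mu)$. The measure $u_\mu$ is invariant under every unitary of $\Hilbert$ that maps $\Hilbert_\mu$ to itself and acts as the identity on $\Hilbert_\mu^\perp$; hence $A$ commutes with all such unitaries, and since $A$ is supported on $\Hilbert_\mu$, Schur's lemma forces $A=c\,P_\mu$ for a scalar $c$. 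Taking the trace gives $1=\tr A=c\,d_\mu$, so $A=P_\mu/d_\mu$. Substituting and using linearity of the trace yields $\EEE_\mu M_{\psi_0 B}=\sum_e\tr\!\left(\tfrac{1}{d_\mu}P_\mu\Pi_e B\Pi_e\right)=M_{\mu B}$ by \eqref{eq: MmuB}; the displayed integral is just another notation for $\EEE_\mu M_{\psi_0 B}$.

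Since everything takes place in finite dimensions, there is no genuine analytic obstacle here: the two points deserving a word of justification are the termwise time average (legitimate precisely because the double sum is finite) and the identification $\EEE_\mu|\psi_0\rangle\langle\psi_0|=P_\mu/d_\mu$ (provable either by the invariance/Schur argument above or, equivalently, by a direct computation with a standard complex Gaussian vector projected onto $\Hilbert_\mu$ and normalized). Accordingly, I expect the only real "difficulty" to be presentational—deciding how much of this standard averaging lemma to spell out versus simply cite, given that the statement is labeled well known.
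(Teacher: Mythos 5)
Your proof is correct and is exactly the standard argument that the paper defers to \cite{TTV22-physik} for: the time average via the spectral decomposition (using that distinct $e\neq e'$ give a vanishing average of $e^{i(e-e')t}$) and the identity $\EEE_\mu|\psi_0\rangle\langle\psi_0|=P_\mu/d_\mu$ for the sphere average. Both steps are justified properly, so nothing further is needed.
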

This motivates us to call $M_{\psi\nu}$ the \textit{time average superposition weight} in $\Hilbert_\nu$ and $M_{\mu\nu}$ the \textit{full average superposition weight} in $\Hilbert_\nu$.

For a system with $N$ particles or more generally $N$ degrees of freedom the dimension $D$ is of order $\exp(N)$. For any macro state $\mu$ we define $s_\mu$, the entropy per particle in the macro state $\mu$, by
\begin{align}
    d_\mu =: \exp(s_\mu N/k_B),
    \label{eq: dimensions smu}
\end{align}
where $k_B$ is the Boltzmann constant.

With \eqref{eq: dimensions smu} for the dimensions of the macro spaces, we find the following corollary of Theorem \ref{thm: GNT} whose proof can also be found in \cite{TTV22-physik}: 

\begin{cor}\label{cor: GNT}
Assume \eqref{eq: dimensions smu} and let $\varepsilon,\delta>0$. Then, for all macro states $\mu, \nu_-, \nu_+$ with
\begin{align}
    s_{\nu_-}\leq s_{\mu} \leq s_{\nu_+}
\end{align}
it holds for $(1-\varepsilon)$-most $\psi_0\in\mathbb{S}(\Hilbert_\mu)$ for $(1-\delta)$-most $t\in[0,\infty)$
\begin{align}
    \bigl|\|P_{\nu_+}\psi_t\|^2 - M_{\mu\nu_+} \bigr| &\leq \frac{4\sqrt{D_E D_G}}{\sqrt{\varepsilon\delta}} \exp\left(-\frac{s_\mu N}{2k_B}\right),\\
    \bigl|\|P_{\nu_-}\psi_t\|^2 - M_{\mu\nu_-} \bigr| &\leq \frac{4\sqrt{D_E D_G}}{\sqrt{\varepsilon\delta}} \exp\left(-\frac{\left(s_\mu-\frac{s_{\nu_-}}{2}\right) N}{k_B}\right).
\end{align}
\end{cor}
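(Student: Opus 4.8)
The corollary is a direct specialization of Theorem~\ref{thm: GNT} combined with the parametrization \eqref{eq: dimensions smu}, so the plan is simply to evaluate the right-hand side of \eqref{ineq: GNT} in the two relevant cases. First I would apply Theorem~\ref{thm: GNT} with $\nu=\nu_+$ and, separately, with $\nu=\nu_-$; each application, with the given $\varepsilon$ and $\delta$, already yields the conclusion for $(1-\varepsilon)$-most $\psi_0\in\SSS(\Hilbert_\mu)$ and, for each such $\psi_0$, for $(1-\delta)$-most $t\in[0,\infty)$. (If one insists that both displayed inequalities hold for the very same $\psi_0$ and $t$, one intersects the two exceptional sets, replacing $\varepsilon,\delta$ by $2\varepsilon,2\delta$; this only changes the constant, not the exponential rate.)

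For $\nu=\nu_+$: since $s_{\nu_+}\geq s_\mu$ and the exponential is increasing, \eqref{eq: dimensions smu} gives $d_{\nu_+}\geq d_\mu$, hence $\min\{1,d_{\nu_+}/d_\mu\}=1$, and the bound \eqref{ineq: GNT} reduces to $4\bigl(D_ED_G/(\delta\varepsilon\, d_\mu)\bigr)^{1/2}$. Substituting $d_\mu=\exp(s_\mu N/k_B)$ and pulling the exponential out of the square root gives $\tfrac{4\sqrt{D_ED_G}}{\sqrt{\varepsilon\delta}}\exp(-s_\mu N/(2k_B))$, which is the first asserted inequality.

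For $\nu=\nu_-$: now $s_{\nu_-}\leq s_\mu$ forces $d_{\nu_-}\leq d_\mu$, so $\min\{1,d_{\nu_-}/d_\mu\}=d_{\nu_-}/d_\mu$ and \eqref{ineq: GNT} becomes $4\bigl(D_ED_G\,d_{\nu_-}/(\delta\varepsilon\, d_\mu^{2})\bigr)^{1/2}=\tfrac{4\sqrt{D_ED_G}}{\sqrt{\varepsilon\delta}}\,\sqrt{d_{\nu_-}}/d_\mu$. Inserting $d_{\nu_-}=\exp(s_{\nu_-}N/k_B)$ and $d_\mu=\exp(s_\mu N/k_B)$ and collecting exponents yields $\sqrt{d_{\nu_-}}/d_\mu=\exp\bigl(-(s_\mu-s_{\nu_-}/2)N/k_B\bigr)$, which is the second asserted inequality. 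There is no genuine obstacle in this argument; the only point requiring care is tracking which branch of the minimum in \eqref{ineq: GNT} is active, which is exactly what the hypothesis $s_{\nu_-}\leq s_\mu\leq s_{\nu_+}$ controls.
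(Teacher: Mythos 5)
Your proposal is correct and is essentially the paper's own derivation: the corollary is obtained by specializing Theorem~\ref{thm: GNT} to $\nu=\nu_\pm$, using the hypothesis $s_{\nu_-}\leq s_\mu\leq s_{\nu_+}$ to resolve the minimum in \eqref{ineq: GNT}, and substituting $d_\mu=\exp(s_\mu N/k_B)$, $d_{\nu_-}=\exp(s_{\nu_-}N/k_B)$ (the paper defers the details to \cite{TTV22-physik}, where the same computation appears). Your parenthetical remark about intersecting the exceptional sets if both bounds are required for the same $\psi_0$ and $t$ is a sensible point of care but does not change the substance.
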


Corollary \ref{cor: GNT} shows that fluctuations of the time-dependent superposition weights around their expected values are exponentially small in the number of particles provided that no eigenvalues and gaps are macroscopically degenerate. 

In addition to the theorem concerning generalized normal typicality, we also proved a result concerning dynamical typicality in \cite{TTV22-physik}:

\begin{thm}\label{thm: dyntyp phys}
Let $\mu$ be an arbitrary macro state and let $B$ be any operator on $\Hilbert$. Let $w_{\mu B}:\mathbb{R}\to[0,1]$ be the function defined by
\begin{align}
     w_{\mu B}(t) := \frac{1}{d_\mu}\tr\left[P_\mu\exp(iHt) B\exp(-iHt)\right]\label{eq: wmuB}
\end{align}
Then, for every $t\in\mathbb{R}$ and every $\varepsilon>0$, $(1-\varepsilon)$-most $\psi_0\in\mathbb{S}(\Hilbert_\mu)$ are such that
\begin{align}
\bigl|\langle\psi_t|B|\psi_t\rangle-w_{\mu B}(t)\bigr| \leq \min\left\{\frac{\|B\|}{\sqrt{\varepsilon d_\mu}}, \sqrt{\frac{\|B\|\tr(|B|)}{\varepsilon d_\mu^2}},\sqrt{\frac{18\pi^3\log(4/\varepsilon)}{d_\mu}}\|B\|\right\}.\label{ineq: dyntyp1}
\end{align}
Moreover, for every $\mu$ and $B$, every $T>0$, and $(1-\varepsilon)$-most $\psi_0\in\mathbb{S}(\Hilbert_\mu)$,
\begin{align}
    \frac{1}{T}\int_0^T \bigl|\langle\psi_t|B|\psi_t\rangle - w_{\mu B}(t)\bigr|^2\, dt \leq \frac{\|B\|^2}{\varepsilon d_\mu}.\label{ineq: dyntyp2}
\end{align}
\end{thm}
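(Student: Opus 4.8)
The plan is to regard $\langle\psi_t|B|\psi_t\rangle$ as a quadratic form in the random vector $\psi_0$ and to control its fluctuations around its mean. Put $B_t := \exp(iHt)\,B\exp(-iHt)$; then $\|B_t\|=\|B\|$ and $\tr|B_t|=\tr|B|$ because $\exp(\pm iHt)$ is unitary, and for $\psi_0\in\SSS(\Hilbert_\mu)$ we have $\langle\psi_t|B|\psi_t\rangle=\langle\psi_0|B_t|\psi_0\rangle=\langle\psi_0|A|\psi_0\rangle$ with $A:=P_\mu B_t P_\mu$ viewed as an operator on the $d_\mu$-dimensional space $\Hilbert_\mu$. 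Since $\EEE_\mu\,|\psi_0\rangle\langle\psi_0|=P_\mu/d_\mu$, we get $\EEE_\mu\langle\psi_t|B|\psi_t\rangle=\tr(A)/d_\mu=w_{\mu B}(t)$, so the whole theorem is a concentration statement for $\langle\psi_0|A|\psi_0\rangle$ about its mean.

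For the first two terms of the minimum in \eqref{ineq: dyntyp1} I would use Chebyshev's inequality together with the standard second-moment identity for the uniform measure on $\SSS(\CCC^{d_\mu})$, $\EEE_\mu|\langle\psi_0|A|\psi_0\rangle|^2=(|\tr A|^2+\tr(AA^\dagger))/(d_\mu(d_\mu+1))$ (which follows from $\EEE_\mu[(|\psi_0\rangle\langle\psi_0|)^{\otimes2}]$ being proportional to the projection onto the symmetric subspace). This gives $\Var_\mu(\langle\psi_0|A|\psi_0\rangle)\le\tr(AA^\dagger)/d_\mu^2$. I then bound the Hilbert--Schmidt norm $\tr(AA^\dagger)=\|P_\mu B_tP_\mu\|_{\mathrm{HS}}^2$ in two ways: by $\mathrm{rank}(A)\,\|A\|^2\le d_\mu\|B\|^2$, and by $\|A\|\,\tr|A|\le\|B\|\,\tr|B|$, using that compressing by the projection $P_\mu$ increases neither the operator norm nor the trace norm (plus the unitary invariance noted above). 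Chebyshev, $u_\mu(|\langle\psi_0|A|\psi_0\rangle-w_{\mu B}(t)|>\lambda)\le\Var_\mu/\lambda^2$, set equal to $\varepsilon$, then produces the first and second candidate bounds $b_1,b_2$.

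The third term cannot come from a second moment; it needs the sub-Gaussian tail of the uniform measure on the sphere. I would observe that $\psi\mapsto\langle\psi|B_t|\psi\rangle$ is Lipschitz on $\SSS(\Hilbert_\mu)$ with constant at most $2\|B\|$, since $|\tr(B_t(|\psi\rangle\langle\psi|-|\phi\rangle\langle\phi|))|\le\|B_t\|\,\bigl\|\,|\psi\rangle\langle\psi|-|\phi\rangle\langle\phi|\,\bigr\|_1\le2\|B\|\,\|\psi-\phi\|$; the same holds for its real and imaginary parts. Applying a Lévy-type concentration inequality for Lipschitz functions on the complex unit sphere---of the form $u_\mu(|f-\EEE_\mu f|\ge\lambda)\le c_1\exp(-c_2\, d_\mu\lambda^2/\|B\|^2)$ with the standard explicit constants---separately to $\RE\langle\psi|B_t|\psi\rangle$ and $\IM\langle\psi|B_t|\psi\rangle$, combining them by a union bound, and tracking the numerical factors (the value $18\pi^3$ and the $\log(4/\varepsilon)$ arising from the usual constant in this inequality together with the factor from splitting off real and imaginary parts) gives the third bound $b_3$; if the version used is centered at the median, one additionally absorbs the $O(\|B\|/\sqrt{d_\mu})$ gap between median and mean. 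Finally, since $\min\{b_1,b_2,b_3\}$ equals one particular $b_j$, the set where the deviation exceeds the minimum coincides with the set where it exceeds that $b_j$, which has measure at most $\varepsilon$; hence the minimum bound holds for $(1-\varepsilon)$-most $\psi_0$.

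For the time-averaged bound \eqref{ineq: dyntyp2} I would integrate the variance estimate: by Tonelli's theorem, $\EEE_\mu[\tfrac1T\int_0^T|\langle\psi_t|B|\psi_t\rangle-w_{\mu B}(t)|^2\,dt]=\tfrac1T\int_0^T\Var_\mu(\langle\psi_t|B|\psi_t\rangle)\,dt\le\|B\|^2/d_\mu$, and Markov's inequality then shows the time-integral exceeds $\|B\|^2/(\varepsilon d_\mu)$ only on a set of $\psi_0$ of measure at most $\varepsilon$. The main obstacle I expect is the third estimate: pinning down the precise constant $18\pi^3$ requires choosing the right form of Lévy's lemma on the complex sphere, a tight Lipschitz constant, and careful bookkeeping when combining the real- and imaginary-part tails (and, if needed, passing from median to mean). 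The other three estimates are routine once the quadratic-form reformulation and the two Hilbert--Schmidt bounds are in place.
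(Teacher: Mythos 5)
Your proposal is correct and follows essentially the route the paper indicates for this theorem (which it quotes from \cite{TTV22-physik}): the first two bounds via the second-moment formula $\EEE_\mu|\langle\psi_0|A|\psi_0\rangle|^2=(|\tr A|^2+\tr(AA^\dagger))/(d_\mu(d_\mu+1))$ plus Chebyshev, the third via L\'evy's lemma applied to the Lipschitz function $\psi\mapsto\langle\psi|B_t|\psi\rangle$, and \eqref{ineq: dyntyp2} via Tonelli and Markov. The only point needing care is exactly the one you flag, namely the numerical bookkeeping in the L\'evy step (the $\sqrt{2}$ from recombining real and imaginary parts and the precise form of the concentration constant leading to $18\pi^3$ and $\log(4/\varepsilon)$); everything else is airtight.
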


We note first that $w_{\mu B}(t)$ is exactly the ensemble average of $\langle \psi_t|B|\psi_t\rangle$, i.e., the average over $\psi_0\in\SSS(\Hilbert_\mu)$:
\be
w_{\mu B}(t) = \EEE_\mu \scp{\psi_t}{B|\psi_t}\,.
\ee
In particular, by Fubini's theorem,
\be
\overline{w_{\mu B}(t)} = M_{\mu B}\,.
\ee
Eq.~\eqref{ineq: dyntyp1} offers three estimates that can be useful depending on the sizes of $B$, $\varepsilon$, and $d_\mu$. Specifically, if $\|B\|$ is of order 1 and $d_\mu\gg 1/\varepsilon$, then the first estimate implies that $\langle\psi_t|B|\psi_t\rangle$ is close to its average $w_{\mu B}(t)$. We note further that the first two estimates in \eqref{ineq: dyntyp1} can be obtained by similar arguments as in the proof of Theorem \ref{thm: abs errors A} whereas the third estimate is a consequence of Lévy's lemma (see e.g. \cite[Sec. II.C]{SWGW22}). 

In the following we can assume without loss of generality $D_E=D_G=1$ since we consider random matrices whose joint distribution of their entries is absolutely continuous with respect to the Lebesgue measure. The set of matrices with degenerate eigenvalues or eigenvalue gaps -- we also say that the matrix has \textit{resonances} -- has Lebesgue measure zero and therefore, with probability 1, $D_E=D_G=1$. For the convenience of the reader, we give the proof of this statement in Appendix \ref{app: no res}.

\section{Main Results\label{sec: main results}}

In this section we present and discuss our main results concerning lower bounds for the $M_{\mu\nu}$ and therefore generalized normal typicality (Theorem \ref{thm: main new}) as well as a corollary thereof bounding the relative errors (Corollary \ref{cor: real dim rel}) and a result bounding the comparative error for dynamical typicality (Theorem \ref{thm: dyntyp rel 2}). In all the results presented in this section it is assumed that the Hamiltonian is of a special form and we only consider projections $P_\nu$ onto a macro space $\Hilbert_\nu$. More general results and the proofs can be found in Section~\ref{sec: more gen and proofs}.

\subsection{Generalized Normal Typicality}

We make the following assumption:

\begin{asm}\label{asm: Hamiltonian}
The Hamiltonian $H$ is of the form $H=H_0+V$, where $H_0$ is a (deterministic) Hermitian $D\times D$ matrix and $V$ is a Hermitian random Gaussian perturbation, more precisely, $V=\frac{1}{\sqrt{2}}(A+A^{*})$, where $A=(a_{ij})$ is a random $D\times D$ matrix with independent Gaussian entries with mean zero, i.e., all random variables $\RE a_{ij},\IM a_{ij}, i,j\in[D]$, are independent and $\RE a_{ij},\IM a_{ij} \sim \mathcal{N}(0,\sigma_{ij}^2/2)$ for some $\sigma_{ij}>0$ that fulfill $\sigma_{ij}=\sigma_{ji}$.
\end{asm}

Note that because of Assumption \ref{asm: Hamiltonian} the matrix $V$ is Hermitian and Gaussian, more precisely, $\RE v_{ij},\IM v_{ij} \sim \mathcal{N}(0,\sigma_{ij}^2)$ for $i\neq j$ and $v_{ii}=\RE v_{ii} \sim \mathcal{N}(0,\sigma_{ii}^2)$, i.e., $V$ is a Hermitian Gaussian Wigner-type matrix.

For a Hamiltonian as in Assumption \ref{asm: Hamiltonian} we define
\begin{align}
    C_{H_0} := D^{-1/2} \max\{\|\RE H_0\|, \|\IM H_0\|\}.\label{const: CH0}
\end{align}
Note that for a typical many-body Hamiltonian we expect that $\|\RE H_0\|, \|\IM H_0\| \sim \log D$ and therefore $C_{H_0}$ should be very small for large $D$.

\begin{thm}[Lower bounds for $M_{\mu\nu}$]\label{thm: main new}
Let $\varepsilon' \in (0,\frac{1}{2})$ and let $\mu$ and $\nu$ be arbitrary macro states such that $d_\mu,d_\nu>\max\{166,4|\log_2(\varepsilon'/\sqrt{2})|\}$. Let $H$ satisfy Assumption \ref{asm: Hamiltonian}.
Let $\sigma_- := \min_{i,j} \sigma_{ij}$ and $\sigma_+ := \max_{i,j}\sigma_{ij}$. Then with probability at least $1-\varepsilon'$,
\begin{align}
    M_{\mu\nu} \geq \left(\sqrt{\varepsilon' C_{\sigma}}\,\frac{\max\left\{d_\mu,d_\nu\right\}}{D}\right)^{16} \min\left\{1,\frac{d_\nu}{d_\mu}\right\}, \label{mainnewMmunu}
\end{align}
where 
\begin{align}
C_{\sigma}:=\frac{c_- \sigma_-}{c_+\sigma_++C_{H_0}}\label{const Csigma}
\end{align}
with $C_{H_0}$ defined in \eqref{const: CH0} and absolute constants $c_-,c_+>0$.
\end{thm}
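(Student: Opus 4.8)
The plan is to reduce the lower bound on $M_{\mu\nu}$ to a no-gaps delocalization statement for the eigenvectors of $H$, and then feed in our extension of the Rudelson--Vershynin result. First I would use that, by Appendix~\ref{app: no res}, with probability $1$ the matrix $H=H_0+V$ has no resonances, so $D_E=D_G=1$ and $H$ has an orthonormal eigenbasis $\phi_1,\dots,\phi_D$; then \eqref{eq: MmuB} becomes the manifestly $\mu\leftrightarrow\nu$ symmetric identity
\be
d_\mu M_{\mu\nu}=\sum_{n=1}^D \|P_\mu\phi_n\|^2\,\|P_\nu\phi_n\|^2=d_\nu M_{\nu\mu}\,.
\ee
By \eqref{Pnuphinj} each factor $\|P_\nu\phi_n\|^2=\sum_{j\in I_\nu}|\langle\phi_n|j\rangle|^2$ is the mass that the unit vector $\phi_n$ (in the basis $(|j\rangle)_j$) places on the coordinate block $I_\nu$ of size $d_\nu$. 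The only facts I need about these quantities are the normalization $\sum_n\|P_\mu\phi_n\|^2=\tr P_\mu=d_\mu$ (and likewise for $\nu$) and a \emph{uniform positive lower bound} on such block masses.

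For the latter I would check, via Theorems~\ref{thm: GM bdd} and~\ref{thm: GM nonzero mean}, that a Hamiltonian satisfying Assumption~\ref{asm: Hamiltonian} meets the hypotheses of the (corrected and extended) no-gaps theorem~\ref{thm: no gaps}: the lower bound $\sigma_-$ on the standard deviations makes the random part spread out enough for delocalization, while $\sigma_+$ together with $C_{H_0}$ controls $\|H\|$, these ingredients combining into the single scale-invariant constant $C_\sigma=c_-\sigma_-/(c_+\sigma_++C_{H_0})$. I would then extract the conclusion: with probability at least $1-\varepsilon'$, every unit eigenvector $\phi$ of $H$ satisfies $\|P_\nu\phi\|^2\ge\beta(d_\nu)$ for each macro state $\nu$ (indeed for every coordinate block of that size), where the no-gaps bound has the form $\beta(d)=(\sqrt{\varepsilon' C_\sigma}\,d/D)^{16}$. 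This is where the hypotheses $d_\mu,d_\nu>\max\{166,4|\log_2(\varepsilon'/\sqrt2)|\}$ are consumed: the $166$ makes $I_\mu$ and $I_\nu$ large enough for the explicit constants in Theorem~\ref{thm: no gaps} to bite, and the logarithmic term forces that theorem's failure probability (exponentially small in the block size) below $\varepsilon'$. Applied with $J=I_\mu$ and $J=I_\nu$, on the good event $\|P_\mu\phi_n\|^2\ge\beta(d_\mu)$ and $\|P_\nu\phi_n\|^2\ge\beta(d_\nu)$ hold for all $n$ simultaneously.

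On this event the conclusion follows by bookkeeping. Bounding one factor below by its uniform bound and summing the other,
\be
d_\mu M_{\mu\nu}=\sum_n\|P_\mu\phi_n\|^2\|P_\nu\phi_n\|^2\ \ge\ \beta(d_\nu)\sum_n\|P_\mu\phi_n\|^2=\beta(d_\nu)\,d_\mu\,,
\ee
so $M_{\mu\nu}\ge\beta(d_\nu)$; interchanging $\mu,\nu$ and using $d_\mu M_{\mu\nu}=d_\nu M_{\nu\mu}$ gives $M_{\mu\nu}=\tfrac{d_\nu}{d_\mu}M_{\nu\mu}\ge\tfrac{d_\nu}{d_\mu}\beta(d_\mu)$. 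Hence $M_{\mu\nu}\ge\max\{\beta(d_\nu),\tfrac{d_\nu}{d_\mu}\beta(d_\mu)\}$, and since with $\beta(d)=(\sqrt{\varepsilon' C_\sigma}\,d/D)^{16}$ the ratio of the two candidates equals $(d_\nu/d_\mu)^{15}$, the larger one is always $(\sqrt{\varepsilon' C_\sigma}\max\{d_\mu,d_\nu\}/D)^{16}\min\{1,d_\nu/d_\mu\}$, which is exactly \eqref{mainnewMmunu}.

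I expect the genuine difficulty to lie entirely outside this bookkeeping, in Section~\ref{sec: no gaps}: establishing the extension of the Rudelson--Vershynin no-gaps theorem to the Wigner-type ensemble $H_0+V$ under only a \emph{lower} (rather than exact) bound on the variances, and with constants explicit enough to produce the precise form of $C_\sigma$, the threshold $166$, the probability $\varepsilon'$, and the exponent $16$. Within the proof of Theorem~\ref{thm: main new} itself the only points requiring care are (i) that the no-gaps lower bound really is a fixed power of $|J|/D$, since the $\max$/$\min$ structure of \eqref{mainnewMmunu} hinges on this, and (ii) honest tracking of constants so that the stated smallness conditions on $d_\mu$ and $d_\nu$ genuinely force the relevant failure probability below $\varepsilon'$.
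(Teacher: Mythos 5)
Your proposal is correct and follows essentially the same route as the paper: the paper's proof (via Theorem~\ref{thm: main gen}) likewise applies the extended no-gaps theorem once with $\kappa=d_\mu/2D$ and once with $\kappa=d_\nu/2D$ to get the two bounds $M_{\mu\nu}\ge\tfrac{d_\nu}{d_\mu}\beta(d_\mu)$ and $M_{\mu\nu}\ge\beta(d_\nu)$, takes their maximum to obtain the $\max/\min$ form, and then uses the dimension hypotheses to extract $\eta\gtrsim\varepsilon'$ and hence the bound $c_c\sqrt{KJ}\le 1/(4\sqrt{\varepsilon' C_\sigma})$ that produces $\beta(d)=(\sqrt{\varepsilon' C_\sigma}\,d/D)^{16}$. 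The only cosmetic differences are your use of the symmetry $d_\mu M_{\mu\nu}=d_\nu M_{\nu\mu}$ in place of the paper's direct summation, and that the relevant delocalization input is Theorem~\ref{thm: no gaps extended} (exponent $8$, threshold $83/D$) rather than Theorem~\ref{thm: no gaps}.
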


The main application of this theorem is provided by our next result: the analogue of Corollary \ref{cor: GNT} for the relative errors. 
In addition to \eqref{eq: dimensions smu}, define   $s_{\textup{mc}}$ by
\begin{align}
    D = \exp\left(s_{\textup{mc}} N /k_B\right).
\end{align}
Since we expect that the dimension of the thermal equilibrium macro space $d_{\textup{eq}}$ is extremely large, it should hold that $s_{\textup{mc}}\approx s_{\textup{eq}}$.

\begin{cor}[Generalized normal typicality -- relative errors]\label{cor: real dim rel}
Let $\varepsilon,\delta >0$, $\varepsilon'\in~(0,\frac{1}{2})$ and let $\mu$ and $\nu$ be macro states such that $d_\mu,d_{\nu}>\max\{166,4|\log_2(\varepsilon'/\sqrt{2})|\}$.
Let $H$ be a random Hermitian $D\times D$ matrix as in Theorem~\ref{thm: main new}. Then with probability at least $1-\varepsilon'$, $(1-\varepsilon)$-most $\psi_0\in\mathbb{S}(\Hilbert_\mu)$ are such that for $(1-\delta)$-most $t\in[0,\infty)$,
\begin{align}
 \frac{\bigl| \|P_{\nu}\psi_t\|^2-M_{\mu\nu} \bigr|}{M_{\mu\nu}} \leq \frac{4}{\sqrt{\varepsilon\delta}} (C_{\sigma}\varepsilon')^{-8}\exp\left(-\frac{N}{2k_B}\left(\min\{s_\mu,s_\nu\}-32(s_{\textup{mc}}-\max\{s_\mu,s_\nu\})\right)\right),
\end{align}
where $C_{\sigma}$ is defined in \eqref{const Csigma}.
\end{cor}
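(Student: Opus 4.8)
The plan is to combine the two main inputs the paper has assembled: the lower bound on $M_{\mu\nu}$ from Theorem~\ref{thm: main new}, which holds with probability at least $1-\varepsilon'$, and the upper bound on the \emph{absolute} error $\bigl|\|P_\nu\psi_t\|^2 - M_{\mu\nu}\bigr|$ from Theorem~\ref{thm: GNT} (using $D_E=D_G=1$, which holds with probability $1$ for our absolutely continuous ensemble, as noted at the end of Section~\ref{sec: prelim}). The relative error is the ratio of these two quantities, so first I would fix a realization of $H$ in the probability-$(1-\varepsilon')$ event where \eqref{mainnewMmunu} holds, apply Theorem~\ref{thm: GNT} to that $H$, and then divide: for $(1-\varepsilon)$-most $\psi_0$ and $(1-\delta)$-most $t$,
\begin{align}
\frac{\bigl|\|P_\nu\psi_t\|^2 - M_{\mu\nu}\bigr|}{M_{\mu\nu}} \leq \frac{4}{\sqrt{\varepsilon\delta}}\left(\frac{\min\{1,d_\nu/d_\mu\}}{d_\mu}\right)^{1/2} \frac{1}{M_{\mu\nu}}.
\end{align}

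Next I would substitute the lower bound \eqref{mainnewMmunu} for $M_{\mu\nu}$ in the denominator. Since $M_{\mu\nu}\geq \bigl(\sqrt{\varepsilon' C_\sigma}\,\max\{d_\mu,d_\nu\}/D\bigr)^{16}\min\{1,d_\nu/d_\mu\}$, the factor $\min\{1,d_\nu/d_\mu\}$ cancels against its square root in the numerator (one should check that $(\min\{1,d_\nu/d_\mu\})^{1/2}/\min\{1,d_\nu/d_\mu\} = (\min\{1,d_\nu/d_\mu\})^{-1/2}\le (\max\{d_\mu,d_\nu\}/\min\{d_\mu,d_\nu\})^{1/2}$, which gets absorbed; in fact it is cleaner to just bound $(\min\{1,d_\nu/d_\mu\})^{1/2}\le 1$ in the numerator and keep the full $\min\{1,d_\nu/d_\mu\}^{-1}$ from the denominator, then note $1/(d_\mu \min\{1,d_\nu/d_\mu\}) = 1/\min\{d_\mu,d_\nu\}\le 1$). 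The remaining work is purely bookkeeping with the exponential parametrization: writing $d_\mu = \exp(s_\mu N/k_B)$, $d_\nu=\exp(s_\nu N/k_B)$, $D=\exp(s_{\textup{mc}}N/k_B)$, one has $\max\{d_\mu,d_\nu\}/D = \exp\bigl(-(s_{\textup{mc}}-\max\{s_\mu,s_\nu\})N/k_B\bigr)$, so $(\max\{d_\mu,d_\nu\}/D)^{-16}=\exp\bigl(16(s_{\textup{mc}}-\max\{s_\mu,s_\nu\})N/k_B\bigr)$, and $1/\min\{d_\mu,d_\nu\}=\exp\bigl(-\min\{s_\mu,s_\nu\}N/k_B\bigr)$. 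Collecting the $(\varepsilon' C_\sigma)^{-8}$ from $(\sqrt{\varepsilon' C_\sigma})^{-16}$ and matching the exponents against $-\tfrac{N}{2k_B}\bigl(\min\{s_\mu,s_\nu\}-32(s_{\textup{mc}}-\max\{s_\mu,s_\nu\})\bigr)$ gives exactly the claimed bound (the factor $2$ appearing because $16 = 32/2$ and $\min\{s_\mu,s_\nu\} = \tfrac{1}{2}\cdot 2\min\{s_\mu,s_\nu\}$; one absorbs $1/\min\{d_\mu,d_\nu\}\le 1$ either trivially or keeps it to sharpen the exponent, and the bound as stated corresponds to keeping half of it).

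I do not anticipate a genuine obstacle here — this corollary is a direct consequence of two already-proven theorems plus arithmetic. The only points requiring minor care are: (i) ensuring the events ``$H$ satisfies \eqref{mainnewMmunu}'' and ``$D_E=D_G=1$'' are intersected correctly, so that the overall probability is still at least $1-\varepsilon'$ (the null set of resonances does not decrease it); (ii) verifying that the hypothesis $d_\mu,d_\nu>\max\{166,4|\log_2(\varepsilon'/\sqrt2)|\}$ needed for Theorem~\ref{thm: main new} is exactly what is assumed in the corollary, so no extra conditions are smuggled in; and (iii) tracking the $\min\{1,d_\nu/d_\mu\}$ and $\min\{d_\mu,d_\nu\}$ factors so that the final exponent is stated correctly and not accidentally made too optimistic. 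The exponent $\min\{s_\mu,s_\nu\}-32(s_{\textup{mc}}-\max\{s_\mu,s_\nu\})$ makes the physical content transparent: the relative error is exponentially small in $N$ as long as $\mu$ and $\nu$ are both far enough below the microcanonical (equilibrium) entropy that $\min\{s_\mu,s_\nu\}$ dominates $32$ times the ``entropy deficit'' $s_{\textup{mc}}-\max\{s_\mu,s_\nu\}$ — i.e., the crude power $16$ in Theorem~\ref{thm: main new} costs us a factor $32$ in the admissible window, which is the price of using the no-gaps delocalization bound rather than a sharp $M_{\mu\nu}\approx d_\nu/D$ estimate.
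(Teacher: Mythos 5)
Your proposal is correct and follows essentially the same route as the paper: divide the absolute-error bound of Theorem~\ref{thm: GNT} (with $D_E=D_G=1$ almost surely) by the lower bound \eqref{mainnewMmunu} on the event of Theorem~\ref{thm: main new}, note $d_\mu\min\{1,d_\nu/d_\mu\}=\min\{d_\mu,d_\nu\}$ so the surviving factor is $(\min\{d_\mu,d_\nu\})^{-1/2}$, and rewrite everything in the entropy variables. The only blemish is the parenthetical ``cleaner'' alternative (bounding $(\min\{1,d_\nu/d_\mu\})^{1/2}\le 1$ while keeping $\min\{1,d_\nu/d_\mu\}^{-1}$ would yield the weaker factor $d_\mu^{-1/2}\min\{1,d_\nu/d_\mu\}^{-1}$, not $(\min\{d_\mu,d_\nu\})^{-1/2}$), but your primary computation and the final exponent are exactly right.
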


The most important aspects of this bound are that it bounds the \emph{relative} error, it shrinks exponentially as $N$ increases, and the rate of shrinking can be given explicitly.

\begin{rmk}\label{rmk: rel error small}
 It is sometimes desirable to consider a sequence of systems with increasing dimension $D\to\infty$, corresponding for example to an increasing particle number $N\to\infty$, for which it makes sense to talk of the ``same'' macro states $\nu$ for each member of the sequence. In that case, suppose that $C_\sigma$ is bounded below away from zero uniformly in $D$ (as  $C_{H_0}$ is small in typical models, this is basically a condition on $\sigma_-$ and $\sigma_+$). Then the relative errors in Corollary~\ref{cor: real dim rel} are small if $s_{\textup{mc}} < \max\{s_\mu,s_\nu\}+\min\{s_\mu,s_\nu\}/32$. Since we expect that $s_{\textup{mc}}\approx s_{\textup{eq}}$, one of the macro spaces $\Hilbert_\nu$ or $\Hilbert_\mu$ thus  needs to have specific entropy not too far from the one of the equilibrium macro state. While this might seem quite restrictive at first sight, observe that even if we assume that the $M_{\mu\nu}$ scale like in the case of normal typicality, i.e.,
\begin{align}
     M_{\mu\nu} \approx \frac{d_\nu}{D} \approx \exp\left(-\frac{s_{\textup{mc}}-s_\nu}{k_B}N\right),
 \end{align}
 then the relative errors are small only if $s_{\textup{mc}}<\max\{s_\nu,s_\mu\}+\min\{s_\nu,s_\mu\}/2$ (recall Theorem \ref{thm: GNT} for the absolute errors), see also the discussion in \cite{TTV22-physik}. 
\end{rmk}

\begin{rmk}
    Suppose again that $C_{H_0}$ is bounded uniformly in $D$ and that $\sigma_\pm = D^{\alpha_\pm}$ with $\alpha_- \leq \alpha_+$. In this case the upper bound for the relative errors becomes
    \begin{multline}
        \frac{\bigl|\|P_\nu\psi_t\|^2-M_{\mu\nu} \bigr|}{M_{\mu\nu}} \leq \frac{4}{\sqrt{\varepsilon\delta}(c_-\varepsilon')^8} \left(c_+ \exp(\alpha_+ s_{\mathrm{mc}}N/k_B)+C_{H_0}\right)^8\:\times\\
        \times\:\exp\left(-\frac{N}{2k_B}\left(\min\{s_\mu,s_\nu\}-32\left(\left(1-\frac{\alpha_-}{2}\right)s_{\textup{mc}}-\max\{s_\mu,s_\nu\}\right)\right)\right).\label{ineq: up bound rmk 2}
    \end{multline}
Suppose first that $\alpha_+<0$. Then, for large $N$, the second factor can be bounded by $(2C_{H_0})^8$ and the right-hand side is small if the expression in the exponential function is negative, i.e., if
\begin{align}
    \left(1-\frac{\alpha_-}{2}\right)s_{\mathrm{mc}}< \max\{s_\mu,s_\nu\} + \frac{1}{32}\min\{s_\mu,s_\nu\}.\label{ineq: cond err small alpha}
\end{align}
Since $\alpha_-<0$ the macro state $\mu$ or $\nu$ has to be even closer to the equilibrium macro state than in the case that the variances $\sigma_-,\sigma_+$ do not depend on the dimension $D$. 
If $\alpha_+=0$ then the second factor in \eqref{ineq: up bound rmk 2} can also be bounded by some constant and we obtain again \eqref{ineq: cond err small alpha} as a condition for the relative errors to be small. If, however, $\alpha_+ > 0$, the second factor can be bounded by a quantity proportional to $\exp(8\alpha_+ s_{\mathrm{mc}} N/k_B)$ and therefore the right-hand side of \eqref{ineq: up bound rmk 2} is small if
\begin{align}
    \left(1+\frac{\alpha_+ -\alpha_-}{2}\right) s_{\mathrm{mc}} < \max\{s_\mu,s_\nu\} + \frac{1}{32}\min\{s_\mu,s_\nu\}.
\end{align}
Thus, if $\alpha_+>\alpha_-$, then the macro state $\mu$ or $\nu$ again has to be closer to the equilibrium macro state than in the case discussed in Remark \ref{rmk: rel error small}. Note that, of course, we can also assume that $\sigma_\pm = c_{\sigma_\pm}D^{\alpha_\pm}$ with constants $c_{\sigma_\pm}>0$ and the conditions under which the relative errors are small remain the same since the upper bounds only change by multiplicative constants. 
\end{rmk}

\subsection{Dynamical Typicality\label{subsec: DynTyp}}

We now turn to our main result about dynamical typicality. We would like to give again a bound on the \emph{relative error}, but this we can only provide for \emph{most} (rather than \emph{all}) times. But for \emph{most} times, we already know from generalized normal typicality that $\|P_\nu\psi_t\|^2$ is close to $M_{\mu\nu}$ (as visible in Figure \ref{fig:numerical example 2}). On the other hand, for dynamical typicality, we are particularly interested in times before normal equilibration has set in (as in Figure \ref{fig:numerical example 1}). For this situation, we can provide a bound on what we call the \emph{comparative error} and that is given by the quotient of the absolute error (which in this case is time-dependent) and the time average of the quantity considered. In this way, we compare the absolute error to a comparison value that expresses in a way what magnitude to expect of $\|P_\nu \psi_t\|^2$. Smallness of the comparative error at time $t$ means that the absolute error at $t$ is much smaller than the long-time value of $\|P_\nu \psi_t\|^2$, although not necessarily much smaller than the instantaneous ensemble average
\be
\EEE_\mu \|P_\nu\psi_t\|^2=w_{\mu P_\nu}(t) =: w_{\mu\nu}(t)\,.
\ee
This statement still gives us a handle on comparatively small $\Hilbert_\nu$ for which $\|P_\nu\psi_t\|^2$ is small in absolute terms for most $t$. Now recall from \eqref{MpsiBtimeavg} that
\be
M_{\psi_0 \nu}=\overline{\|P_\nu\psi_t\|^2}\,.
\ee

\begin{thm}[Dynamical typicality -- comparative errors]\label{thm: dyntyp rel 2}
Let $\varepsilon>0$, $\varepsilon'\in(0,\frac{1}{2})$ and let $\mu$ and $\nu$ be macro states such that $d_\nu> \max\{166, 4|\log_2\varepsilon'|\}$.
Let $H$ be a random Hermitian $D\times D$ matrix as in Theorem~\ref{thm: main new}. Then with probability at least $1-\varepsilon'$, for each $t\in\mathbb{R}$, $(1-\varepsilon)$-most $\psi_0\in\mathbb{S}(\Hilbert_\mu)$ are such that 
\begin{align}\label{comparative}
   \frac{\bigl|\|P_\nu\psi_t\|^2- w_{\mu\nu}(t)\bigr|}{M_{\psi_0 \nu}} &\leq \frac{1}{\sqrt{\varepsilon}} (C_\sigma \varepsilon')^{-8} \exp\left(-\frac{N}{2k_B}\left(2s_\mu-\min\{s_\mu,s_\nu\}-32(s_{\textup{mc}}-s_\nu)\right)\right)
\end{align}
where $C_{\sigma}$ is defined in \eqref{const Csigma}. Moreover, with probability $1-\varepsilon'$, for every $T>0$, $(1-\varepsilon)$-most $\psi_0\in\mathbb{S}(\Hilbert_\mu)$ are such that
\begin{align}
    \frac{1}{T}\int_0^T \frac{\bigl| \|P_\nu\psi_t\|^2- w_{\mu\nu}(t) \bigr|^2}{M_{\psi_0 \nu}^2}\, dt \leq \frac{1}{\varepsilon} (C_{\sigma}\varepsilon')^{-16} \exp\left(-\frac{N}{k_B}\left(s_\mu-32(s_{\textup{mc}}-s_\nu)\right)\right).\label{ineq: dyntyp rel new 3}
\end{align}
\end{thm}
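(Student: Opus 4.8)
The plan is to treat the numerator and the denominator of the comparative error \eqref{comparative} separately: the numerator $\bigl|\|P_\nu\psi_t\|^2-w_{\mu\nu}(t)\bigr|$ is already controlled by the prior dynamical-typicality result (Theorem~\ref{thm: dyntyp phys}), while the denominator $M_{\psi_0\nu}=\overline{\|P_\nu\psi_t\|^2}$ is bounded from below by the same no-gaps delocalization input that drives Theorem~\ref{thm: main new}. As throughout, we may assume $D_E=D_G=1$, so $H$ has a non-degenerate orthonormal eigenbasis $(\phi_n)_{n=1}^D$.

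First I would bound the denominator. From Definition~\ref{defn: sup weights} with $D_E=1$ one has the spectral form $M_{\psi_0\nu}=\sum_n|\langle\phi_n|\psi_0\rangle|^2\,\|P_\nu\phi_n\|^2$, and since $\sum_n|\langle\phi_n|\psi_0\rangle|^2=1$ this gives, for \emph{every} $\psi_0\in\SSS(\Hilbert)$,
\begin{align*}
M_{\psi_0\nu}\;\ge\;\min_{1\le n\le D}\|P_\nu\phi_n\|^2 .
\end{align*}
The key input is then the uniform eigenvector lower bound established in Section~\ref{sec: more gen and proofs} (via Theorem~\ref{thm: no gaps} and the verification that Gaussian $H$ as in Assumption~\ref{asm: Hamiltonian} satisfies its hypotheses): provided $d_\nu>\max\{166,4|\log_2\varepsilon'|\}$, with probability at least $1-\varepsilon'$ one has $\|P_\nu\phi_n\|^2\ge\bigl(\sqrt{\varepsilon'C_\sigma}\,d_\nu/D\bigr)^{16}=(C_\sigma\varepsilon')^{8}(d_\nu/D)^{16}$ \emph{for all} $n$ simultaneously. (Here only delocalization relative to the single block $\Hilbert_\nu$ is needed, which is why the hypothesis constrains only $d_\nu$, and why $\varepsilon'$ rather than $\varepsilon'/\sqrt2$ appears: there is no union bound over two macro spaces as in Theorem~\ref{thm: main new}.) On this event, $M_{\psi_0\nu}\ge(C_\sigma\varepsilon')^{8}(d_\nu/D)^{16}$ uniformly in $\psi_0\in\SSS(\Hilbert_\mu)$ and in $t$.

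For the numerator I would apply Theorem~\ref{thm: dyntyp phys} with $B=P_\nu$, so $w_{\mu B}(t)=w_{\mu\nu}(t)$, $\|B\|=1$, and $\tr(|B|)=d_\nu$. The first two estimates in \eqref{ineq: dyntyp1} yield, for every $t$ and $(1-\varepsilon)$-most $\psi_0\in\SSS(\Hilbert_\mu)$,
\begin{align*}
\bigl|\|P_\nu\psi_t\|^2-w_{\mu\nu}(t)\bigr|\;\le\;\frac{1}{\sqrt{\varepsilon d_\mu}}\,\min\Bigl\{1,\sqrt{d_\nu/d_\mu}\Bigr\},
\end{align*}
and \eqref{ineq: dyntyp2} gives $\frac1T\int_0^T\bigl|\|P_\nu\psi_t\|^2-w_{\mu\nu}(t)\bigr|^2\,dt\le\frac{1}{\varepsilon d_\mu}$ for every $T>0$ and $(1-\varepsilon)$-most $\psi_0$. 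Dividing by the lower bound on $M_{\psi_0\nu}$ (which is $t$-independent and hence pulls out of the time integral), intersecting the probability-$(1-\varepsilon')$ event with the measure-$(1-\varepsilon)$ sets of good $\psi_0$, and substituting $d_\mu=\exp(s_\mu N/k_B)$, $d_\nu=\exp(s_\nu N/k_B)$, $D=\exp(s_{\textup{mc}}N/k_B)$—using $\min\{1,\sqrt{d_\nu/d_\mu}\}=\exp(-\tfrac{N}{2k_B}(s_\mu-\min\{s_\mu,s_\nu\}))$—then collecting exponents reproduces exactly \eqref{comparative} and \eqref{ineq: dyntyp rel new 3}.

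The only genuinely nontrivial ingredient is the simultaneous-over-all-$n$ bound $\min_n\|P_\nu\phi_n\|^2\ge(C_\sigma\varepsilon')^{8}(d_\nu/D)^{16}$; but this is already the heart of Theorem~\ref{thm: main new}, so it is invoked rather than reproved, and everything else is the elementary identity for $M_{\psi_0\nu}$ plus exponent bookkeeping. The one point to watch is that Theorem~\ref{thm: main new} is stated for $M_{\mu\nu}$ rather than for $\min_n\|P_\nu\phi_n\|^2$, so I would extract the per-eigenvector estimate directly from its proof (where it occurs as an intermediate step) instead of from the statement.
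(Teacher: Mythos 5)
Your proposal is correct and follows essentially the same route as the paper's proof: the paper likewise writes $\overline{\|P_\nu\psi_t\|^2}=\sum_n|c_n|^2\langle\phi_n|P_\nu|\phi_n\rangle\geq\min_n\langle\phi_n|P_\nu|\phi_n\rangle$, lower-bounds this via the no-gaps delocalization theorem applied with $\kappa=d_\nu/(2D)$ (yielding $(\varepsilon' C_\sigma)^8(d_\nu/D)^{16}$ with probability $1-\varepsilon'$, with the same observation that only $d_\nu$ enters), and then divides the bounds of Theorem~\ref{thm: dyntyp phys} by this quantity before substituting the specific entropies. Your exponent bookkeeping and probability accounting match the paper's exactly.
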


Again, key aspects are that the upper bounds in \eqref{comparative} and \eqref{ineq: dyntyp rel new 3} shrink exponentially with increasing $N$ at explicit rates. The bounds
are small if $s_{\textup{mc}}<s_\nu+s_\mu/16-\min\{s_\mu,s_\nu\}/32$ resp. $s_{\textup{mc}}<s_\nu+s_\mu/32$, i.e., $s_\nu$ must be close to $s_{\textup{mc}}$ and therefore close to $s_{\textup{eq}}$. 
In that case, \eqref{ineq: dyntyp rel new 3} shows that the curve $t\mapsto \|P_\nu\psi_t\|^2$ is close to the curve $t\mapsto\mathbb{E}_\mu\|P_\nu\psi_t\|^2$ in the $L^2$-sense even when compared to the (possibly small) time average $\overline{\|P_\nu\psi_t\|^2}$.

\section{No-Gaps Delocalization and Band Matrices \label{sec: no gaps}}

Our main results are based on a ``no-gaps delocalization'' result of Rudelson and Vershynin (2016) \cite[Theorem 1.3]{RV16}. In this section, we state this result (with certain corrections, Theorem~\ref{thm: no gaps} in Section \ref{subsec: RV corrected}), provide an extension of it that we will use when proving our main results (Theorem~\ref{thm: no gaps extended} in Section \ref{subsec: RV extended}), and provide examples of random matrices for which the hypotheses of Rudelson and Vershynin are satisfied (Theorems~\ref{thm: GM bdd} and \ref{thm: GM nonzero mean} in Section \ref{subsec: ex}).

\subsection{Statement of the No-Gaps Delocalization Result of Rudelson and Vershynin}
\label{subsec: RV corrected}

Since explicit exponents and the dependence of constants on the parameters play an important role for our purposes, we have followed the exponents and constants through the proof of Rudelson and Vershynin in \cite{RV16}. It turned out that in some places we arrived at different values than stated in \cite{RV16}. In this subsection, we state the relevant result with these values, and provide a discussion of how we arrived at them.

In the following we use the following abbreviation.
For a random $D\times D$ matrix $H$ and any number $J>0$ we introduce the ``boundedness event''
\begin{align}
    \mathcal{B}_{H,J} := \left\{\|H\| \leq J\sqrt{D}\right\}.
\end{align}
It will turn out in Section~\ref{subsec: ex} that for relevant examples of random matrices $H$ and large $D$, the event $\mathcal{B}_{H,J}$ occurs with high probability; this conclusion makes use of a result of Latala~\cite{Latala05}.

So here is our adjusted statement of Theorem 1.3 of \cite{RV16}.

\begin{thm}[No-gaps delocalization; Rudelson, Vershynin (2016)]\label{thm: no gaps}
Let $H=(h_{ij})$ be a $D\times D$ random matrix such that for $i,j\in [D]$ the (continuous) random variable $\RE h_{ij}$ is independent of the other entries of $\RE H$ except possibly $\RE h_{ji}$, the densities $\varrho_{ij}^{\RE}$ of $\RE h_{ij}$ are bounded by some number $K\geq 1$ and the imaginary part is fixed. Choose $J\geq 1$ such that the boundedness event $\mathcal{B}_{H,J}$ holds with probability at least $1/2$. Let $\kappa \in (180/D,1/2)$ and $s>0$. Then, conditionally on $\mathcal{B}_{H,J}$, the following holds with probability at least $1-(Cs)^{\kappa D}$: Every eigenvector $v$ of $H$ satisfies
\begin{align}
    \|v_I\| \geq (\kappa s)^9 \|v\| \quad \mbox{for all}\; I \subset [D]  \mbox{ with } |I|\geq \kappa D,\label{ineq: nogaps deloc}
\end{align}
where
\begin{align}
    \|v_I\| := \left(\sum_{j\in I}|v_j|^2 \right)^{1/2}
\end{align}
and
$C=C(K,J)\geq 1$.
\end{thm}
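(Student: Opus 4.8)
The plan is to reconstruct the Rudelson--Vershynin argument while tracking every constant and exponent, since the content of Theorem~\ref{thm: no gaps} is precisely the quantitative version of their Theorem~1.3. The overall architecture I would follow is the one from \cite{RV16}: reduce the no-gaps statement to an invertibility-type estimate for random rectangular submatrices via an $\varepsilon$-net argument, then prove that invertibility estimate using small-ball probability (Littlewood--Offord-type) bounds together with a tensorization step. Concretely, suppose for contradiction that some eigenvector $v$ with eigenvalue $\lambda$ has $\|v_I\| < (\kappa s)^9\|v\|$ for some index set $I$ with $|I| \ge \kappa D$; writing the eigenvalue equation $(H-\lambda)v = 0$ restricted to the rows indexed by $I$ and splitting $v = v_I + v_{I^c}$, one gets that the $|I| \times |I^c|$ submatrix $(H-\lambda)_{I \times I^c}$ maps the almost-unit vector $v_{I^c}/\|v_{I^c}\|$ to something of norm at most of order $(\kappa s)^9 \cdot J\sqrt D$ (using $\|H\| \le J\sqrt D$ on $\mathcal{B}_{H,J}$ and $|\lambda| \le \|H\|$). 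So the event we must rule out is that such a near-null vector exists for \emph{some} $I$, \emph{some} $\lambda$ in the spectral range, and a direction lying in a roughly $(1-\kappa)D$-dimensional coordinate subspace.

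The key steps, in order: (1) Discretize. Build an $\varepsilon$-net (with $\varepsilon$ a suitable power of $\kappa s$) over the sphere of the coordinate subspace $\mathbb{C}^{I^c}$, and a net over the possible eigenvalues $\lambda$; the number of choices of $I$ is at most $2^D$, the net over $\mathbb{C}^{I^c}$ has cardinality at most $(C/\varepsilon)^{|I^c|} \le (C/\varepsilon)^D$, and the eigenvalue net is polynomially sized given $\|H\| \le J\sqrt D$. Note $(C/\varepsilon)^{D}$ is the same order as the bound $(Cs)^{\kappa D}$ only after one is careful — this is where the exponent $9$ in $(\kappa s)^9$ and the rate $\kappa D$ in $(Cs)^{\kappa D}$ get tied together. (2) Single-vector / single-$\lambda$ estimate. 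For a fixed unit $x \in \mathbb{C}^{I^c}$ and fixed $\lambda$, bound $\PPP(\|(H-\lambda)_{I\times I^c}\,x\| \le t\sqrt D)$. Each of the $|I| \ge \kappa D$ rows of this submatrix contributes an independent (across rows, using the independence hypothesis on $\RE h_{ij}$) inner product $\langle (\text{row})_i, x\rangle$; the real part of each such inner product has a density bounded in terms of $K$ by a standard convolution/Littlewood--Offord bound, giving a small-ball estimate $\sup_u \PPP(|\langle(\text{row})_i,x\rangle - u| \le r) \le C K r$ per row. Tensorizing over the $\ge \kappa D$ independent rows gives $\PPP(\cdots \le t\sqrt D) \le (CKt/\sqrt{\kappa})^{\kappa D}$ or similar. (3) Union bound. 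Multiply the single-vector bound by $2^D \cdot (C/\varepsilon)^D \cdot (\text{poly})$ and choose $\varepsilon$, $t$ as the appropriate powers of $\kappa s$ so that the $(C/\varepsilon)^D$ and $2^D$ factors are absorbed into $(\kappa D)$-many factors of $Cs$; after the net approximation (pushing the event on the net to the event on the whole sphere, which costs the factor relating $t$ to $(\kappa s)^9$), one arrives at probability at most $(Cs)^{\kappa D}$. (4) Conditioning. Since all of this is to hold conditionally on $\mathcal{B}_{H,J}$, and that event has probability $\ge 1/2$, every unconditional small-ball bound used above is inflated by at most a factor $2$, which is harmless and gets absorbed into the constant $C = C(K,J)$.

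The main obstacle, and the place where care is genuinely required, is step (3): the bookkeeping of how the exponent $9$ (in the conclusion $\|v_I\| \ge (\kappa s)^9\|v\|$) and the constant $C = C(K,J)$ emerge from balancing the net cardinality $(C/\varepsilon)^{D}$ against the tensorized small-ball bound $(\text{const}\cdot t)^{\kappa D}$ — note the mismatch between the $D$ in the exponent of the net size and the $\kappa D$ in the probability, which is exactly why one loses powers and why $\kappa > 180/D$ is needed (so that $\kappa D$ is large enough for the tail to be meaningful and the polynomial eigenvalue-net factor to be swallowed). This is also precisely the spot where, having re-derived everything, we find the constants and exponents differ from the statement in \cite{RV16}; so rather than invoking their theorem verbatim we would present the adjusted bound as in the statement, with the detailed comparison of exponents deferred to the discussion promised after the theorem. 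A secondary technical point is handling the eigenvalue $\lambda$: since $\lambda$ depends on all entries of $H$, one cannot simply condition; the standard fix is to net the real interval $[-J\sqrt D, J\sqrt D]$ (and likewise the imaginary part is fixed here, so only one real net is needed), incurring only a polynomial-in-$D$ factor that is negligible against $(Cs)^{\kappa D}$ for $\kappa D$ large. I would also remark that the hypothesis only constrains the \emph{real} parts with the imaginary parts fixed, which is exactly the generality in which the small-ball estimate above goes through row-by-row, so no extra work is needed there.
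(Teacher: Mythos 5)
There is a genuine gap, on two levels. First, your proposal never actually does what the paper's proof consists of. The paper does \emph{not} reprove the Rudelson--Vershynin theorem from scratch: it takes their Proposition 4.1 and Theorem 5.1 as black boxes, starts from the resulting estimate
$\mathbb{P}\bigl(\mathrm{Gap}(H,\kappa,t/8J)\ \mbox{and}\ \mathcal{B}_{H,J}\bigr) \le 5(8J/t)^2 (e/\kappa)^{\kappa D}\,p_0$ with $p_0=(2C_2KJt^{0.4}\kappa^{-1.4})^{\kappa D/2}$, substitutes $t=8J(\kappa s)^{\alpha}$, and extracts the exponent of $\kappa$ in the $\kappa D$-th root of the bound, namely $-2\alpha/\kappa D+0.2\alpha-1.7$. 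Requiring this to be nonnegative (so that the bound takes the form $(Cs)^{\kappa D}$ with $C$ independent of $\kappa,s,D$) fails for the original $\alpha=6$ and forces $\alpha\ge 17\ell/(2(\ell-10))$ when $\kappa\ge \ell/D$, whence $\alpha=9$ and $\ell=180$. That short computation is the entire content of the proof and is exactly what justifies the numbers $9$ and $180/D$ in the statement; your proposal explicitly defers it (``the detailed comparison of exponents deferred to the discussion''), so the claimed exponents are never established.

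Second, the from-scratch architecture you sketch would not close as described. In your step (3) you union-bound a per-vector small-ball estimate of size $(\mathrm{const}\cdot t)^{\kappa D}$ over an $\varepsilon$-net of the unit sphere of $\mathbb{C}^{I^c}$, whose cardinality is $(C/\varepsilon)^{2|I^c|}$ with $|I^c|$ of order $D$, not $\kappa D$. No choice of $\varepsilon$ and $t$ reconciles this mismatch while keeping the final threshold a \emph{fixed} power of $\kappa s$: absorbing $(C/\varepsilon)^{2D}$ into a $\kappa D$-th power forces $t\lesssim \varepsilon^{2/\kappa}$, i.e.\ a conclusion of the form $(\kappa s)^{c/\kappa}$ rather than $(\kappa s)^{9}$. (Your per-row small-ball bound $\le CKr$ is also not available for a general unit $x$; one only gets $K r/\max_j|x_j|$, which can be as bad as $K\sqrt{D}\,r$.) Rudelson and Vershynin avoid both problems precisely by \emph{not} netting the whole sphere: their invertibility theorem splits the sphere into compressible vectors (a small set, where the net argument works) and incompressible vectors (handled by the invertibility-via-distance argument, bounding the distance of a random row to the span of the others). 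Your sketch omits this decomposition, which is the actual engine of the result, and treats the resulting obstruction as bookkeeping rather than as the structural difficulty it is.
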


Here is how this statement differs from Theorem 1.3 in \cite{RV16}. The lower bound on $\varepsilon$ is changed from $8/D$ to $180/D$ and the exponent of $\kappa s$ in \eqref{ineq: nogaps deloc} from 6 to 9. Therefore the theorem tells us that subsets $I\subset [D]$ of at least 180 (instead of 8) coordinates of any eigenvector carry a non-negligible part of its mass. Moreover, the lower bound obtained for $\|v_I\|$ is slightly worse since $\kappa<1$ by assumption and we can assume without loss of generality that also $s<1$ because otherwise the lower bound on the probability, $1-(Cs)^{\kappa D}$, is negative.

Here is how we arrived at these exponents and constants.
In the rest of this section, all numbers of theorems, subsections etc. refer to the ones in \cite{RV16} if not stated otherwise. We start from Theorem 5.1 (Section 5.1) to derive Theorem 1.3. 
Let $C_2$ be the constant $C$ in Theorem 5.1 and define for $\delta,\kappa>0$ the \textit{gap event} (called the \textit{localization event} in \cite{RV16})
\begin{align}
   \mbox{Gap}(H,\kappa,\delta) := \left\{\exists \; \mbox{eigenvector}\; v \in \mathbb{C}^D, \|v\| = 1, \exists I\subset [D], |I|=\kappa D: \|v_I\|<\delta \right\}\,,
\end{align}
i.e., the event that in an eigenvector $v$ of $H$ a fraction $\kappa$ of basis vectors is underrepresented in  $v$.

After an application of Proposition 4.1 one arrives at
\begin{align}
    \mathbb{P}\left(\mbox{Gap}(H,\kappa,t/8J)\; \mbox{and}\; \mathcal{B}_{H,J}\right) \leq 5\left(\frac{8J}{t}\right)^2 (e/\kappa)^{\kappa D} p_0 \label{ineq: Loc}
\end{align}
where $t\geq 0$ and $p_0 = (2C_2 KJ t^{0.4}\kappa^{-1.4})^{\kappa D/2}$. Set $t=8J (\kappa s)^{\alpha}$ for some $\alpha>0$. (In \cite{RV16}, $\alpha=6$ was stated, while we arrive at $\alpha=9$ below, so let us keep the value unspecified for a little while.) What we want to show is that
\be
\PPP\Bigl(\mbox{Gap}(H,\kappa,(\kappa s)^{\alpha})\Big| \mathcal{B}_{H,J}\Bigr) \leq (Cs)^{\kappa D}\,,
\ee
where $C$ can depend on $J$ and $K$ but not on $\kappa,s$, or $D$. Since $\PPP(\mathcal{B}_{H,J})\geq 1/2$, it suffices to bound $\PPP(\mbox{Gap}(H,\kappa,(\kappa s)^{\alpha})\cap \mathcal{B}_{H,J})$ in a similar way. From \eqref{ineq: Loc}
we obtain that
\begin{subequations}
\begin{align}
    &\mathbb{P}\left(\mbox{Gap}(H,\kappa,(\kappa s)^{\alpha})\; \mbox{and}\; \mathcal{B}_{H,J}\right) \nonumber\\[2mm]
    &\qquad\leq 5(\kappa s)^{-2\alpha} (e/\kappa)^{\kappa D} \left(2C_2 KJ t^{0.4} \kappa^{-1.4}\right)^{\kappa D/2}\\
    &\qquad=\left(5^{1/\kappa D} \kappa^{-2\alpha/\kappa D+0.2\alpha-1.7} s^{-2\alpha/\kappa D+0.2\alpha-1} e \sqrt{2C_2 KJ}(8J)^{0.2}s\right)^{\kappa D}\\
    &\qquad\leq \left(\kappa^{-2\alpha/\kappa D+0.2\alpha-1.7} s^{-2\alpha/\kappa D+0.2\alpha-1} 5e\sqrt{2C_2 KJ}(8J)^{0.2}s\right)^{\kappa D}.
\end{align}
\end{subequations}
So we would like a constant (independent of $\kappa,s,D$) as an upper bound for 
\be\label{kappapowerspower}
\kappa^{-2\alpha/\kappa D+0.2\alpha-1.7} s^{-2\alpha/\kappa D + 0.2\alpha-1},
\ee
where $\kappa<1/2$ and $s<1$. However, in the limit $\kappa D\to 0$, it follows that $-2\alpha/\kappa D\to -\infty$ and $\kappa^{-2\alpha/\kappa D}\to \infty$, so we need to exclude $\kappa$ values from a neighborhood of 0 and require, say, $\kappa\geq \ell/D$ for some $\ell\in\mathbb{N}$ (as explicitly done in \cite{RV16} and in Theorem~\ref{thm: no gaps}). But then $\kappa$ can still be quite close to 0, and its exponent in \eqref{kappapowerspower} should be $\geq 0$ or else \eqref{kappapowerspower} will not be bounded as $D\to\infty$, so we want that
\begin{align}
    -\frac{2\alpha}{\kappa D}+0.2\alpha-1.7 \geq 0 \,.
\end{align}
In particular, for $\alpha=6$ as in \cite{RV16}, this exponent would always be negative, and \eqref{kappapowerspower} for $\kappa=\ell/D$ would not be bounded. However, if we suppose that 
$\ell>10$ and $\alpha \geq \frac{17\ell}{2(\ell-10)}$, then, for every $\kappa\geq \ell/D$,
\begin{subequations}
\begin{align}
    -\frac{2\alpha}{\kappa D}+\frac{\alpha}{5}-\frac{17}{10} &\geq -\frac{2\alpha}{\ell}+\frac{\alpha}{5}-\frac{17}{10}\\
    &= \alpha \frac{\ell-10}{5\ell}-\frac{17}{10}\\
    &\geq 0.
\end{align}
\end{subequations}
In this case we obtain that
\begin{align}
    \mathbb{P}\left(\mbox{Gap}\left(H,\kappa,(\kappa s)^{\alpha}\right)\; \mbox{and}\; \mathcal{B}_{H,J}\right) &\leq \left(5e\sqrt{2C_2 KN}(8J)^{0.2} s\right)^{\kappa D}\\
    &=: \left(\tilde{C}s\right)^{\kappa D}
\end{align}
with $\tilde{C}= \tilde{C}(K,J)= 5e\sqrt{2C_2 KJ}(8J)^{0.2}$ being the desired constant.\footnote{More precisely, $C=2\tilde{C}$, where $C$ is the constant in Theorem \ref{thm: no gaps}. This is due to the fact that $\mathbb{P}\left(\mbox{Gap}(H,\kappa,(\kappa s)^\alpha)\; \mbox{and}\; \mathcal{B}_{H,J}\right)$ and not $\mathbb{P}\left(\mbox{Gap}(H,\kappa,(\kappa s)^\alpha)|\mathcal{B}_{H,J}\right)$ is considered in the proof of Theorem 1.3.} Note that $\ell\mapsto \frac{17\ell}{2(\ell-10)}$ for $\ell>10$ is monotonically decreasing and $\lim_{\ell\to\infty} \frac{17\ell}{2(\ell-10)}=8.5$. Therefore we can choose $\alpha=9$ and $\ell=180$ to get $-2\alpha/\kappa D+0.2\alpha-1.7\geq 0$.

\begin{rmk}[Constant in Theorem \ref{thm: no gaps}]\label{rmk: constant}
Since the constant $C(K,J)$ in 
Theorem \ref{thm: no gaps} will appear in the upper bound for the relative errors and we want to allow the case that $K$ and $J$ depend on $D$, we are interested in the dependence of $C$ on these two parameters. In \cite{RV16} it is said that all appearing constants (denoted by $C, c,...$) might depend on $K$ and $J$ and therefore we carefully investigated the constants in the theorems, propositions, lemmas etc.\ needed for the proof of their main theorem (here Theorem \ref{thm: no gaps}). To avoid confusion with the constant $C$ in the main theorem, we denote the other constants that are also denoted as $C$ in \cite{RV16} as $C_1,C_2,...$ in the order of their appearance; other constants that appear multiple times with a different value as well are also numbered in the following in the order of their appearance. An inspection of the proofs reveals that
\begin{itemize}
    \item $C_1 = e$ in Lemma 3.3,
    \item $C=2\tilde{C}$, where $\tilde{C}$ is defined above,
    \item $C_2 = 2 \max\left\{\left(\frac{4}{\tilde{c}_1}\right)^{0.9}C_6, C_4\sqrt{\tilde{c}_2}\right\}$ in Theorem 5.1, where $\tilde{c}_2$ is an absolute constant that bounds the constants $C(p)$ in (5.7) for $p\geq 2$, see also Theorem 3.21 in \cite{SW71}, and $\tilde{c}_1\in (0,0.2)$ is another absolute constant that can be chosen arbitrarily, see also Section 5.4.2 and 5.5.3 (note that in (5.9) a square is missing); the values of the other constants are given below,
    \item $C_3 = \frac{1}{4} C_2$ in (5.1), 
    \item $C_4 = \sqrt{2} \hat{C} \bar{C}$ in Lemma 5.4, where $\hat{C}, \bar{C}>0$ are absolute constants; more precisely, $\hat{C}$ appears in the upper bound for the density of a random vector $PX$ where $X=(X_1,\dots,X_D)$ is a vector of real-valued independent random variables whose densities are bounded by some constant a.e. and $P$ is an orthogonal projection onto a subspace of dimension $d$, see also \cite{RV15b} for details; moreover, the constant $\bar{C}$ appears in the upper bound of the volume of a $d$-dimensional ball of radius $\tau\sqrt{d}$ for some $\tau>0$\footnote{The constant $\bar{C}$ can be chosen as $\bar{C}=\sqrt{2\pi e}$ which can be easily seen as follows: For $\tau>0$, the volume of a $d$-dimensional ball with radius $\tau \sqrt{d}$ is given by
    \begin{align}
    \frac{\pi^{d/2}}{\Gamma\left(\frac{d}{2}+1\right)}\left(\tau\sqrt{d}\right)^d \leq \left(\frac{e}{\frac{d}{2}+1}\right)^{d/2}\left(\sqrt{\pi} \tau \sqrt{d}\right)^d\leq \left(\sqrt{2\pi e} \tau\right)^d = \left(\bar{C}\tau\right)^d
    \end{align}
    },
    \item $C_{0,1} = 4$ in Lemma 5.5,
    \item $C_{0,2} = 4e$ in Lemma 5.6,
    \item $C_5 = 72e$ in Lemma 5.7,
    \item $C_6 = 72e$ in Lemma 5.8.
\end{itemize}
Thus, we see that the constant $C_2$ depends neither on $K$ nor on $J$, and we conclude from our computation above that $C \sim \sqrt{K} J^{0.7}$.
\end{rmk} 

\subsection{An Extension\label{subsec: RV extended}}

In Theorem \ref{thm: no gaps}, the imaginary part of the random matrix is fixed, and it is assumed that $J\geq 1$. In this section we present and prove a theorem that covers matrices with random imaginary part as well, that also allows $0<J\leq 1$ and we improve the exponent of $\kappa s$ in the lower bound for $\|v_I\|$ from 9 to 8. More precisely, we prove the following theorem:

\begin{thm}\label{thm: no gaps extended}
Let $H=(h_{ij})$ be a $D\times D$ random matrix such that $\RE H$ and $\IM H$ are independent, for $i,j\in [D]$ the (continuous) random variable $\RE h_{ij}$ is independent of the other entries of $\RE H$ except possibly $\RE h_{ji}$, for $i,j\in [D], i\neq j$ the (continuous) random variable $\IM h_{ij}$ is independent of the other entries of $\IM H$ except possibly $\IM h_{ji}$, and the densities $\varrho_{ij}^{\RE}$ are bounded by some number $K>0$. Choose $J>0$ such that $JK\geq 1$ and the boundedness events $\mathcal{B}_{\RE H,J}$ and $\mathcal{B}_{\IM H,J}$ hold with probability at least $1-\eta$ for some $0<\eta\leq 1/2$. Let $\kappa \in (83/D,1/2)$ and $0<s\leq 1$. Then the following holds with probability at least $\left(1-(c_c\sqrt{KJ}s)^{\kappa D}\right)(1-\eta)^4$, where $c_c\geq 1$ is a universal constant: Every eigenvector $v$ of $H$ satisfies
\begin{align}
    \|v_I\| \geq (\kappa s)^8 \|v\| \quad \mbox{for all}\; I \subset [D]  \mbox{ with } |I|\geq \kappa D\,.
\end{align}
\end{thm}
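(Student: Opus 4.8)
The plan is to reduce the Hermitian/complex case with random imaginary part to two applications of Theorem~\ref{thm: no gaps}, one handling $\RE H$ and one handling $\IM H$, by conditioning cleverly so that in each application the ``other part'' is frozen. First I would write $H = \RE H + i\,\IM H$ and, since $\RE H$ and $\IM H$ are independent, condition on the realization of $\IM H$. Given $\IM H$, the matrix $\RE H$ has independent entries up to the symmetry constraint $\RE h_{ij}\leftrightarrow\RE h_{ji}$, densities bounded by $K$, and (on the boundedness event) operator norm $\le J\sqrt D$; the imaginary part is now deterministic, so Theorem~\ref{thm: no gaps} applies verbatim to $H$ with this fixed imaginary part. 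The subtlety is that Theorem~\ref{thm: no gaps} wants $\PPP(\mathcal{B}_{H,J})\ge 1/2$, while we are handed $\PPP(\mathcal{B}_{\RE H,J}),\PPP(\mathcal{B}_{\IM H,J})\ge 1-\eta$; here I would use $\|H\|\le\|\RE H\|+\|\IM H\|$ to get $\mathcal{B}_{\RE H,J}\cap\mathcal{B}_{\IM H,J}\subset\mathcal{B}_{H,2J}$, so that $\mathcal B_{H,2J}$ has probability $\ge 1-2\eta\ge 1/2$; one then runs Theorem~\ref{thm: no gaps} with $2J$ in place of $J$, which is why the final constant is of the form $c_c\sqrt{KJ}$ (tracking $C\sim\sqrt K J^{0.7}$ from Remark~\ref{rmk: constant}, plus the factor from doubling $J$, all absorbed into the universal $c_c$; the hypothesis $JK\ge1$ guarantees $C(K,2J)$ is controlled by $\sqrt{KJ}$ up to an absolute constant). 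The conclusion of this first step is: conditionally on $\IM H$ and on the boundedness events, with probability at least $1-(c\sqrt{KJ}s)^{\kappa D}$ over $\RE H$, every eigenvector $v$ of $H$ satisfies $\|v_I\|\ge(\kappa s)^9\|v\|$ for all $|I|\ge\kappa D$.

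The remaining issue is to improve the exponent from $9$ to $8$ and to incorporate the randomness of $\IM H$ symmetrically, which I would do by a union/intersection argument: I would \emph{also} condition the other way, freezing $\RE H$ and applying Theorem~\ref{thm: no gaps} to the matrix $iH = i\RE H - \IM H$, whose ``real part'' is now $-\IM H$ (random, with independent entries up to symmetry) and whose ``imaginary part'' $\RE H$ is frozen — noting that $iH$ and $H$ have the same eigenvectors. This gives a second no-gaps bound with the roles of $\RE H$ and $\IM H$ exchanged, again with probability $1-(c\sqrt{KJ}s)^{\kappa D}$, now over $\IM H$. Taking the intersection of the two good events (and of all four boundedness events, hence the $(1-\eta)^4$ factor), both bounds hold simultaneously; using $\max\{\|v_I\|_{\text{from }\RE},\dots\}$ does not itself improve the power, so instead I would re-examine where the power $9$ came from. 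Recall from the discussion after Theorem~\ref{thm: no gaps} that $\alpha=9$ was forced by needing $\ell>10$ with $\alpha\ge\frac{17\ell}{2(\ell-10)}$ and $\ell/D\le\kappa$; the limiting value as $\ell\to\infty$ is $8.5$. With the sharper threshold $\kappa\ge 83/D$ one has $\ell=83$ available, but $\frac{17\cdot83}{2\cdot73}>8.5$ still exceeds $8$; so to genuinely reach exponent $8$ one needs the slack gained from the intersection: the product of the two independent no-gaps events lets one relax the probability target in \eqref{ineq: Loc}, effectively replacing the requirement $-2\alpha/\kappa D+0.2\alpha-1.7\ge 0$ by a weaker one (because a factor $(\tilde C s)^{\kappa D}$ to a power can be tolerated), and chasing this through with $\ell=83$ yields $\alpha=8$. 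I would present this as a careful re-run of the computation in Section~\ref{subsec: RV corrected} with the adjusted threshold and the extra probabilistic room, rather than reproving Rudelson--Vershynin from scratch.

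Finally I would assemble the probabilities: on the event $\mathcal B_{\RE H,J}\cap\mathcal B_{\IM H,J}$ (probability $\ge(1-\eta)^2$, but I would instead bound it by $(1-\eta)^4$ to also leave room for the two conditional boundedness events needed when applying Theorem~\ref{thm: no gaps} with the doubled norm bound, matching the statement), the conditional-on-$\IM H$ no-gaps event and the conditional-on-$\RE H$ no-gaps event each have conditional probability $\ge 1-(c_c\sqrt{KJ}s)^{\kappa D}$; since, after conditioning on both boundedness events, these two events depend on the two independent blocks $\RE H$ and $\IM H$ respectively, their intersection has probability $\ge(1-(c_c\sqrt{KJ}s)^{\kappa D})^2\ge 1-(c_c'\sqrt{KJ}s)^{\kappa D}$, and multiplying by $(1-\eta)^4$ for the boundedness events gives exactly the claimed bound $\bigl(1-(c_c\sqrt{KJ}s)^{\kappa D}\bigr)(1-\eta)^4$ after relabeling the universal constant. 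I expect the main obstacle to be the bookkeeping in the second paragraph: verifying that the extra probabilistic slack from the two-sided conditioning really does buy the exponent $8$ with $\ell=83$, as opposed to merely $8.5$; this requires redoing the inequality chain between \eqref{ineq: Loc} and the definition of $\tilde C$ with the sharper constants, and making sure no hidden dependence on $\kappa,s,D$ creeps into the constant — everything else is a routine conditioning and union-bound argument.
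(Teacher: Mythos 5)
Your first paragraph captures the paper's Step 1 correctly: condition on $\IM H$, use $\mathcal{B}_{\RE H,J}\cap\mathcal{B}_{\IM H,J}\subset\mathcal{B}_{H,2J}$ to get $\PPP(\mathcal{B}_{H,2J})\ge 1/2$, and apply Theorem~\ref{thm: no gaps} with $2J$ in place of $J$; the $(1-\eta)^4$ bookkeeping also works out essentially as you describe. However, your explanation of where $\sqrt{KJ}$ comes from is off: $C(K,2J)\sim\sqrt{K}J^{0.7}$ is \emph{not} controlled by $\sqrt{KJ}=\sqrt{K}J^{0.5}$ up to an absolute constant when $J$ is large, and $JK\ge 1$ does not fix this. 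The paper gets $\sqrt{KJ}$ by a separate rescaling step: apply the result to $\tilde H=J^{-1}H$, whose densities are bounded by $JK$ and whose norm bound is $1$, so the constant becomes $C(KJ,2)\sim\sqrt{KJ}$. This is also what removes the restriction $J\ge 1$.

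The genuine gap is in your second paragraph, i.e., in how you propose to lower the exponent from $9$ to $8$. Your mechanism --- applying the theorem a second time to $iH$ and claiming that the ``extra probabilistic slack'' from intersecting two independent good events relaxes the constraint $-2\alpha/\kappa D+0.2\alpha-1.7\ge 0$ --- does not work. That constraint has nothing to do with how much probability you can afford to lose; it is needed so that the factor $\kappa^{-2\alpha/\kappa D+0.2\alpha-1.7}s^{-2\alpha/\kappa D+0.2\alpha-1}$ is bounded by a constant independent of $\kappa,s,D$ (otherwise the bound $(\tilde C s)^{\kappa D}$ with a universal $\tilde C$ fails as $\kappa\to\ell/D$ or $s\to 0$). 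Intersecting two events each of probability $\ge 1-p$ only yields $\ge(1-p)^2$, which is weaker, and the deterministic conclusion on each event is still $\|v_I\|\ge(\kappa s)^9\|v\|$, so no power is gained --- as you yourself note for the $\max$ version. Since $\lim_{\ell\to\infty}\frac{17\ell}{2(\ell-10)}=8.5$, no choice of $\ell$ reaches $\alpha=8$ within the original estimate. The paper instead goes back inside the Rudelson--Vershynin proof of their Invertibility Theorem 5.1 and re-optimizes the free parameter there, choosing $\tau=t^{9/19}$ instead of $\tau=\sqrt{t}$; this improves the exponent of $t$ in Theorem 5.1 from $0.4$ to $9/19$, turns the constraint into $-2\alpha/\kappa D+\tfrac{9}{38}\alpha-1.7\ge 0$, i.e., $\alpha\ge\frac{323\ell}{5(9\ell-76)}$ with limit $\frac{323}{45}\approx 7.18$, so that $\alpha=8$ with $\ell=83$ becomes admissible. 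Without some such sharpening of the underlying invertibility estimate, the exponent $8$ and the threshold $\kappa>83/D$ are not reachable by conditioning arguments alone.
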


\begin{proof}
We prove the theorem in three steps. The first step is to establish a modification of Theorem \ref{thm: no gaps} where the imaginary part of the matrix $H$ can also be random. The second step is to relax the condition $J\geq 1$ to $J>0$ (in this case we have to additionally assume that $JK\geq 1$), which can be done via a simple scaling argument. Finally, in the third step, we show that by a small modification in the proof of Rudelson and Vershynin, the lower bound for $\|v_I\|_2$ can be slightly improved.
\\
\\
\underline{1. Step (the complex case)}:
We first assume that $J,K\geq 1$ and that $\kappa\in(180/D,1/2)$. Let 
\begin{align}
    S&:= \{\mbox{Every eigenvector } v \mbox{ of } H \mbox{ satisfies } \|v_I\| \geq (\kappa s)^9 \|v\| \mbox{ for all } I \subset [D] \mbox{ with } |I|\geq \kappa D\},\\
    E&:=\left\{Q \in \mathbb{R}^{D\times D}: \mathbb{P}\left(\mathcal{B}_{\RE H+iQ,2J}\right)\geq \frac{1}{2}\right\}.
\end{align}
By the law of total expectation and the monotonicity of the conditional expectation we obtain that
\begin{subequations}
\begin{align}
    \mathbb{P}(S|\mathcal{B}_{H,2J}) &= \mathbb{E}\left(\mathbb{P}\left(S|\mathcal{B}_{H,2J}, \IM H\right)|\mathcal{B}_{H,2J}\right)\\
    &\geq \mathbb{E}\left(\mathbbm{1}_{\{\IM H \in E\}}\mathbb{P}\left(S|\mathcal{B}_{H,2J},\IM H\right)|\mathcal{B}_{H,2J}\right)\\
    &\geq \left(1-(Cs)^{\kappa D}\right)\mathbb{E}\left(\mathbbm{1}_{\{\IM H \in E\}}|\mathcal{B}_{H,2J}\right)\\
    &= \left(1-(Cs)^{\kappa D}\right)\mathbb{P}\left(\IM H \in E|\mathcal{B}_{H,2J}\right)\\
    &\geq \left(1-(Cs)^{\kappa D}\right)\mathbb{P}\left(\{\IM H \in E\}\cap \mathcal{B}_{H,2J}\right),
\end{align}
\end{subequations}
where $C=C(K,2J)\geq 1$ is the constant in Theorem \ref{thm: no gaps}, and Theorem \ref{thm: no gaps} itself was applied in the third line.
Next note that 
\begin{align}
    \mathcal{B}_{\RE H, J} \cap \mathcal{B}_{\IM H, J} \subset \mathcal{B}_{H,2J}\,,
\end{align}
and thus 
\begin{subequations}
\begin{align}
    \mathbb{P}\left(\{\IM H \in E\}\cap\mathcal{B}_{H,2J}\right) &\geq \mathbb{P}\left(\{\IM H \in E\}\cap\mathcal{B}_{\RE H,J}\cap\mathcal{B}_{\IM H,J}\right)\\
    &= \mathbb{P}\left(\mathcal{B}_{\RE H,J}\right) \mathbb{P}\left(\{\IM H \in E\}\cap\mathcal{B}_{\IM H,J}\right)\\
    &\geq (1-\eta)\, \mathbb{P}\left(\{\IM H \in E\}\cap\mathcal{B}_{\IM H,J}\right),
\end{align}
\end{subequations}
where we used that $\RE H$ and $\IM H$ are independent. Next observe that for $Q\in \mathbb{R}^{D\times D}$ such that $\|Q\|\leq J \sqrt{D}$, we have that 
\begin{align}
    \mathbb{P}\left(\mathcal{B}_{\RE H + iQ, 2J}\right) \geq \mathbb{P}\left(\mathcal{B}_{\RE H,J}\right) \geq 1-\eta \geq \frac{1}{2}.
\end{align}
It follows that $\mathcal{B}_{\IM H,J}\subset \{\IM H \in E\}$ and thus
\begin{align}
    \mathbb{P}\left(\{\IM H \in E\}\cap \mathcal{B}_{H,2J}\right) \geq (1-\eta)\, \mathbb{P}\left(\mathcal{B}_{\IM H,J}\right) \geq (1-\eta)^2.
\end{align}
Therefore, we finally obtain that
\begin{align}    \mathbb{P}\left(S|\mathcal{B}_{H,2J}\right) \geq \left(1-(Cs)^{\kappa D}\right) (1-\eta)^2\,,
\end{align}
which implies
\begin{subequations}
\begin{align}
    \mathbb{P}(S) &\geq \mathbb{P}\left(S\cap \mathcal{B}_{H,2J}\right) = \mathbb{P}\left(S|\mathcal{B}_{H,2J}\right)\mathbb{P}\left(\mathcal{B}_{H,2J}\right)\\
    &\geq \mathbb{P}\left(S|\mathcal{B}_{H,2J}\right) \mathbb{P}\left(\mathcal{B}_{\RE H,J}\right) \mathbb{P}\left(\mathcal{B}_{\IM H,J}\right)\\
    &\geq \left(1-(Cs)^{\kappa D}\right)(1-\eta)^4.
\end{align}
\end{subequations}
This shows that for complex random $D\times D$ matrices $H$ which satisfy the assumptions of this theorem with $J,K\geq 1$ (instead of $K,J>0$ such that $JK\geq 1$) and $\kappa\in(180/D,1/2)$ (instead of $\kappa \in(83/D,1/2)$), the following holds with probability at least $\left(1-(Cs)^{\kappa D}\right)(1-\eta)^4$: Every eigenvector $v$ of $H$ satisfies
\begin{align}
    \|v_I\| \geq (\kappa s)^9\|v\| \quad \mbox{ for all } I\subset [D] \mbox{ with } |I|\geq \kappa D,
\end{align}
where $C=C(K,2J)\geq 1$ and $C(K,2J)\sim \sqrt{K}J^{0.7}$.
\\
\\
\underline{2. Step ($J>0$)}:
As in the first step of the proof we assume that $\kappa \in (180/D,1/2)$. Note that if the density of a random variable $X$ is bounded by some constant $K$, then the density of $J^{-1}X$ is bounded by $JK$. Therefore,  applying the result of the first step to the matrix $\tilde{H}:= J^{-1}H$  immediately shows that for complex random $D\times D$ matrices $H$ which satisfy the assumptions of this theorem with $\kappa \in (180/D,1/2)$ (instead of $\kappa \in (83/D,1/2)$), the following holds with probability at least $\left(1-(\tilde{c}s)^{\kappa D}\right)(1-\eta)^4$: Every eigenvector $v$ of $H$ satisfies
\begin{align}
    \|v_I\| \geq (\kappa s)^9 \|v\| \quad \mbox{ for all } I\subset [D] \mbox{ with } |I|\geq \kappa D, 
\end{align}
where $\tilde{c} = C(KJ,2) \sim \sqrt{KJ}$.
\\
\\
\underline{3. Step (improved bound)}:
We first assume that the imaginary part of $H$ is fixed and that $K,J\geq 1$.
In the proof of the Invertibility Theorem 5.1 at the end of Section 5 in \cite{RV16}, they arrive at the upper bound
\begin{align}
    \left[ C_6 K J^{0.1}\kappa^{-0.05} \left(\frac{4Jt}{\tilde{c}_1\tau\kappa^{3/2}}\right)^{0.9}\right]^{\kappa D} + \left(C_4 \sqrt{\tilde{c}_2} K \tau\right)^{\kappa D}\label{eq: upper bound RV}
\end{align}
for the probability on the left-hand side of (5.18), where $t,\tau>0$ are arbitrary and where we used the notation for the constants introduced above. In their paper they choose $\tau=\sqrt{t}$ and then they use that they can assume that $t\leq 1$ to bound $t^{0.45}$ and $t^{0.5}$ by $t^{0.4}$. The result can be slightly improved by choosing $\tau=t^{9/19}$ instead. With this choice, the upper bound \eqref{eq: upper bound RV} becomes
\begin{align}
    \left[\left(\frac{4}{\tilde{c}_1}\right)^{0.9} C_6 KJ \kappa^{-1.4} t^{9/19}\right]^{\kappa D} + \left(C_4 \sqrt{\tilde{c}_2} K t^{9/19}\right)^{\kappa D} \leq \left(C_2 KJ \kappa^{-1.4} t^{9/19}\right)^{\kappa D},
\end{align}
where $C_2 = 2 \max\left\{\left(\frac{4}{\tilde{c}_1}\right)^{0.9}C_6, C_4\sqrt{\tilde{c}_2}\right\}$. This shows that the exponent of $t$ in Theorem~5.1 in \cite{RV16} can be changed from 0.4 to 9/19. 
This change has implications for the derivation of the no-gaps delocalization theorem from Theorem 5.1: After an application of Proposition 4.1 one now arrives at
\begin{align}
    \mathbb{P}\left(\mbox{Gap}(H,\kappa,t/8J)\; \mbox{and}\; \mathcal{B}_{H,J}\right) \leq 5\left(\frac{8J}{t}\right)^2 (e/\kappa)^{\kappa D} p_0
\end{align}
with $t\geq 0$ and $p_0=(2C_2 KJt^{9/19}\kappa^{-1.4})^{\kappa D/2}$.  Again we set $t=8J(\kappa s)^{\alpha}$ for some $\alpha>0$. Then a computation similar to the one in the proof of Theorem \ref{thm: no gaps} shows that
\begin{multline}
    \mathbb{P}\left(\mbox{Gap}(H,\kappa,(\kappa s)^{\alpha})\; \mbox{and}\; \mathcal{B}_{H,J}\right)\\
    \leq \left(\kappa^{-2\alpha/\kappa D+9\alpha/38-1.7} s^{-2\alpha/\kappa D+9\alpha/38-1} 5e \sqrt{2C_2 KJ}(8J)^{9/38}s\right)^{\kappa D}.
\end{multline} 
In order to obtain an upper bound for $\kappa^{-2\alpha/\kappa D+9\alpha/38-1.7} s^{-2\alpha/\kappa D+9\alpha/38-1}$ we have to determine $\alpha$ such that $-2\alpha/\kappa D+9\alpha/38-1.7 \geq 0$ since $\kappa <1/2$ and $s\leq 1$. Suppose again that $\kappa \geq \frac{\ell}{D}$ for some $\ell\in\mathbb{N}, \ell>8$, and $\alpha\geq \frac{323\ell}{5(9\ell-76)}$. Then
\begin{subequations}
\begin{align}
    -\frac{2\alpha}{\kappa D}+\frac{9\alpha}{38}-\frac{17}{10} &\geq -\frac{2\alpha}{\ell}+\frac{9\alpha}{38}-\frac{17}{10}\\
    &= \alpha\frac{9\ell-76}{38\ell}-\frac{17}{10}\\
    &\geq 0.
\end{align}
\end{subequations}
Thus, we arrive at
\begin{subequations}
\begin{align}
    \mathbb{P}\left(\mbox{Gap}(H,\kappa,(\kappa s)^{\alpha})\; \mbox{and}\; \mathcal{B}_{H,J}\right) &\leq \left(5e\sqrt{2C_2 KJ} (8J)^{9/38}s\right)^{\kappa D}\\
    &= \left(\frac{cs}{2}\right)^{\kappa D}
\end{align}
\end{subequations}
with $c=10 e\sqrt{2C_2KJ}(8J)^{9/38}$, i.e., $c \sim \sqrt{K} J^{14/19}$. Note that $\ell\mapsto \frac{323\ell}{5(9\ell-76)}$ for $\ell>8$ is monotonically decreasing and $\lim_{\ell\to\infty}\frac{323\ell}{5(9\ell-76)}=\frac{323}{45}$. Therefore we can choose $\alpha=8$ and $\ell=83$ to get $-2\alpha/\kappa D+9\alpha/38-1.7\geq 0$.
For matrices whose imaginary part is not fixed, we proceed as in step 1 to obtain the result for complex matrices. In order to relax the condition $J\geq 1$ to $J>0$ (and also $K\geq 1$ to $K>0$), one can again apply a scaling result as in step 2, where one replaces $K$ by $JK$ and $J$ by $1$, to obtain $c\sim \sqrt{KJ}\, 1^{14/19}=\sqrt{KJ}$. We thus arrive at the final result with $c_c=10 e\sqrt{2C_2}8^{9/38}$. Note that it depends on the value of $C_2$ whether $c_c\geq 1$ or merely $c_c>0$, however, we can assume without loss of generality that $c_c\geq 1$ since replacing $c_c$ by $\max\{10e\sqrt{2C_2}8^{9/38},1\}$ only results in a possibly smaller lower bound for the probability with which no-gaps delocalization at least holds.
\end{proof}

\begin{rmk}
\begin{enumerate}
    \item Theorem \ref{thm: no gaps extended} still holds if some entries of $\IM H$ are fixed. This observation will be important when we construct examples in Section \ref{subsec: ex} since for Hermitian matrices the diagonal entries of $\IM H$ are fixed to zero.
    \item In the proof of Theorem \ref{thm: no gaps extended} the choice $\alpha=8$ is the smallest one possible if one wants $\alpha\in\mathbb{N}$. Otherwise, any other value $\alpha> \frac{323}{45}\approx 7.18$ is possible, however, at the cost of a larger value for $\ell$ and therefore $\kappa$. 
\end{enumerate}
\end{rmk}

\begin{rmk}[A different proof strategy]
Another possibility (different from the scaling argument) to obtain no-gaps delocalization in the case that $0<J<1$ is to find out where the assumption $J\geq 1$ is used in the proof of Rudelson and Vershynin and to suitably modify the assumptions: The assumption $J\geq 1$ is used the first time (except for Section 5.1 to which we will turn at the end) in the proof of Lemma 5.7 (and Lemma 5.8 which follows from an application of Lemma 5.7). If $J\geq 1$, then $J/\sqrt{\kappa}>1$ and since the constant $C$ on the left-hand side of (5.15) and $K$ are also greater than 1, the upper bound on the probability is only non-trivial if $\theta^{1-2\mu}<1$. In view of how Lemma~5.7 is applied, we can safely assume that $\mu\leq 1/2$ and thus that $\theta<1$. In the proof of Lemma 5.7 as well as Lemma 5.8 it is used that $\theta\sqrt{\kappa}/J<1$ which is obviously fulfilled for this choice of parameter. If we allow that $J<1$ but require $\kappa<J^2$, then we can draw the same conclusions and argue in the same way to obtain Lemma 5.7 and Lemma 5.8. This means that the upper bound for $\kappa$ has to be replaced by $\min\{1/2,J^2\}$.

The next (and last) time the assumption $J\geq 1$ is used in the proof of the Invertibility Theorem 5.1 at the end of Section 5. Here the improved estimate that was presented in the third step in the proof of Theorem \ref{thm: no gaps extended} is necessary. 

Note that the assumption $JK\geq 1$ is not needed in this case and we have to assume again that $K\geq 1$. However, in our examples below, we always need that $JK\geq 1$. Moreover, note that we cannot assume anymore that the constant $c\sim \sqrt{K} J^{14/19}$ in the proof of Theorem~\ref{thm: no gaps extended} is larger than one. An adaption of the proof of Theorem~\ref{thm: main gen} shows that with probability at least 
\begin{align}
    \left(1-\min\left\{c_c\sqrt{K}J^{14/19},\frac{1}{2}\right\}^{\min\{d_\nu,d_\mu,2DJ^2\}/2}\right)(1-\eta)^4
\end{align}
one finds the lower bound 
\begin{align}
M_{\mu\nu} \geq \left(\min\left\{1,\frac{1}{2c_c\sqrt{K}J^{14/19}}\right\}\min\left\{\frac{d_\mu}{2D},\frac{d_\nu}{2D},J^2\right\}\right)^{16} \max\left\{1,\frac{d_\nu}{d_\mu}\right\}.
\end{align}
Similarly, with probability at least $\left(1-\min\{c_c\sqrt{K}J^{14/19},1/2\}^{\min\{d_\mu/2,DJ^2\}}\right)(1-\eta)^4$ one has the lower bound
 \begin{align}
     |M_{\mu B}| \geq \max\left\{b_{\min}^{\pm},\left(\min\left\{\frac{d_\mu}{2D},J^2\right\}\min\left\{1,\frac{1}{2c_c \sqrt{K}J^{14/19}}\right\}\right)^{16}\frac{\tr(B^{\pm})}{d_\mu}\right\}-\min\left\{b_{\max}^{\mp},\frac{\tr(B^{\mp})}{d_\mu}\right\}.
\end{align}

This lower bound is sometimes slightly better and sometimes slightly worse than the one we presented in Theorem \ref{thm: main new} but yields more complicated expressions e.g. in Corollary \ref{cor: real dim rel} and we therefore stick to the simpler lower bound.
\end{rmk}

\subsection{Examples\label{subsec: ex}}

 In this section we give some examples covered by Theorem \ref{thm: no gaps extended}. We will see that among these matrices are matrices with a band structure in a basis that diagonalizes the projections onto the macro spaces to which a ``small'' Gaussian perturbation is added. As discussed in the introduction, we are interested in such kind of matrices since, in contrast to Wigner matrices or matrices from the GOE/GUE, they can describe systems in which a non-trivial equilibration process passing through multiple macro-states occurs.

\begin{thm}[Gaussian matrices, variances bounded by constants]\label{thm: GM bdd}
Let $A=(a_{ij})$ be a $D\times D$ random matrix with independent complex Gaussian entries with mean zero, i.e., all random variables $\RE a_{ij}$, $\IM a_{ij}$, $i,j\in[D]$, are independent and $\RE a_{ij}, \IM a_{ij} \sim \mathcal{N}(0,\sigma_{ij}^2/2)$ for some $\sigma_{ij}>0$. Let $\sigma_{ij}= \sigma_{ji}$, $\sigma_- := \min_{i,j} \sigma_{ij}$ and $\sigma_+ := \max_{i,j} \sigma_{ij}$. Then the Hermitian matrix $H:=\frac{1}{\sqrt{2}}(A+A^*)$ is Gaussian, more precisely, its entries $h_{ij}$ satisfy $\RE h_{ij}, \IM h_{ij}\sim\mathcal{N}(0,\sigma_{ij}^2/2)$ for $i\neq j$ and $h_{ii}=\RE h_{ii}\sim \mathcal{N}(0,\sigma_{ii}^2)$, and fulfills the assumptions in Theorem \ref{thm: no gaps extended} with parameters $K = \frac{1}{\sqrt{2\pi}\sigma_-}$ and $J=\frac{4}{\eta}\hat{C}\sigma_+$ for arbitrary $\eta \in (0,\frac{1}{2})$, where $\hat{C}$ is a certain universal constant with the property \eqref{ineq: Latala}.
\end{thm}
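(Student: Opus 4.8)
The plan is to verify the three hypotheses of Theorem~\ref{thm: no gaps extended} for the Hermitian Gaussian matrix $H=\tfrac{1}{\sqrt 2}(A+A^*)$: first the independence/symmetry structure, then the density bound $K$, then the two boundedness events $\mathcal{B}_{\RE H,J}$ and $\mathcal{B}_{\IM H,J}$, and finally check the compatibility condition $JK\geq 1$. The first task is a bookkeeping computation: from $a_{ij}$ being independent complex Gaussians with $\RE a_{ij},\IM a_{ij}\sim\mathcal{N}(0,\sigma_{ij}^2/2)$, one reads off that for $i\neq j$ the entry $h_{ij}=\tfrac{1}{\sqrt 2}(a_{ij}+\overline{a_{ji}})$ has $\RE h_{ij}=\tfrac{1}{\sqrt 2}(\RE a_{ij}+\RE a_{ji})$ and $\IM h_{ij}=\tfrac{1}{\sqrt 2}(\IM a_{ij}-\IM a_{ji})$, each a sum of two independent $\mathcal{N}(0,\sigma_{ij}^2/2)$'s hence $\mathcal{N}(0,\sigma_{ij}^2)$ --- wait, we need $\sigma_{ij}^2/2$; one should be careful here. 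Since $\sigma_{ij}=\sigma_{ji}$, each of the two summands has variance $\sigma_{ij}^2/2$, and the displayed claim in the theorem says $\RE h_{ij},\IM h_{ij}\sim\mathcal N(0,\sigma_{ij}^2/2)$, so in fact we want $h_{ij}=\tfrac12(a_{ij}+\overline{a_{ji}})\cdot\sqrt2$ to have the variances work out; I would simply recompute the variance of $\tfrac{1}{\sqrt2}(\RE a_{ij}+\RE a_{ji})$, which is $\tfrac12(\sigma_{ij}^2/2+\sigma_{ji}^2/2)=\sigma_{ij}^2/2$, matching the claim. For $i=j$, $h_{ii}=\sqrt2\,\RE a_{ii}\sim\mathcal N(0,\sigma_{ii}^2)$, again matching. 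The key structural observation is that $\RE h_{ij}$ depends only on $\RE a_{ij},\RE a_{ji}$ and $\IM h_{ij}$ only on $\IM a_{ij},\IM a_{ji}$; since all real and imaginary parts of the $a$'s are mutually independent, $\RE H$ and $\IM H$ are independent, and within $\RE H$ the variable $\RE h_{ij}$ is independent of everything except $\RE h_{ji}$ (and similarly for $\IM H$, where moreover the diagonal $\IM h_{ii}=0$ is fixed, covered by the Remark after Theorem~\ref{thm: no gaps extended}).

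Next, the density bound: each $\RE h_{ij}$ is a centered Gaussian with variance $\geq \sigma_-^2/2$ (the worst case being an off-diagonal entry), so its density is bounded by $\tfrac{1}{\sqrt{2\pi}\,\sqrt{\sigma_-^2/2}}=\tfrac{1}{\sqrt{\pi}\,\sigma_-}$; one should double-check whether the stated $K=\tfrac{1}{\sqrt{2\pi}\,\sigma_-}$ refers to the variance-$\sigma_-^2$ normalization of the diagonal entries instead, but in any case the correct $K$ is the reciprocal of $\sqrt{2\pi}$ times the smallest standard deviation among the $\RE h_{ij}$, and the stated value should be read with that understanding. The main quantitative step is the boundedness event. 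Here I invoke Latala's theorem on the operator norm of Gaussian matrices with independent entries --- the cited result with universal constant $\hat C$ satisfying \eqref{ineq: Latala} --- applied to $\RE H$ and to $\IM H$ separately. The entries of $\RE H$ have variances bounded by $\sigma_+^2$, so Latala gives $\EEE\|\RE H\|\leq \hat C\,\sigma_+\sqrt D$ up to the structure of the bound; then Markov's inequality yields $\PPP(\|\RE H\|> J\sqrt D)\leq \EEE\|\RE H\|/(J\sqrt D)\leq \hat C\sigma_+/J$, and choosing $J=\tfrac{4}{\eta}\hat C\sigma_+$ makes this $\leq \eta/4\leq\eta$; the same for $\IM H$. (If the precise form of Latala's bound involves a sum of squared row/column norms rather than a clean $\sigma_+\sqrt D$, the argument is the same with a slightly more careful estimate of those norms, each $\leq\sigma_+\sqrt D$.) So both $\mathcal{B}_{\RE H,J}$ and $\mathcal{B}_{\IM H,J}$ hold with probability $\geq 1-\eta$ as required.

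Finally, $JK\geq 1$: with $J=\tfrac{4}{\eta}\hat C\sigma_+$ and $K=\tfrac{1}{\sqrt{2\pi}\sigma_-}$ (or the $\sqrt\pi$ variant) we get $JK=\tfrac{4\hat C}{\eta\sqrt{2\pi}}\cdot\tfrac{\sigma_+}{\sigma_-}\geq \tfrac{4\hat C}{\eta\sqrt{2\pi}}$ since $\sigma_+\geq\sigma_-$; for $\eta<1/2$ and $\hat C$ a fixed universal constant (one can always enlarge $\hat C$ so that $\tfrac{4\hat C}{\sqrt{2\pi}}\geq 1/2$, hence $\geq 1/\eta^{-1}\cdot\ldots$), this exceeds $1$. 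The cleanest route is: either $\hat C$ is already large enough that $JK\geq 1$ automatically, or one replaces $\hat C$ by $\max\{\hat C,\sqrt{2\pi}/2\}$, which only weakens the conclusion (larger $J$). I expect the \textbf{main obstacle} to be purely a matter of reconciling normalization constants --- the factor $\sqrt 2$ from $H=\tfrac{1}{\sqrt2}(A+A^*)$, the variance conventions for $\RE a_{ij}$ versus $a_{ij}$, and the exact statement of Latala's inequality --- rather than any genuine difficulty; once the constants $K$ and $J$ are correctly identified, the verification of each hypothesis of Theorem~\ref{thm: no gaps extended} is immediate, and the Remark handles the fixed zero diagonal of $\IM H$.
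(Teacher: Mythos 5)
Your proposal is correct and follows essentially the same route as the paper: verify the distributional and independence structure of $H$, bound the densities of $\RE h_{ij}$ to get $K$, use Latala's inequality plus Markov to get the boundedness events with $J=\tfrac{4}{\eta}\hat C\sigma_+$, and check $JK\geq 1$ (with $\hat C\geq 1$ without loss of generality). One small technical point: Latala's theorem requires \emph{independent} entries, so it cannot be applied directly to $\RE H$, whose entries satisfy $\RE h_{ij}=\RE h_{ji}$; the paper first writes $\mathbb{E}\|\RE H\|\leq \tfrac{1}{\sqrt 2}\mathbb{E}(\|\RE A\|+\|\RE A^*\|)$ and applies Latala to $\RE A$, which is where the factor $4$ in $J$ comes from. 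Your observation about the $\sqrt 2$ in the density bound is apt --- the off-diagonal $\RE h_{ij}$ have variance $\sigma_{ij}^2/2$, so their density maximum is $\tfrac{1}{\sqrt{\pi}\sigma_{ij}}$ rather than $\tfrac{1}{\sqrt{2\pi}\sigma_{ij}}$ --- but this only changes $K$ by a harmless absolute constant.
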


\begin{proof}
The matrix $H$ is obviously Hermitian, and since $\RE a_{ij}$ and $\RE a_{ji}$ are independent and normally distributed with zero expectation and variance $\sigma_{ij}^2 = \sigma_{ji}^2$ for $i\neq j$, we have
\begin{align}
    \RE h_{ij} = \frac{1}{\sqrt{2}} \left(\RE a_{ij} + \RE \overline{a_{ji}}\right) = \frac{1}{\sqrt{2}}\left(\RE a_{ij} + \RE a_{ji}\right)\sim\mathcal{N}(0,\sigma_{ij}^2/2).
\end{align}
Similarly, one has $\IM h_{ij}\sim\mathcal{N}(0,\sigma_{ij}^2/2)$ and
\begin{align}
    \RE h_{ii} = \sqrt{2} \RE a_{ii} \sim \mathcal{N}(0,\sigma_{ii}^2).
\end{align}
By construction, $\RE h_{ij}$ is independent of the rest of the entries of $\RE H$ except $\RE h_{ji}$, the same holds true for the imaginary parts and obviously $\RE H$ and $\IM H$ are independent.

Latala \cite[Theorem 2]{Latala05} showed that
\begin{align}
    \mathbb{E}\|(x_{ij})\| \leq \hat{C} \left(\max_i \sqrt{\sum_j \mathbb{E}x_{ij}^2} + \max_j \sqrt{\sum_i \mathbb{E}x_{ij}^2} + \sqrt[4]{\sum_{i,j}\mathbb{E}x_{ij}^4}\right)\label{ineq: Latala}
\end{align}
for any finite matrix of independent mean zero random variables $x_{ij}$, where $\hat{C}>0$ is a universal constant. Without loss of generality we can assume that $\hat{C}\geq1$.

With the help of \eqref{ineq: Latala} the expectation of the norm of $\RE H$ can be bounded in the following way:
\begin{subequations}
\begin{align}
    \mathbb{E}\|\RE H\| &\leq \frac{1}{\sqrt{2}}\mathbb{E}\left(\|\RE A\| + \|\RE A^*\|\right)\\
    &\leq \hat{C}\left(\max_i \sqrt{\sum_j \sigma_{ij}^2} + \max_j \sqrt{\sum_i \sigma_{ij}^2} + \sqrt[4]{\sum_{i,j} 3\sigma_{ij}^4}\right)\\
    &\leq  \hat{C}\left(2\sqrt{D\sigma_+^2} + \sqrt[4]{3D^2\sigma_+^4}\right)\\
    &\leq 4\hat{C}\sigma_+ \sqrt{D}.
\end{align}
\end{subequations}
For $0<\eta\leq 1/2$ as in Theorem \ref{thm: no gaps extended} we set $J:=\frac{4}{\eta}\hat{C}\sigma_+$ and find with the help of Markov's inequality that
\begin{align}
    \mathbb{P}\left(\mathcal{B}_{\RE H,J}\right) = 1-\mathbb{P}\left(\|\RE H\| >J\sqrt{D} \right)
    \geq 1-\frac{\mathbb{E}\|\RE H\|}{J\sqrt{D}} &\geq 1-\eta,
\end{align}
i.e., the boundedness event $\mathcal{B}_{\RE H,J}$ holds with probability at least $1-\eta$. In the same way we find that also $\mathcal{B}_{\IM H,J}$ holds with probability at least $1-\eta$.

Clearly, the densities of $\RE H$ are bounded by $K := \frac{1}{\sqrt{2\pi}\sigma_-}$. This is due to the fact that the density function $f$ of the normal distribution with mean $\mu$ and variance $\sigma^2$ attains its maximum at $x=\mu$ with $f(\mu) = \frac{1}{\sqrt{2\pi}\sigma}$. and $\frac{1}{\sqrt{2\pi}\sigma_{ij}}\leq \frac{1}{\sqrt{2\pi}\sigma_-}$ for all $i,j\in[D]$. Moreover, the condition $JK\geq 1$ is obviously fulfilled for our choice of the parameters.
\end{proof}

\begin{rmk}
By choosing the variance large close to the diagonal of the matrix and small far away from it, the matrix $H$ in Theorem \ref{thm: GM bdd} has some kind of band structure. For example, we can fix a monotonically decreasing function $g: [0,1] \to [\sigma_-,\sigma_+]$, a ``variance profile,'' and define
\begin{align}
    \sigma_{ij} := g\left(\frac{|i-j|}{D}\right)\qquad\forall i,j\in[D].
\end{align}
\end{rmk}

So far we only considered matrices whose (Gaussian) entries have mean zero. In the following we want to relax this condition and also allow entries with non-zero mean.

\begin{thm}[Gaussian matrices with non-zero mean]\label{thm: GM nonzero mean}
Let $H_0=(h^0_{ij})$ be a (deterministic) Hermitian matrix with $C_{H_0}$ as in \eqref{const: CH0} and let $V$ be the $D\times D$ random matrix defined in Theorem \ref{thm: GM bdd} (there~$H$). Define $H := H_0+V$. Then $H$ is Hermitian with (non-centered) Gaussian independent (up to conjugate symmetry) entries; more precisely, its entries $h_{ij}$ satisfy $\RE h_{ij} \sim \mathcal{N}(\RE h^0_{ij},\sigma_{ij}^2/2)$, $\IM h_{ij} \sim \mathcal{N}(\IM h^0_{ij}, \sigma_{ij}^2/2)$ for $i\neq j$ and $h_{ii} = \RE h_{ii} \sim \mathcal{N}(h^0_{ii},\sigma_{ii}^2)$. Moreover, $H$ satisfies the assumptions in Theorem~\ref{thm: no gaps extended} with $K=\frac{1}{\sqrt{2\pi}\sigma_-}$ and $J = \frac{1}{\eta}(4\hat{C}\sigma_++C_{H_0})$ with $\hat{C}$ as before.
\end{thm}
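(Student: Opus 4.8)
The plan is to follow the proof of Theorem~\ref{thm: GM bdd} almost verbatim: passing from mean-zero Gaussian entries to entries with deterministic mean $h^0_{ij}$ changes nothing about independence or about the density bounds, and it affects only the operator-norm estimate, where the triangle inequality produces an extra additive $\|H_0\|$-term that is absorbed into the new value of $J$.

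First I would record Hermiticity and the entry distributions. Since $H_0$ is Hermitian and $V=\tfrac{1}{\sqrt2}(A+A^*)$ is Hermitian by construction, $H=H_0+V$ is Hermitian. Writing $h_{ij}=h^0_{ij}+v_{ij}$ and using that adding a real constant to a Gaussian merely translates its mean, I get $\RE h_{ij}=\RE h^0_{ij}+\RE v_{ij}\sim\mathcal{N}(\RE h^0_{ij},\sigma_{ij}^2/2)$ and $\IM h_{ij}=\IM h^0_{ij}+\IM v_{ij}\sim\mathcal{N}(\IM h^0_{ij},\sigma_{ij}^2/2)$ for $i\neq j$, while $h_{ii}=\RE h^0_{ii}+\RE v_{ii}\sim\mathcal{N}(h^0_{ii},\sigma_{ii}^2)$; note $\IM h^0_{ii}=0$ since $H_0$ is Hermitian, so the imaginary diagonal of $H$ is fixed at $0$, which is permitted by the first remark following Theorem~\ref{thm: no gaps extended}. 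Because $H_0$ is deterministic, the independence structure of $\RE H$ and $\IM H$ — each entry independent of all others except its mirror image, and $\RE H$ independent of $\IM H$ — is inherited directly from $V$. For the density: the maximal value of the density of $\mathcal{N}(m,s^2)$ is $1/(\sqrt{2\pi}\,s)$ regardless of the mean $m$, so the densities $\varrho^{\RE}_{ij}$ are bounded by the same $K:=1/(\sqrt{2\pi}\,\sigma_-)$ as in Theorem~\ref{thm: GM bdd}.

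Next I would treat the boundedness events. By the triangle inequality and linearity of expectation, $\mathbb{E}\|\RE H\|\leq\|\RE H_0\|+\mathbb{E}\|\RE V\|$. From the definition \eqref{const: CH0} we have $\|\RE H_0\|\leq C_{H_0}\sqrt{D}$, and from the Latala-based computation in the proof of Theorem~\ref{thm: GM bdd} we have $\mathbb{E}\|\RE V\|\leq 4\hat{C}\sigma_+\sqrt{D}$, hence
\begin{align}
\mathbb{E}\|\RE H\|\leq\left(4\hat{C}\sigma_++C_{H_0}\right)\sqrt{D}.
\end{align}
Setting $J:=\tfrac{1}{\eta}(4\hat{C}\sigma_++C_{H_0})$ and using Markov's inequality gives $\mathbb{P}(\mathcal{B}_{\RE H,J})=1-\mathbb{P}(\|\RE H\|>J\sqrt{D})\geq 1-\eta$, and the same argument applied to $\IM H=\IM H_0+\IM V$ (using $\|\IM H_0\|\leq C_{H_0}\sqrt{D}$) yields $\mathbb{P}(\mathcal{B}_{\IM H,J})\geq 1-\eta$. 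Finally I would check $JK\geq 1$: since $\eta\leq 1/2$, $\hat{C}\geq 1$ and $\sigma_+\geq\sigma_-$, one has $JK\geq 2\cdot 4\hat{C}\sigma_+/(\sqrt{2\pi}\,\sigma_-)\geq 8/\sqrt{2\pi}>1$. Collecting these facts shows that $H$ satisfies all hypotheses of Theorem~\ref{thm: no gaps extended} with the stated $K$ and $J$.

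There is essentially no obstacle here; this is a routine extension. The only points requiring a word of care are confirming that the hypotheses of Theorem~\ref{thm: no gaps extended} are met literally — namely that the fixed zero imaginary diagonal is admissible (covered by the remark) and that non-zero means spoil neither the independence pattern nor the density bound (they do not) — and that the constant $JK\geq 1$ needed in that theorem still holds, which it does by the elementary estimate above. Everything else is the triangle inequality together with Markov's inequality, exactly as in Theorem~\ref{thm: GM bdd}.
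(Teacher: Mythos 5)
Your proof is correct and follows essentially the same route as the paper's: triangle inequality combined with the Latala-based bound from Theorem~\ref{thm: GM bdd} to estimate $\mathbb{E}\|\RE H\|\leq(4\hat{C}\sigma_++C_{H_0})\sqrt{D}$, then Markov's inequality for the boundedness events, with $K$ unchanged since translating a Gaussian does not alter its maximal density. Your explicit verification of $JK\geq 1$ and the remark about the fixed zero imaginary diagonal are slightly more detailed than the paper's, but the argument is the same.
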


\begin{proof}
The claims concerning the distribution of the entries of $H$ are obvious. With the help of the computation in the proof of Theorem \ref{thm: GM bdd} we find that
\begin{subequations}
\begin{align}
    \mathbb{E}\|\RE H\| &\leq \mathbb{E}\|\RE H_0\| + \|\RE V\|\\
    &\leq \left(4\hat{C}\sigma_++C_{H_0}\right)\sqrt{D}
\end{align}
\end{subequations}
For $0<\eta\leq 1/2$ as in Theorem~\ref{thm: no gaps extended} we set $J:=\frac{1}{\eta}\left(4\hat{C}\sigma_++C_{H_0}\right)$ and obtain
\begin{align}
\mathbb{P}\left(\mathcal{B}_{\RE H,J}\right) \geq 1-\eta,
\end{align}
i.e., the boundedness event $\mathcal{B}_{\RE H,J}$ holds with probability at least $1-\eta$. The parameter $K$ has already been computed in Theorem~\ref{thm: GM bdd} and with this choice the condition $KJ\geq 1$ is again automatically fulfilled. This finishes the proof.
\end{proof}

Theorem \ref{thm: GM nonzero mean} covers, for example, the case that $H_0$ is a matrix with a band structure in a basis that diagonalizes the projections $P_\mu$ and $V$ is a Gaussian matrix with mean zero and small variances, i.e., a small Gaussian perturbation with entries small in the $L^2$-sense.

\section{More General Results and Proofs \label{sec: more gen and proofs}}

In this section we state and prove a more general version of Theorem \ref{thm: main new} (Theorem~\ref{thm: main gen} in Section~\ref{sec:5.1}), some corollaries thereof and prove Theorem \ref{thm: main new}. We then state a more general version of Theorem~\ref{thm: dyntyp rel 2} (Theorem~\ref{thm:11} in Section~\ref{sec:5.2}) and prove Theorem~\ref{thm: dyntyp rel 2}.

\subsection{Generalized Normal Typicality}\label{sec:5.1}

Let $H = (h_{ij})\in M_{D\times D}(\mathbb{C})$ be a random Hermitian matrix. Instead of Assumption~\ref{asm: Hamiltonian}, we now make the following weaker assumption (which follows if Assumption~\ref{asm: Hamiltonian} holds):

\begin{asm}\label{asm: ind. and cont. distr.}
The random variables $(\RE h_{ij})_{i\leq j}$ and $(\IM h_{ij})_{i<j}$ are mutually independent and continuously distributed. The densities $\varrho_{ij}^{\RE}$ of $\RE h_{ij}$ are bounded.
\end{asm}

Let $B^+$ and $B^-$ denote the positive and negative part of $B$ such that $B=B^+-B^-$. Recall that $d_\mu$ and $d_\nu$ denote the dimensions of the macro spaces $\Hilbert_\mu$ and $\Hilbert_\nu$.

\begin{thm} \label{thm: main gen}
Let $\varepsilon'\in(0,\frac{1}{2})$ and let $\mu$ be an arbitrary macro state. Let $B\in M_{D\times D}(\mathbb{C})$ be a Hermitian $D\times D$ matrix and let $H=(h_{ij})$ be a random Hermitian $D\times D$ matrix such that Assumption \ref{asm: ind. and cont. distr.} is satisfied. Let $K$ be the least upper bound for the densities $\varrho_{ij}^{\RE}$, i.e.
\begin{align}
    K:=\sup\bigcup_{i\leq j}\left\{\varrho_{ij}^{\RE}(x): x\in\mathbb{R}\right\}<\infty.
\end{align}
Let $d_\mu>\max\left\{166, 2|\log_2\varepsilon'|\right\}$. Moreover, let $\eta\in(0,\frac{1}{2})$ be the unique number that solves
\begin{align}
    1-\varepsilon' = \left(1-2^{-d_\mu/2}\right)(1-\eta)^4
\end{align}
and let $J^{\RE}, J^{\IM}\in(0,\infty)$ be the unique numbers such that
\begin{align}
    \mathbb{P}\left(\|\RE H\|\leq J^{\RE}\sqrt{D}\right) &= 1-\eta,\\
    \mathbb{P}\left(\|\IM H\|\leq J^{\IM}\sqrt{D}\right) &= 1-\eta.
\end{align}
Set $J:=\max\left\{K^{-1},J^{\RE},J^{\IM}\right\}$. Then with probability at least $1-\varepsilon'$,
\begin{align}
    |M_{\mu B}| \geq \max\left\{b_{\min}^{+},\left(\frac{d_\mu}{4c_c \sqrt{KJ}D}\right)^{16}\frac{\tr(B^{+})}{d_\mu}\right\}-\min\left\{b_{\max}^{-}, \frac{\tr(B^{-})}{d_\mu}\right\},\label{ineq: MmuB lb}
\end{align}
where $M_{\mu B}$ was defined in \eqref{eq: MmuB}, $b_{\min}^{+}$ and $b_{\max}^{-}$ denote the smallest and largest eigenvalue of $B^{+}$ and $B^-$ respectively and $c_c\geq 1$ is the constant in Theorem \ref{thm: no gaps extended}.
In particular, if $B=P_\nu$ for some macro state $\nu$, then
\begin{align}
    M_{\mu\nu} \geq \frac{d_\nu}{d_\mu}\left(\frac{d_\mu}{4c_c \sqrt{KJ}D}\right)^{16}.\label{ineq: lb1}
\end{align}
Moreover, if $d_\nu\geq 166$, then for any $\eta\in(0,\frac{1}{2})$ (and $J$ chosen as above) it holds with probability at least $\left(1-2^{-d_\nu/2}\right)(1-\eta)^4$ that
\begin{align}
    M_{\mu\nu} \geq \left(\frac{d_\nu}{4c_c\sqrt{KJ}D}\right)^{16}.\label{ineq: lb2}
\end{align}
\end{thm}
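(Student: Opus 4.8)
The plan is to reduce the bound on $M_{\mu B}$ to a statement about restricted coordinate norms of the eigenvectors of $H$ and then invoke Theorem~\ref{thm: no gaps extended}. Since $H$ has an absolutely continuous distribution it is almost surely non‑degenerate (Appendix~\ref{app: no res}), so one may fix an orthonormal eigenbasis $(\phi_n)_{n=1}^D$ of $H$ and, using $\Pi_e=|\phi_e\rangle\langle\phi_e|$, write $M_{\mu B}=\frac{1}{d_\mu}\sum_n\langle\phi_n|P_\mu|\phi_n\rangle\langle\phi_n|B|\phi_n\rangle$ (and similarly with $B$ replaced by $B^\pm$). The key point is that $\langle\phi_n|P_\mu|\phi_n\rangle=\|P_\mu\phi_n\|^2=\sum_{j\in I_\mu}|\langle\phi_n|j\rangle|^2=\|(\phi_n)_{I_\mu}\|^2$ is exactly a restricted‑coordinate norm of the eigenvector with $|I_\mu|=d_\mu$, which is the quantity Theorem~\ref{thm: no gaps extended} bounds from below.

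Next I would check that $H$ meets the hypotheses of Theorem~\ref{thm: no gaps extended} with the stated $K$ and $J=\max\{K^{-1},J^{\RE},J^{\IM}\}$: the independence and continuity of $\RE H$ and $\IM H$ are part of Assumption~\ref{asm: ind. and cont. distr.} (the Hermitian symmetry is compatible, and the fixed zero diagonal of $\IM H$ is covered by the remark following Theorem~\ref{thm: no gaps extended}); the densities of $\RE h_{ij}$ are $\le K$ by the choice of $K$; $JK\ge 1$ since $J\ge K^{-1}$; and $\mathbb{P}(\mathcal{B}_{\RE H,J})\ge\mathbb{P}(\|\RE H\|\le J^{\RE}\sqrt D)=1-\eta$, likewise for $\IM H$. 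I would then apply the theorem with $\kappa=d_\mu/(2D)$ and $s=1/(2c_c\sqrt{KJ})$: here $\kappa\in(83/D,\tfrac12)$ uses $166<d_\mu<D$ (if $d_\mu=D$ then $\|P_\mu\phi_n\|^2=1$ and the bound below is trivial), and $s\in(0,1]$ uses $c_c\ge1$, $KJ\ge1$. Since $(c_c\sqrt{KJ}\,s)^{\kappa D}=2^{-\kappa D}=2^{-d_\mu/2}$, the conclusion of Theorem~\ref{thm: no gaps extended} holds with probability at least $(1-2^{-d_\mu/2})(1-\eta)^4=1-\varepsilon'$ (the last equality being the defining relation of $\eta$), and on that event $\|P_\mu\phi_n\|^2=\|(\phi_n)_{I_\mu}\|^2\ge(\kappa s)^{16}=(d_\mu/4c_c\sqrt{KJ}D)^{16}=:\beta$ for every $n$.

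On this event I would conclude by splitting $B=B^+-B^-$. For $B^+$: from $\langle\phi_n|B^+|\phi_n\rangle\ge0$, $\langle\phi_n|P_\mu|\phi_n\rangle\ge\beta$ and $\sum_n\langle\phi_n|B^+|\phi_n\rangle=\tr(B^+)$ one gets $M_{\mu B^+}\ge\beta\,\tr(B^+)/d_\mu$; moreover if $b_{\min}^+>0$ then $B$ is positive definite, $B^+=B\ge b_{\min}^+\mathbbm{1}$, and using $\sum_n\langle\phi_n|P_\mu|\phi_n\rangle=\tr P_\mu=d_\mu$ one gets $M_{\mu B^+}\ge b_{\min}^+$; hence $M_{\mu B^+}\ge\max\{b_{\min}^+,\beta\,\tr(B^+)/d_\mu\}$. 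For $B^-$: from $\langle\phi_n|P_\mu|\phi_n\rangle\le1$ one gets $M_{\mu B^-}\le\tr(B^-)/d_\mu$, and from $B^-\le b_{\max}^-\mathbbm{1}$ one gets $M_{\mu B^-}\le b_{\max}^-$; hence $M_{\mu B^-}\le\min\{b_{\max}^-,\tr(B^-)/d_\mu\}$. Subtracting gives $M_{\mu B}\ge$ (the right‑hand side of \eqref{ineq: MmuB lb}), and since $|M_{\mu B}|\ge M_{\mu B}$ and $|M_{\mu B}|\ge0$ this yields \eqref{ineq: MmuB lb} (trivially when that right‑hand side is $\le0$). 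Taking $B=P_\nu$ makes $B^-=0$, $b_{\min}^+=0$ and $\tr(B^+)=d_\nu$, which is \eqref{ineq: lb1}. For \eqref{ineq: lb2} I would instead apply Theorem~\ref{thm: no gaps extended} with $\kappa=d_\nu/(2D)$ (whence $d_\nu\ge166$) and the same $s$, obtaining with probability at least $(1-2^{-d_\nu/2})(1-\eta)^4$ that $\|P_\nu\phi_n\|^2\ge(d_\nu/4c_c\sqrt{KJ}D)^{16}=:\beta'$ for all $n$, and then $M_{\mu\nu}=\frac1{d_\mu}\sum_n\langle\phi_n|P_\mu|\phi_n\rangle\|P_\nu\phi_n\|^2\ge\frac{\beta'}{d_\mu}\sum_n\langle\phi_n|P_\mu|\phi_n\rangle=\beta'$.

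The $B^\pm$‑bookkeeping and the hypothesis check are routine once Theorem~\ref{thm: no gaps extended} is available. The step that needs care is the parameter choice: $\kappa$ and $s$ must be tuned so that the failure probability comes out as exactly $(1-2^{-d_\mu/2})(1-\eta)^4$ while keeping $\kappa\in(83/D,\tfrac12)$ and $s\in(0,1]$, and so that $(\kappa s)^{16}$ equals the claimed constant. This is precisely what dictates the hypotheses $d_\mu>166$ (forcing $\kappa>83/D$) and $d_\mu>2|\log_2\varepsilon'|$ (so that the defining equation for $\eta$ is solvable in $(0,\tfrac12)$). The genuinely hard ingredient—the no‑gaps delocalization estimate itself—is not re‑proved here but supplied by Theorem~\ref{thm: no gaps extended}.
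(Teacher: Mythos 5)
Your proposal is correct and follows essentially the same route as the paper's proof: the same reduction of $M_{\mu B}$ to $\frac{1}{d_\mu}\sum_n\langle\phi_n|P_\mu|\phi_n\rangle\langle\phi_n|B^\pm|\phi_n\rangle$, the same parameter choices $\kappa=d_\mu/(2D)$ (resp.\ $d_\nu/(2D)$) and $s=1/(2c_c\sqrt{KJ})$ in Theorem~\ref{thm: no gaps extended}, and the same four elementary bounds on the positive and negative parts of $B$. The only cosmetic difference is that you make the edge case $d_\mu=D$ and the solvability of the defining equation for $\eta$ slightly more explicit than the paper does.
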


The lower bound for $|M_{\mu B}|$ is obviously nontrivial for positive (and negative) operators $B$ but also for operators with positive and negative eigenvalues provided that the spectrum satisfies certain assumptions, e.g. if $b_{\min}^{+} > b_{\max}^{-}$.

Note the second lower bound \eqref{ineq: lb2} is sharper than \eqref{ineq: lb1} if $d_\nu \geq d_\mu$.
By combining the lower bounds in Theorem \ref{thm: main gen} with the upper bound from Theorem \ref{thm: abs errors A}, keeping in mind that, with probability 1, $D_E=1$ and $d_E=D$, we immediately obtain the following corollary:

\begin{cor}\label{cor: rel err}
Let $\varepsilon, \delta,\kappa, T>0, \varepsilon' \in (0,\frac{1}{2})$ and let $\mu$ be an arbitrary macro state. Let $B\in M_{D\times D}(\mathbb{C})$ be a Hermitian $D\times D$ matrix and let $H$ be a random Hermitian $D\times D$ matrix such that Assumption \ref{asm: ind. and cont. distr.} is satisfied. Let $d_\mu > \max\{166,2|\log_2\varepsilon'|\}$ and let $K,J>0$ and $\eta\in(0,\frac{1}{2})$ be defined as in Theorem~\ref{thm: main gen}. Then with probability at least $1-\varepsilon'$, $(1-\varepsilon)$-most $\psi_0\in\mathbb{S}(\Hilbert_\mu)$ are such that for $(1-\delta)$-most $t\in [0,T]$ 
\begin{align}
    \frac{\bigl|\langle\psi_t|B|\psi_t\rangle - M_{\mu B}\bigr|}{|M_{\mu B}|} \leq \frac{4 \left(\frac{ G(\kappa)\|B\|}{\delta \varepsilon d_\mu}\left(1+\frac{8\log_2 D}{\kappa T}\right)\min\left\{\|B\|, \frac{\tr(|B|)}{d_\mu}\right\}\right)^{1/2}}{\max\left\{b_{\min}^{+},\left(\frac{d_\mu}{4c_c\sqrt{KJ}D}\right)^{16}\frac{\tr(B^{+})}{d_\mu}\right\}-\min\left\{b_{\max}^{-}, \frac{\tr(B^{-})}{d_\mu}\right\}},
\end{align} 
whenever the denominator of the right-hand side is positive; here $c_c\geq 1$ is the constant in Theorem \ref{thm: no gaps extended}.
\end{cor}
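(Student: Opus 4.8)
The plan is to obtain the corollary directly by dividing the \emph{upper} bound on the absolute error $\bigl|\langle\psi_t|B|\psi_t\rangle - M_{\mu B}\bigr|$ supplied by Theorem~\ref{thm: abs errors A} by the \emph{lower} bound on $|M_{\mu B}|$ supplied by Theorem~\ref{thm: main gen}, and then checking that the two ``good'' events — one a probability-one statement about the spectrum of $H$, the other a probability-$(1-\varepsilon')$ statement about $M_{\mu B}$ — compose as claimed. No new estimate is required; the content is entirely in the two cited theorems.

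First I would invoke the no-resonances result of Appendix~\ref{app: no res}: since Assumption~\ref{asm: ind. and cont. distr.} makes the joint law of the entries of $H$ absolutely continuous, with probability $1$ the random matrix $H$ has no degenerate eigenvalues and no degenerate gaps, so $D_E=1$ and the number $d_E$ of distinct eigenvalues equals $D$. On this probability-one event, Theorem~\ref{thm: abs errors A} — which holds for \emph{every} Hermitian matrix, with $D_E$, $G(\kappa)$, $d_E$ read off from that matrix — asserts that $(1-\varepsilon)$-most $\psi_0\in\mathbb{S}(\Hilbert_\mu)$ are such that for $(1-\delta)$-most $t\in[0,T]$
\begin{align}
\bigl|\langle\psi_t|B|\psi_t\rangle - M_{\mu B}\bigr| \;\leq\; 4\left(\frac{G(\kappa)\|B\|}{\delta\varepsilon d_\mu}\left(1+\frac{8\log_2 D}{\kappa T}\right)\min\left\{\|B\|,\tfrac{\tr(|B|)}{d_\mu}\right\}\right)^{1/2}=:\mathcal{N},
\end{align}
i.e.\ the numerator in the statement. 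Second I would apply Theorem~\ref{thm: main gen} with the prescribed $K$, $J$, $\eta$: with probability at least $1-\varepsilon'$,
\begin{align}
|M_{\mu B}| \;\geq\; \max\left\{b_{\min}^{+},\left(\tfrac{d_\mu}{4c_c\sqrt{KJ}D}\right)^{16}\tfrac{\tr(B^{+})}{d_\mu}\right\}-\min\left\{b_{\max}^{-},\tfrac{\tr(B^{-})}{d_\mu}\right\}=:\mathcal{D},
\end{align}
i.e.\ the denominator. Intersecting this event with the probability-one no-resonances event still leaves probability at least $1-\varepsilon'$; on that intersection, and for $(1-\varepsilon)$-most $\psi_0$ and $(1-\delta)$-most $t$, one divides the first display by $|M_{\mu B}|$ and uses $|M_{\mu B}|\geq\mathcal{D}$ to conclude $\bigl|\langle\psi_t|B|\psi_t\rangle - M_{\mu B}\bigr|/|M_{\mu B}|\leq\mathcal{N}/\mathcal{D}$, valid precisely when $\mathcal{D}>0$, which is the hypothesis under which the claim is asserted.

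I do not expect a genuine obstacle. The only points requiring a moment of care are (i) getting the order of quantifiers right, so that ``$(1-\varepsilon)$-most $\psi_0$'' and ``$(1-\delta)$-most $t$'' sit \emph{inside} the high-probability event for $H$ — this is automatic because Theorem~\ref{thm: abs errors A} is a pointwise-in-$H$ statement — and (ii) recording that the stated positivity of $\mathcal{D}$ is exactly what licenses dividing by $|M_{\mu B}|$ on the good event.
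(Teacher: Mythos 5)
Your proposal is correct and matches the paper's route exactly: the paper also obtains Corollary~\ref{cor: rel err} by combining the absolute-error bound of Theorem~\ref{thm: abs errors A} with the lower bound on $|M_{\mu B}|$ from Theorem~\ref{thm: main gen}, using that with probability $1$ one has $D_E=1$ and $d_E=D$ by the no-resonances result. The bookkeeping of the events and quantifiers you spell out is the same implicit argument the paper compresses into ``we immediately obtain.''
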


The next corollary for the special case that $B=P_\nu$ for some macro state $\nu$ follows from combining the lower bounds \eqref{ineq: lb1} and \eqref{ineq: lb2} for $M_{\mu\nu}$, which yield with probability at least $(1-2^{-d_\nu/2}-2^{-d_\mu/2})(1-\eta)^4$
\begin{align}
    M_{\mu\nu} \geq \frac{d_\nu}{\max\{d_\nu,d_\mu\}} \left(\frac{\max\left\{d_\nu,d_\mu\right\}}{4c_c\sqrt{KJ}D}\right)^{16} = \left(\frac{\max\left\{d_\nu,d_\mu\right\}}{4c_c\sqrt{KJ}D}\right)^{16} \min\left\{1,\frac{d_\nu}{d_\mu}\right\}, \label{ineq: Mmunu combined}
\end{align}
with the upper bound for the absolute errors from Theorem \ref{thm: GNT}, keeping again in mind that for random matrices with continuously distributed entries, it holds with probability 1 that $D_E=D_G=1$.

\begin{cor}[Generalized normal typicality: relative errors]\label{cor: GNT rel err}
Let $\varepsilon,\delta>0$, $\varepsilon' \in (0,\frac{1}{2})$ and let $\mu, \nu$ be two macro states such that $d_\mu,d_\nu>\max\left\{166,2|\log_2(\varepsilon'/\sqrt{2})|\right\}$. Let $H$ be a random Hermitian $D\times D$ matrix such that Assumption \ref{asm: ind. and cont. distr.} is satisfied and let $K>0$ be defined as in Theorem \ref{thm: main gen}. Moreover, let $\eta \in (0,\frac{1}{2})$ be the unique number that solves
\begin{align}
    1-\varepsilon' = \left(1-2^{-d_\mu/2}-2^{-d_\nu/2}\right)(1-\eta)^4
\end{align}
and let $J>0$ be defined as in Theorem \ref{thm: main gen}. Then with probability at least $1-\varepsilon'$, $(1-\varepsilon)$-most $\psi_0\in\mathbb{S}(\Hilbert_\mu)$ are such that for $(1-\delta)$-most $t\in[0,\infty)$
\begin{align}
    \frac{\bigl|\|P_\nu\psi_t\|^2-M_{\mu\nu}\bigr|}{M_{\mu\nu}} \leq \frac{4}{\sqrt{\varepsilon\delta \min\{d_\mu,d_\nu\}}} \left(\frac{4c_c\sqrt{KJ}D}{\max\{d_\mu,d_\nu\}}\right)^{16},\label{ineq: GNT rel}
\end{align}
where $c_c\geq 1$ is the constant in Theorem \ref{thm: no gaps extended}.
\end{cor}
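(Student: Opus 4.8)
The proof is a division: divide the upper bound on the \emph{absolute} error from Theorem~\ref{thm: GNT} by the high-probability lower bound on $M_{\mu\nu}$ from Theorem~\ref{thm: main gen}, and simplify. I would set this up in two steps.

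\emph{Step 1 (lower bound on $M_{\mu\nu}$).} I would take $\eta\in(0,\tfrac12)$ as in the statement (well-defined by the hypotheses $d_\mu,d_\nu>\max\{166,2|\log_2(\varepsilon'/\sqrt2)|\}$, which make $2^{-d_\mu/2}+2^{-d_\nu/2}$ small enough for the defining equation $1-\varepsilon'=(1-2^{-d_\mu/2}-2^{-d_\nu/2})(1-\eta)^4$ to be solvable with $0<\eta<\tfrac12$, and which also license invoking both \eqref{ineq: lb1} and \eqref{ineq: lb2}), together with the associated $K$ and $J=\max\{K^{-1},J^{\RE},J^{\IM}\}$ from Theorem~\ref{thm: main gen}. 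For this fixed $J$, the bound \eqref{ineq: lb1} holds with probability at least $(1-2^{-d_\mu/2})(1-\eta)^4$ and, by the last part of Theorem~\ref{thm: main gen}, the bound \eqref{ineq: lb2} holds with probability at least $(1-2^{-d_\nu/2})(1-\eta)^4$. Both events arise by first conditioning on the boundedness events $\mathcal{B}_{\RE H,J}\cap\mathcal{B}_{\IM H,J}$ (probability at least $(1-\eta)^4$) and then excluding a no-gaps-delocalization failure event, so a union bound taken \emph{inside} that conditioning yields \eqref{ineq: Mmunu combined}, namely
\[
M_{\mu\nu}\;\geq\;\left(\frac{\max\{d_\mu,d_\nu\}}{4c_c\sqrt{KJ}\,D}\right)^{16}\min\left\{1,\frac{d_\nu}{d_\mu}\right\},
\]
with probability at least $(1-2^{-d_\mu/2}-2^{-d_\nu/2})(1-\eta)^4=1-\varepsilon'$.

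\emph{Step 2 (absolute error and division).} On the intersection of the event from Step~1 with the probability-one event $\{D_E=D_G=1\}$ for continuously distributed $H$ (Section~\ref{sec: prelim}), Theorem~\ref{thm: GNT} applies verbatim and gives, for $(1-\varepsilon)$-most $\psi_0\in\SSS(\Hilbert_\mu)$ and $(1-\delta)$-most $t\in[0,\infty)$,
\[
\bigl|\|P_\nu\psi_t\|^2-M_{\mu\nu}\bigr|\;\leq\;4\left(\frac{1}{\varepsilon\delta\,d_\mu}\min\left\{1,\frac{d_\nu}{d_\mu}\right\}\right)^{1/2}.
\]
Dividing this by the lower bound of Step~1 and writing $m:=\min\{1,d_\nu/d_\mu\}$, the factor $m$ enters as $m^{1/2}$ in the numerator and as $m$ in the denominator, contributing $m^{-1/2}$; since $d_\mu\,m=\min\{d_\mu,d_\nu\}$, the $\sqrt{d_\mu}$ from $\sqrt{\varepsilon\delta d_\mu}$ combines with $m^{-1/2}$ into $(\min\{d_\mu,d_\nu\})^{-1/2}$, and pulling the sixteenth power out of the denominator leaves exactly the right-hand side of \eqref{ineq: GNT rel}.

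I do not expect a genuine obstacle; the content is essentially bookkeeping on top of Theorems~\ref{thm: GNT} and~\ref{thm: main gen}. The one point needing care is the probability accounting in Step~1: the two estimates \eqref{ineq: lb1} and \eqref{ineq: lb2} must be used with a \emph{single} value of $J$, and their failure probabilities added only after conditioning on the boundedness events, so that the $(1-\eta)^4$ factor is counted once rather than squared. This is permitted precisely because Theorem~\ref{thm: main gen} formulates \eqref{ineq: lb2} for an arbitrary $\eta\in(0,\tfrac12)$ with $J$ chosen as there, i.e.\ with the very $J$ that governs \eqref{ineq: lb1}; the remaining inequalities involving $\min$ and $\max$ are elementary.
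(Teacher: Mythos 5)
Your proposal matches the paper's own proof: the paper likewise obtains \eqref{ineq: Mmunu combined} by combining \eqref{ineq: lb1} and \eqref{ineq: lb2} with a union bound at probability $(1-2^{-d_\mu/2}-2^{-d_\nu/2})(1-\eta)^4=1-\varepsilon'$, invokes $D_E=D_G=1$ almost surely, and divides the absolute-error bound of Theorem~\ref{thm: GNT} by that lower bound. Your arithmetic with $m=\min\{1,d_\nu/d_\mu\}$ and the probability accounting are both correct, so there is nothing to add.
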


We now give the proofs of Theorem \ref{thm: main gen} and of one of our main theorems, Theorem~\ref{thm: main new}.

\begin{proof}[Proof of Theorem \ref{thm: main gen}]
We can write the Hamiltonian as $H=\sum_n E_n|\phi_n\rangle\langle\phi_n|$, where $(\phi_n)$ is an orthonormal basis of eigenvectors of $H$ with eigenvalues $E_n\in\mathbb{R}$. Moreover, we obtain with the reverse triangle inequality that
\begin{align}
    |M_{\mu B}| &\geq \frac{1}{d_\mu}\left(\sum_n \langle\phi_n|P_\mu|\phi_n\rangle\langle\phi_n|B^{+}|\phi_n\rangle - \sum_m \langle\phi_m|P_\mu|\phi_m\rangle\langle\phi_m|B^{-}|\phi_m\rangle\right).\label{ineq: MmuA splitting}
\end{align}
Note that we used here the fact that, with probability 1, the eigenvalues of $H$ are distinct.
Set $\kappa := \frac{d_\mu}{2D}$ and $s:=\frac{1}{2c_c\sqrt{KJ}}$, where $c_c\geq 1$ is the constant in Theorem \ref{thm: no gaps extended}. With these choices all assumptions in Theorem \ref{thm: no gaps extended} are fulfilled\footnote{Note that the factor $1/2$ in the definition of $\kappa$ ensures that $\kappa <1/2$ since $d_\mu/D$ can be arbitrarily close (but not equal to) 1. This is because the trivial case in which there is only one macro space is excluded here since in this case we obviously have normal typicality and there is nothing to show at all.} and with
\begin{align}
    I_\mu = \{d_1+\dots+d_{\mu-1}+1,\dots,d_1+\dots+d_\mu\}
\end{align}
it follows that with probability at least
\begin{align}
\left(1-\left(c_c\sqrt{KJ}s\right)^{\kappa D}\right)(1-\eta)^4 = \left(1-2^{-d_\mu/2}\right)(1-\eta)^4 = 1-\varepsilon'    
\end{align}
it holds that
\begin{align}
    \langle\phi_n|P_\mu|\phi_n\rangle = \sum_{j\in I_\mu} |\phi_n(j)|^2 = \|(\phi_n)_{I_\mu}\|^2 \geq (\kappa s)^{16}.
\end{align}
Remember that we assume that the Hamiltonian is written in a basis that diagonalizes the projections onto the macro spaces as in Figure \ref{fig:bandmatrix}.
We find the following lower bounds for the first and upper bounds for the second sum in \eqref{ineq: MmuA splitting}:
\begin{subequations}
\begin{align}
    \sum_n \langle\phi_n|P_\mu|\phi_n\rangle\langle\phi_n|B^{+}|\phi_n\rangle &\geq b_{\textup{min}}^{+} d_\mu,\label{ineq: Tmain 1}\\
    \sum_n \langle\phi_n|P_\mu|\phi_n\rangle \langle\phi_n|B^{+}|\phi_n\rangle &\geq (\kappa s)^{16} \tr(B^{+}),\\
    \sum_m \langle\phi_m|P_\mu|\phi_m\rangle\langle\phi_m|B^{-}|\phi_m\rangle &\leq b_{\textup{max}}^{-} d_\mu,\label{ineq: Tmain 3}\\
    \sum_m \langle\phi_m|P_\mu|\phi_m\rangle\langle\phi_m|B^{-}|\phi_m\rangle &\leq \tr(B^{-})\label{ineq: Tmain 4}.
\end{align}
\end{subequations}
A combination of these bounds yields
\begin{align}
    |M_{\mu B}| &\geq \max\left\{b_{\min}^{+},(\kappa s)^{16}\frac{\tr(B^{+})}{d_\mu}\right\} - \min\left\{b_{\max}^{-},\frac{\tr(B^{-})}{d_\mu}\right\}\\
    &= \max\left\{b_{\min}^{+}, \left(\frac{d_\mu}{4c_c\sqrt{KJ}D}\right)^{16} \frac{\tr(B^{+})}{d_\mu}\right\} - \min\left\{b_{\textup{max}}^{-},\frac{\tr(B^{-})}{d_\mu}\right\}.
\end{align}
In particular, if $B=P_\nu$ for some macro state $\nu$, then $B^-=0$, $b_{\min}^+=0$, $\tr(B^+)=d_\nu$ and thus
\begin{align}
    M_{\mu\nu} \geq \frac{d_\nu}{d_\mu} \left(\frac{d_\mu}{4c_c\sqrt{KJ}D}\right)^{16}.
\end{align}
Note that here no absolute value on the left-hand side is needed since it immediately follows from the definition of the $M_{\mu\nu}$ that $M_{\mu\nu}\geq 0$.
If $d_\nu\geq 166$, we set $\kappa := \frac{d_\nu}{2D}$ and obtain with Theorem \ref{thm: no gaps extended} with probability at least $\left(1-2^{-d_\nu/2}\right)(1-\eta)^4$
\begin{align}
    \langle\phi_n|P_\nu|\phi_n\rangle \geq (\kappa s)^{16} = \left(\frac{d_\nu}{4c_c\sqrt{KJ}D}\right)^{16}
\end{align}
and therefore
\begin{align}
    M_{\mu\nu} = \frac{1}{d_\mu}\sum_n \langle\phi_n|P_\nu|\phi_n\rangle \langle\phi_n|P_\mu|\phi_n\rangle \geq \left(\frac{d_\nu}{4c_c\sqrt{KJ}D}\right)^{16}.
\end{align}
\end{proof}

\begin{proof}[Proof of Theorem \ref{thm: main new}]
First note that since $H$ satisfies Assumption \ref{asm: Hamiltonian} it obviously also satisfies Assumption \ref{asm: ind. and cont. distr.} and that $d_\nu,d_\mu>\max\{166,2|\log_2(\varepsilon'/\sqrt{2})|\}$ by assumption. Let $\eta\in(0,\frac{1}{2})$ be the unique number that solves
\begin{align}
    1-\varepsilon'= \left(1-2^{-2_\mu/2}-2^{-d_\nu/2}\right)(1-\eta)^4.\label{eq: 1-eps'}
\end{align}
Then it follows from Theorem \ref{thm: no gaps extended} as in the proof of Theorem \ref{thm: main gen} that with probability at least $1-\varepsilon'$,
\begin{align}
    M_{\mu\nu} \geq \left(\frac{\max\{d_\nu,d_\mu\}}{4c_c\sqrt{KJ}D}\right)^{16} \min\left\{1,\frac{d_\nu}{d_\mu}\right\},\label{ineq: Mmunu combined 2}
\end{align}
see also \eqref{ineq: Mmunu combined}. In order to arrive at the form \eqref{mainnewMmunu}, note first that with $C_{H_0}$ as defined in \eqref{const: CH0} and with the help of Theorem~\ref{thm: GM nonzero mean}, the factor $KJ$ in \eqref{ineq: Mmunu combined 2} is given by 
\begin{align}
    KJ = \frac{1}{\sqrt{2\pi}\sigma_- \eta}\left(4\hat{C}\sigma_++C_{H_0}\right).\label{ineq: bound constant}
\end{align}
Next we derive a lower bound for $\eta$ in terms of $\varepsilon'$ in order to eliminate $\eta$ from \eqref{ineq: bound constant}. Therefore observe that $d_\nu,d_\mu > 4|\log_2 (\varepsilon'/\sqrt{2})| = -4\log_2(\varepsilon'/\sqrt{2})$ and with \eqref{eq: 1-eps'} it follows that
\begin{align}
    1-\varepsilon' \geq \left(1-\varepsilon'^2\right)(1-\eta)^4.
\end{align}
Solving for $\eta$ yields
\begin{align}
    \eta \geq 1- \frac{1}{\left(1+\varepsilon'\right)^{1/4}} \geq \frac{\varepsilon'}{6}.
\end{align}
In the last step it was used that for $f,g:[0,\frac{1}{2}]\to\mathbb{R}$, $f(x)=1-\frac{1}{(1+x)^{1/4}}$, $g(x)=\frac{x}{6}$ one has $f\geq g$ which can easily be shown by standard arguments. Thus with \eqref{ineq: bound constant} we find
\begin{align}
    c_c^2 KJ&\leq \frac{6 c_c^2}{\sqrt{2\pi}\sigma_- \varepsilon'}\left(4\hat{C}\sigma_++C_{H_0}\right) =: \frac{1}{16\varepsilon'} \frac{c_+\sigma_++C_{H_0}}{c_-\sigma_-}
\end{align}
with $c_- := \frac{\sqrt{2\pi}}{96c_c^2}$ and $c_+ :=4\hat{C}$, i.e., $c\leq 1/(4\sqrt{\varepsilon' C_{\sigma}})$. Inserting this estimate into \eqref{ineq: Mmunu combined 2} finishes the proof.
\end{proof}

\subsection{Dynamical Typicality}
\label{sec:5.2}

In order to find an upper bound for the relative error in the dynamical typicality theorem, Theorem \ref{thm: dyntyp phys}, as well, we need a lower bound for $|w_{\mu B}(t)| = |\mathbb{E}_\mu\langle\psi_t|B|\psi_t\rangle|$. Without any further assumptions on the Hamiltonian we immediately find the following proposition:

\begin{prop}\label{thm: dyntyp rel 1}
Let $B\in\mathcal{L}(\Hilbert)$ be Hermitian such that $b:=\max\{b_{\min}^+-b_{\max}^-,b_{\min}^- - b_{\max}^+\}>0$, where $b_{\min}^\pm$ and $b_{\max}^\pm$ are again the smallest and largest eigenvalues of $B^\pm$ (the positive and negative parts of $B=B^+-B^-$). Let $\varepsilon>0$, $t\in[0,\infty)$, and let $\mu$ be an arbitrary macro state. Then
\begin{align}
\bigl|\mathbb{E}_\mu\langle\psi_t|B|\psi_t\rangle\bigr| \geq b,
\end{align}
and therefore $(1-\varepsilon)$-most $\psi_0\in\mathbb{S}(\Hilbert_\mu)$ are such that
\begin{align}
    \frac{\bigl|\langle\psi_t|B|\psi_t\rangle - \mathbb{E}_{\mu}\langle\psi_t|B|\psi_t\rangle\bigr|}{\bigl|\mathbb{E}_\mu\langle\psi_t|B|\psi_t\rangle\bigr|} \leq b^{-1} \cdot(\mbox{right-hand side of \eqref{ineq: dyntyp1}}).
\end{align}
Moreover, for every $\mu$ and $B$, every $T>0$, and $(1-\varepsilon)$-most $\psi_0\in\mathbb{S}(\Hilbert_\mu)$,
\begin{align}
    \frac{1}{T}\int_0^T \frac{\bigl|\langle\psi_t|B|\psi_t\rangle - \mathbb{E}_\mu\langle\psi_t|B|\psi_t\rangle\bigr|^2}{\bigl|\mathbb{E}_\mu\langle\psi_t|B|\psi_t\rangle\bigr|^2}\, dt \leq \frac{\|B\|^2}{b^2\varepsilon d_\mu}.\label{ineq: dyntyp curve simple}
\end{align}
\end{prop}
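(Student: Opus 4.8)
The plan is to derive all three claims from the identity $\EEE_\mu\langle\psi_t|B|\psi_t\rangle = w_{\mu B}(t) = \frac{1}{d_\mu}\tr[P_\mu e^{iHt}Be^{-iHt}]$ (Eq.~\eqref{eq: wmuB}) together with the absolute-error bounds already available in Theorem~\ref{thm: dyntyp phys}. The only genuinely new ingredient needed is the lower bound $|w_{\mu B}(t)|\geq b$; once that is in hand, dividing \eqref{ineq: dyntyp1} and \eqref{ineq: dyntyp2} by $|w_{\mu B}(t)|^2$ (resp.\ $|w_{\mu B}(t)|$) gives the stated relative and $L^2$ estimates directly, since replacing the denominator $|w_{\mu B}(t)|$ by the smaller quantity $b$ only weakens the inequality.

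The heart of the argument is thus the bound $|w_{\mu B}(t)|\geq b$. First I would write $A_t := e^{iHt}Be^{-iHt}$, a Hermitian operator unitarily equivalent to $B$, so $A_t = A_t^+ - A_t^-$ with $A_t^\pm = e^{iHt}B^\pm e^{-iHt}$, and $w_{\mu B}(t) = \frac{1}{d_\mu}\bigl(\tr[P_\mu A_t^+] - \tr[P_\mu A_t^-]\bigr)$. Now $\tr[P_\mu A_t^+]$ is a sum of $d_\mu$ diagonal entries of $A_t^+$ in an orthonormal basis of $\Hilbert_\mu$, hence equals $\sum_{k} \langle e_k|A_t^+|e_k\rangle$ with each term in $[\,b_{\min}^+, b_{\max}^+\,]$ (the spectrum of $A_t^+$ equals that of $B^+$); therefore $d_\mu b_{\min}^+ \leq \tr[P_\mu A_t^+]\leq d_\mu b_{\max}^+$, and similarly $d_\mu b_{\min}^-\leq \tr[P_\mu A_t^-]\leq d_\mu b_{\max}^-$. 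Consequently $w_{\mu B}(t)\geq b_{\min}^+ - b_{\max}^-$ and $w_{\mu B}(t)\leq b_{\max}^+ - b_{\min}^-$, i.e.\ $-w_{\mu B}(t)\geq b_{\min}^- - b_{\max}^+$. Taking the better of the two sign choices yields $|w_{\mu B}(t)|\geq \max\{b_{\min}^+-b_{\max}^-,\ b_{\min}^- - b_{\max}^+\} = b > 0$, uniformly in $t$.

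With the lower bound established, the two conditional statements follow by pure division. For the pointwise claim, Theorem~\ref{thm: dyntyp phys} gives, for $(1-\varepsilon)$-most $\psi_0$, that $|\langle\psi_t|B|\psi_t\rangle - w_{\mu B}(t)|$ is at most the right-hand side of \eqref{ineq: dyntyp1}; dividing by $|w_{\mu B}(t)|\geq b$ gives the bound $b^{-1}\cdot(\text{RHS of \eqref{ineq: dyntyp1}})$. For the $L^2$ claim I would note that $|w_{\mu B}(t)|\geq b$ for \emph{every} $t$, so $\frac{1}{|w_{\mu B}(t)|^2}\leq \frac{1}{b^2}$ pointwise, whence $\frac{1}{T}\int_0^T \frac{|\langle\psi_t|B|\psi_t\rangle - w_{\mu B}(t)|^2}{|w_{\mu B}(t)|^2}\,dt \leq \frac{1}{b^2}\cdot\frac{1}{T}\int_0^T |\langle\psi_t|B|\psi_t\rangle - w_{\mu B}(t)|^2\,dt \leq \frac{\|B\|^2}{b^2\varepsilon d_\mu}$ by \eqref{ineq: dyntyp2}, valid for $(1-\varepsilon)$-most $\psi_0$.

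There is no serious obstacle here: the entire proposition is a routine corollary of Theorem~\ref{thm: dyntyp phys} once one observes that unitary conjugation preserves the spectral decomposition $B^\pm$ and that a trace over $P_\mu$ is a sum of $d_\mu$ numbers each sandwiched between the extreme eigenvalues. The only point requiring a little care is bookkeeping the four eigenvalue quantities $b_{\min}^\pm, b_{\max}^\pm$ and correctly identifying which sign of $w_{\mu B}(t)$ gives the larger lower bound, which is exactly why the definition of $b$ takes a maximum over the two expressions $b_{\min}^+ - b_{\max}^-$ and $b_{\min}^- - b_{\max}^+$; the hypothesis $b>0$ is precisely what makes the denominators nonzero and the relative errors well-defined.
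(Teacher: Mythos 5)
Your proposal is correct and follows essentially the same route as the paper's proof: both identify $\mathbb{E}_\mu\langle\psi_t|B|\psi_t\rangle$ with $\frac{1}{d_\mu}\tr(P_\mu e^{itH}Be^{-itH})$, bound this trace as a sum of $d_\mu$ diagonal matrix elements each sandwiched between the extreme eigenvalues of $B^\pm$ (the paper phrases this via the time-evolved basis vectors $\varphi_{k,t}$ rather than the conjugated operator $A_t$, which is equivalent), take the maximum over the two sign choices to get $|\mathbb{E}_\mu\langle\psi_t|B|\psi_t\rangle|\geq b$, and then deduce the remaining claims by dividing the bounds of Theorem~\ref{thm: dyntyp phys} by $b$.
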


\begin{proof}
Let $(\varphi_k)$ be an orthonormal basis of $\Hilbert_\mu$ and define $\varphi_{k,t} := e^{-itH}\varphi_k$. Then
\begin{subequations}
\begin{align}
\bigl|\mathbb{E}_\mu\langle\psi_t|B|\psi_t\rangle\bigr| &= \frac{1}{d_\mu}\bigl|\tr(P_\mu e^{itH} B e^{-itH}) \bigr|\\
    &= \frac{1}{d_\mu}\biggl|\sum_k\langle\varphi_{k,t}|B|\varphi_{k,t}\rangle \biggr|\\
    &\geq \frac{1}{d_\mu}\max\left\{\sum_k \langle\varphi_{k,t}|B^{+}-B^- |\varphi_{k,t}\rangle,\sum_k \langle\varphi_{k,t}|B^{-}-B^+|\varphi_{k,t}\rangle \right\}\\
    &\geq \max\{b_{\min}^{+} - b_{\max}^{-},b_{\min}^- - b_{\max}^+\},
\end{align}
\end{subequations}
i.e., $\bigl|\mathbb{E}_\mu\langle\psi_t|B|\psi_t\rangle\bigr|\geq b$.
The remaining claims now follow immediately from Theorem~\ref{thm: dyntyp phys}.
\end{proof}

Proposition~\ref{thm: dyntyp rel 1} yields a good lower bound for $\bigl|\mathbb{E}_\mu\langle\psi_t|B|\psi_t\rangle\bigr|$ and therefore useful upper bounds for the relative errors if, for example, $B$ is a positive or negative operator (and in this case, $b=b_{\min}^+>0$ and $b=b_{\min}^->0$ respectively). If $B$ has both positive and non-positive eigenvalues, the bounds are only useful if the spectrum of $B$ is rather special since in most cases one has $b\leq 0$. In particular, if $B=P_\nu$ for some macro state $\nu$ we have $b=0$ and Proposition~\ref{thm: dyntyp rel 1} is not applicable. However, with the help of the corrected and improved no gaps delocalization, Theorem \ref{thm: no gaps extended}, we are able to prove an upper bound for the comparative errors which also applies to more general operators and in particular to the case $B=P_\nu$:

\begin{thm}\label{thm:11}
    Let $\varepsilon>0$, $\varepsilon'\in (0,\frac{1}{2})$ and let $\mu$ and $\nu$ be macro states such that $d_\mu,d_\nu>\max\{166,4|\log_2\varepsilon'|\}$. Let $B$ be a Hermitian $D\times D$ matrix and let $H$ be a random Hermitian $D\times D$ matrix as in Theorem~\ref{thm: main gen}. Then with probability at least $1-\varepsilon'$, for each $t\in\mathbb{R}$, $(1-\varepsilon)$-most $\psi_0\in\mathbb{S}(\Hilbert_\mu)$ are such that
    \begin{multline}
        \frac{\bigl|\langle\psi_t|B|\psi_t\rangle - \mathbb{E}_\mu\langle\psi_t|B|\psi_t\rangle\bigr|}{\left|\overline{\langle\psi_t|B|\psi_t\rangle }\right|} \\ 
        \leq \mathrm{LB(B,\psi)}^{-1} \min\left\{\frac{\sqrt{2}\|B\|}{\sqrt{\varepsilon d_\mu}},\sqrt{\frac{2\|B\|\tr(|B|)}{\varepsilon d_\mu^2}}, \sqrt{\frac{18\pi^3 \log(8/\varepsilon)}{d_\mu}\|B\|}\right\},
    \end{multline}
    whenever
    \begin{align}
        \mathrm{LB}(B,\psi) &:= \max\left\{b_{\min}^{+}, \left(\frac{d_\mu}{4c_c\sqrt{KJ}D}\right)^{16} \frac{\tr(B^{+})}{d_\mu}\right\} - \min\left\{b_{\textup{max}}^{-},\frac{\tr(B^{-})}{d_\mu}\right\}\nonumber\\
        &\quad- \sqrt{2}\left(\frac{\|B\|}{\varepsilon d_\mu}\min\left\{\|B\|,\frac{\tr(|B|)}{d_\mu}\right\}\right)^{1/2}
    \end{align}
is positive.  
Moreover, for every $\mu$ and $B$, every $T>0$, and $(1-\varepsilon)$-most $\psi_0\in\mathbb{S}(\Hilbert_\mu)$,
\begin{align}
    \frac{1}{T}\int_0^T \frac{\bigl|\langle\psi_t|B|\psi_t\rangle - \mathbb{E}_\mu\langle\psi_t|B|\psi_t\rangle \bigr|^2}{\left|\overline{\langle\psi_t|B|\psi_t\rangle}\right|^2}\,dt \leq \frac{2\|B\|^2}{\mathrm{LB}(B,\psi)^2\varepsilon d_\mu}.
\end{align}
\end{thm}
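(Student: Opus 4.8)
The plan is to rewrite the comparative error as the absolute error $\bigl|\langle\psi_t|B|\psi_t\rangle-\EEE_\mu\langle\psi_t|B|\psi_t\rangle\bigr|=\bigl|\langle\psi_t|B|\psi_t\rangle-w_{\mu B}(t)\bigr|$ divided by $\bigl|\overline{\langle\psi_t|B|\psi_t\rangle}\bigr|=|M_{\psi_0 B}|$ (using \eqref{MpsiBtimeavg}), and to treat numerator and denominator separately: the numerator is controlled by Theorem~\ref{thm: dyntyp phys}, while the denominator needs a lower bound $|M_{\psi_0 B}|\geq\mathrm{LB}(B,\psi)$ valid, with probability at least $1-\varepsilon'$ over $H$, for most $\psi_0$. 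Since $\EEE_\mu M_{\psi_0 B}=M_{\mu B}$, the reverse triangle inequality gives $|M_{\psi_0 B}|\geq|M_{\mu B}|-|M_{\psi_0 B}-M_{\mu B}|$, so it suffices to lower bound $|M_{\mu B}|$ and upper bound the fluctuation $|M_{\psi_0 B}-M_{\mu B}|$.

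The first of these is exactly the content of Theorem~\ref{thm: main gen}, which with probability at least $1-\varepsilon'$ over $H$ bounds $|M_{\mu B}|$ from below by $\max\{b_{\min}^{+},(d_\mu/4c_c\sqrt{KJ}D)^{16}\tr(B^{+})/d_\mu\}-\min\{b_{\max}^{-},\tr(B^{-})/d_\mu\}$. For the fluctuation I would use Chebyshev's inequality. Writing $\psi_0=\sum_n c_n\phi_n$ in an eigenbasis of $H$ (non-degenerate with probability $1$), one has $M_{\psi_0 B}=\sum_n|c_n|^2\langle\phi_n|B|\phi_n\rangle$ with $\psi_0$-mean $M_{\mu B}$, so $\Var_\mu(M_{\psi_0 B})$ is a quadratic form in the mixed moments $\EEE_\mu(|c_m|^2|c_n|^2)=\tfrac{1}{d_\mu(d_\mu+1)}(\langle\phi_m|P_\mu|\phi_m\rangle\langle\phi_n|P_\mu|\phi_n\rangle+|\langle\phi_m|P_\mu|\phi_n\rangle|^2)$ of a uniform point on $\SSS(\Hilbert_\mu)$; the positive part of the covariance sums to $\tfrac{1}{d_\mu(d_\mu+1)}\tr((P_\mu D_B P_\mu)^2)$ with $D_B$ the diagonal of $B$ in the eigenbasis, and estimating this either by $\|B\|^2\tr(P_\mu)$ or by $\|B\|\tr(|B|)$ gives $\Var_\mu(M_{\psi_0 B})\leq\tfrac{\|B\|}{d_\mu}\min\{\|B\|,\tr(|B|)/d_\mu\}$. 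Chebyshev with threshold probability $\varepsilon/2$ then yields $|M_{\psi_0 B}-M_{\mu B}|\leq\sqrt2\,(\tfrac{\|B\|}{\varepsilon d_\mu}\min\{\|B\|,\tr(|B|)/d_\mu\})^{1/2}$ for $(1-\varepsilon/2)$-most $\psi_0$, and subtracting this from the bound on $|M_{\mu B}|$ produces precisely $\mathrm{LB}(B,\psi)$.

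It then remains to assemble the pieces, splitting the $\psi_0$-budget $\varepsilon$ into halves. Applying \eqref{ineq: dyntyp1} with $\varepsilon/2$ in place of $\varepsilon$ bounds $|\langle\psi_t|B|\psi_t\rangle-w_{\mu B}(t)|$, for each $t$ and $(1-\varepsilon/2)$-most $\psi_0$, by the minimum of the three quantities displayed in Theorem~\ref{thm:11}; intersecting with the $(1-\varepsilon/2)$-most $\psi_0$ for which $|M_{\psi_0 B}|\geq\mathrm{LB}(B,\psi)$ and dividing gives the pointwise-in-$t$ estimate on an event of $\psi_0$-probability $1-\varepsilon$ and $H$-probability $1-\varepsilon'$. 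For the time-integrated estimate one instead uses \eqref{ineq: dyntyp2} with $\varepsilon/2$, which bounds $\tfrac1T\int_0^T|\langle\psi_t|B|\psi_t\rangle-w_{\mu B}(t)|^2\,dt$ by $\tfrac{2\|B\|^2}{\varepsilon d_\mu}$, and then pulls the time-independent factor $|M_{\psi_0 B}|^{2}\geq\mathrm{LB}(B,\psi)^2$ out of the integral.

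The only genuinely new ingredient is the second-moment estimate on $\Var_\mu(M_{\psi_0 B})$, and even that is of exactly the kind carried out in \cite{TTV22-physik} to prove the absolute-error theorems recalled in Section~\ref{sec: prelim} (indeed $\Var_\mu(M_{\psi_0 B})\leq\EEE_\mu\overline{|\langle\psi_t|B|\psi_t\rangle-M_{\mu B}|^2}$ by Jensen's inequality); everything else is bookkeeping with Theorems~\ref{thm: main gen} and~\ref{thm: dyntyp phys}, the reverse triangle inequality, and a union bound over the two $\psi_0$-events and the $H$-event. The one point requiring care is obtaining the constant $1$ (rather than a larger absolute constant) in the variance bound, since that is what makes the subtracted term in $\mathrm{LB}(B,\psi)$ come out as stated; this is why I would run the direct covariance computation rather than quote the looser constants hidden in the absolute-error bound of Theorem~\ref{thm: abs errors A}.
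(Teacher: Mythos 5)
Your proof is correct and follows essentially the same route as the paper: lower-bound the time average $\overline{\langle\psi_t|B|\psi_t\rangle}=M_{\psi_0 B}$ by $\mathrm{LB}(B,\psi)$ via the reverse triangle inequality together with Theorem~\ref{thm: main gen} and the fluctuation bound $|M_{\psi_0 B}-M_{\mu B}|\leq\sqrt{2}\bigl(\tfrac{\|B\|}{\varepsilon d_\mu}\min\{\|B\|,\tr(|B|)/d_\mu\}\bigr)^{1/2}$ for $(1-\varepsilon/2)$-most $\psi_0$, then divide the $\varepsilon/2$-versions of \eqref{ineq: dyntyp1} and \eqref{ineq: dyntyp2} by this denominator and take a union bound. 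The only cosmetic difference is that you re-derive the fluctuation bound by a direct variance/Chebyshev computation (which is correct and yields the stated constant $\sqrt{2}$), whereas the paper simply quotes it from Eqs.~(124)--(125) of \cite{TTV22-physik}.
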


\begin{proof}
  It follows from (124) and (125) in \cite{TTV22-physik} that $\left(1-\frac{\varepsilon}{2}\right)$-most $\psi_0\in\mathbb{S}(\Hilbert_\mu)$ are such that
  \begin{align}
      \left|\overline{\langle\psi_t|B|\psi_t\rangle} - M_{\mu B}\right| \leq \sqrt{2}\left(\frac{\|B\|}{\varepsilon d_\mu}\min\left\{\|B\|,\frac{\tr(|B|)}{d_\mu}\right\}\right)^{1/2}.
  \end{align}
  Therefore it follows from the triangle inequality and Theorem \ref{thm: main gen} that with probability at least $1-\varepsilon'$, $\left(1-\frac{\varepsilon}{2}\right)$-most $\psi_0\in\mathbb{S}(\Hilbert_\mu)$ are such that
  \begin{align}
      \left|\overline{\langle \psi_t|B|\psi_t\rangle}\right| &\geq |M_{\mu B}| - \sqrt{2}\left(\frac{\|B\|}{\varepsilon d_\mu}\min\left\{\|B\|,\frac{\tr(|B|)}{d_\mu}\right\}\right)^{1/2}\\
      &\geq \max\left\{b_{\min}^{+}, \left(\frac{d_\mu}{4c_c\sqrt{KJ}D}\right)^{16} \frac{\tr(B^{+})}{d_\mu}\right\} - \min\left\{b_{\textup{max}}^{-},\frac{\tr(B^{-})}{d_\mu}\right\}\nonumber\\
      &\quad- \sqrt{2}\left(\frac{\|B\|}{\varepsilon d_\mu}\min\left\{\|B\|,\frac{\tr(|B|)}{d_\mu}\right\}\right)^{1/2}\\
      &= \mathrm{LB}(B,\psi).\label{ineq: LB(B,psi)}
  \end{align}
It follows from Theorem~\ref{thm: dyntyp phys} that $\left(1-\frac{\varepsilon}{2}\right)$-most $\psi_0\in\mathbb{S}(\Hilbert_\mu)$ are such that
\begin{align}
    \bigl|\langle\psi_t|B|\psi_t\rangle - \mathbb{E}_\mu\langle\psi_t|B|\psi_t\rangle \bigr| \leq \min\left\{\frac{\sqrt{2}\|B\|}{\sqrt{\varepsilon d_\mu}},\sqrt{\frac{2\|B\|\tr(|B|)}{\varepsilon d_\mu^2}}, \sqrt{\frac{18\pi^3 \log(8/\varepsilon)}{d_\mu}\|B\|}\right\}. \label{ineq: dyntyp phys eps/2}
\end{align}
A combination of \eqref{ineq: LB(B,psi)} and \eqref{ineq: dyntyp phys eps/2} yields the first claim. Similarly, a combination of \eqref{ineq: LB(B,psi)} and \eqref{ineq: dyntyp2} in Theorem~\ref{thm: dyntyp phys} gives the second claim. 
\end{proof}

Note that for large $d_\mu$ (and if $\|B\|$ does not depend on $d_\mu$) the lower bound for $\bigl|\overline{\langle\psi_t|B|\psi_t\rangle}\bigr|$, $\mathrm{LB}(B,\psi)$, is up to a small error equal to the lower bound for $|M_{\mu B}|$ that was proved in Theorem~\ref{thm: main gen}.

Finally, we give the proof of Theorem \ref{thm: dyntyp rel 2} which yields a somewhat better bound than the previous theorem in the case that $B$ is a projection $P_\nu$ onto some macro space $\Hilbert_\nu$:

\begin{proof}[Proof of Theorem \ref{thm: dyntyp rel 2}]
Let $(\phi_n)$ be an orthonormal basis of eigenvectors of $H$. With $\psi_0=\sum_n c_n\phi_n$ we find that
\begin{align} 
\overline{\langle\psi_t|P_\nu|\psi_t\rangle} = \sum_{n,m} c_n^* c_m \overline{e^{it(E_n-E_m)}} \langle\phi_n|P_\nu|\phi_m\rangle = \sum_n |c_n|^2 \langle\phi_n|P_\nu|\phi_n\rangle.
\end{align}
Since $\sum_n |c_n|^2=1$ and with the help of Theorem \ref{thm: no gaps extended} we find with probability at least $1-\varepsilon'$ that
\begin{align}
\overline{\langle\psi_t|P_\nu|\psi_t\rangle} \geq \left(\frac{d_\nu}{4c_c\sqrt{KJ}D}\right)^{16} \geq \left(\frac{\sqrt{\varepsilon' C_{\sigma}}d_\nu}{D}\right)^{16}
\end{align}
see also the proof of Theorem \ref{thm: main gen} for the first step. In the second step we used that $c_c\sqrt{KJ}\leq 1/(4\sqrt{\varepsilon' C_{\sigma}})$, which can be shown similarly as in the proof of Theorem \ref{thm: main new}.
This implies together with Theorem~\ref{thm: dyntyp phys} (by using the first two bounds in \eqref{ineq: dyntyp1}) that $(1-\varepsilon)$-most $\psi_0\in\mathbb{S}(\Hilbert_\mu)$ are such that
\begin{align}
    \frac{\bigl|\|P_\nu\psi_t\|^2-\mathbb{E}_\mu\|P_\nu\psi_t\|^2 \bigr|}{\overline{\|P_\nu\psi_t\|^2}} \leq \frac{1}{\sqrt{\varepsilon}} (C_{\sigma}\varepsilon')^{-8} \frac{1}{d_\mu} \sqrt{\min\left\{d_\mu,d_\nu\right\}} \left(\frac{D}{d_\nu}\right)^{16}.\label{ineq: lb TA Pnupsi}
\end{align}
Inserting the definition of $s_\mu, s_\nu$ and $s_{\mathrm{mc}}$ thus proves the first claim. For the second claim observe that it follows from Theorem~\ref{thm: dyntyp phys} that for every $T>0$, $(1-\varepsilon)$-most $\psi_0\in\mathbb{S}(\Hilbert_\mu)$ are such that 
\begin{align}
    \frac{1}{T} \int_0^T \bigl|\|P_\nu\psi_t\|^2-\mathbb{E}_\mu\|P_\nu\psi_t\|^2 \bigr|^2\, dt \leq \frac{1}{\varepsilon d_\mu}.
\end{align}
Together with \eqref{ineq: lb TA Pnupsi} this implies that with probability $1-\varepsilon'$, for every $T>0$, $(1-\varepsilon)$-most $\psi_0\in\mathbb{S}(\Hilbert_\mu)$ are such that
\begin{align}
    \frac{1}{T}\int_0^T \frac{\bigl| \|P_\nu\psi_t\|^2-\mathbb{E}_\mu\|P_\nu\psi_t\|^2\bigr|^2}{\overline{\|P_\nu\psi_t\|^2}}\, dt \leq \frac{1}{\varepsilon} (C_{\sigma}\varepsilon')^{-16} \frac{1}{d_\mu} \left(\frac{D}{d_\nu}\right)^{32}.
\end{align}
Now also the second claim follows immediately from the definition of $s_\mu,s_\nu$ and $s_{\mathrm{mc}}$.
\end{proof}

\section{An Improved Lower Bound for $\boldsymbol{M}_{\boldsymbol{\mu}\,\textup{eq}}$, $\boldsymbol{M}_{\textup{eq}\,\boldsymbol{\nu}}$ and $\boldsymbol{M}_{\textup{eq}\,\textup{eq}}$ for Wigner-Type Matrices \label{sec: Improved LB}}

We expect that in many cases a significantly stronger lower bound for the $M_{\mu\nu}$ than the one obtained with the help of our improved version of the no gaps delocalization result from Rudelson and Vershynin \cite{RV16}, Theorem~\ref{thm: main new}, should hold. 
More precisely, it is expected that for band matrices $H$ with sufficiently large band width, the eigenvectors are delocalized, and in that situation we expect 
that $M_{\mu\nu} \approx \frac{d_\nu}{D}$.

In this section, we show with the help of a result from Ajanki, Erdős and Krüger \cite{AEK17} concerning the delocalization of eigenvectors that for a special class of random matrices a much better lower bound for $M_{\mu\,\textup{eq}}$, $M_{\textup{eq}\,\nu}$ and $M_{\textup{eq}\,\textup{eq}}$ can be obtained provided that the equilibrium macro space is very dominant (in a sense made precise below). These matrices do not have a band structure. 
However, the results strengthen the expectation that often the lower bound in Theorem~\ref{thm: main new} can be significantly improved. We quote the relevant statement from \cite{AEK17} as Theorem~\ref{thm: AEK deloc} and obtain from it lower bounds for $M_{\mu\nu}$ in Theorem~\ref{thm: lb AEK}.

For using the result of \cite{AEK17}, we need a certain shift of perspective. So far in this paper, we always regarded $D$ as fixed and considered a single, randomly chosen $D\times D$ matrix $H$. In contrast, in this section and the next, we will consider a \emph{sequence} of random matrices $H^{(D)}$, one for every $D\in\mathbb{N}$. In fact, our reasoning also applies if we merely allow an infinite set of $D$ values (say, all powers of 2), and a random matrix $H^{(D)}$ for every $D$ from that set; but for simplicity, we will pretend we have an $H^{(D)}$ for every $D\in\mathbb{N}$. More precisely, we assume that, for every $D\in\mathbb{N}$, we are given a probability distribution $\PPP^{(D)}$ over the Hermitian $D\times D$ matrices. We will show that, under suitable assumptions on $\PPP^{(D)}$, certain estimates hold for \emph{sufficiently large $D$}, but we are not necessarily able to make explicit how large $D$ has to be.

In order to state the result of \cite{AEK17}, we need their notion of \textit{stochastic domination}:

\begin{defn}[Stochastic domination] \label{defn: stoch dom} 
For two sequences $X=(X^{(D)})_D$ and $Y=(Y^{(D)})_D$ of non-negative random variables we say that $X$ is stochastically dominated by $Y$ if there exists a function $D_0:(0,\infty)^2\to\mathbb{N}$ such that for all $\tau>0$ and $\alpha>0$,
\begin{align}
    \mathbb{P}\left(X^{(D)} > D^{\tau} Y^{(D)} \right) \leq D^{-\alpha} \qquad \forall D\geq D_0(\tau,\alpha).
\end{align}
In this case we write $X\prec Y$.
\end{defn}

That is, $X\prec Y$ means that for large $D$ it has high probability that $X$ is not much larger than $Y$. The key fact for our purposes will be that under certain assumptions on $\PPP^{(D)}$, every eigenvector $\phi_n$ of $H=H^{(D)}\sim \PPP^{(D)}$ satisfies
\be\label{AEK3}
\|\phi_n\|_\infty \prec \frac{1}{\sqrt{D}}
\ee
(more precise statement around \eqref{AEK2} below). That is, each component of $\phi_n$ is not much larger than $1/\sqrt{D}$ (the magnitude that a component would have to have if all components had the same magnitude). Since if some components were much smaller than $1/\sqrt{D}$, others would have to be larger, this also entails that not a large fraction of the components can be much smaller than $1/\sqrt{D}$. But this still allows that a \emph{small} fraction of the components could be arbitrarily small; for example, if $d_\nu$ (despite being a large number) is a small fraction of $D$, then all components of $\phi_n$ in $\Hilbert_\nu$ could be arbitrarily small, so $\langle \phi_n|P_\nu|\phi_n\rangle$ could be arbitrarily small, and then the expression \eqref{Mmunuexpression} suggests that $M_{\mu\nu}$ could be arbitrarily small, and we would not obtain a useful lower bound for $M_{\mu\nu}$. In fact, \eqref{AEK3} provides such a bound only if either $d_\mu$ or $d_\nu$ is sufficiently close to $D$, as the detailed analysis below confirms.

We now turn to describing the matrices considered in \cite{AEK17} and in this section.

\begin{asm}\label{asm: S1 (flat)}
For every $D$, $H^{(D)}=(h_{ij})\sim \PPP^{(D)}$ is a Hermitian $D\times D$ matrix of \textit{Wigner type}, i.e., its entries $h_{ij}$ are centered random variables and the entries $(h_{ij})_{i\leq j}$ are independent. The matrix of variances $S=(\sigma^2_{ij})$, defined by
\begin{align}
    \sigma^2_{ij} := \mathbb{E}|h_{ij}|^2 \,,
\end{align}
is \textit{flat}, i.e.,
 \begin{align}
     \sigma^2_{ij} \leq \frac{1}{D}, \qquad i,j=1,\dots,D.
 \end{align}
\end{asm}

Let $p,P>0$ and $L\in\mathbb{N}$ be parameters independent of $D$ and let $\mu=(\mu_1,\mu_2,\dots)$ be a sequence of non-negative real numbers. 

\begin{asm}\label{asm: S2 (un prim)}
 The matrix $S$ is \textit{uniformly primitive}, i.e.
 \begin{align}
     (S^L)_{ij} \geq \frac{p}{D}, \qquad i,j=1,\dots,D.
 \end{align}
\end{asm}

It can be shown that the corresponding 
\emph{vector Dyson equation}
\begin{align}
    -\frac{1}{m_i(z)} = z + \sum_{j=1}^D \sigma^2_{ij} m_j(z), \qquad
     \mbox{for all}\; i=1,\dots,D \; \mbox{and}\; z\in\mathbb{H} \label{eq: QVE}
\end{align}
for a function $m=(m_1,\dots,m_D):\mathbb{H}\to\mathbb{H}^D$ on the complex upper half plane $\mathbb{H} = \{z\in\mathbb{C}: \IM z>0\}$ has a unique solution \cite{AEK19}.

\begin{asm}\label{asm: S3 (bdd sol of QVE)} 
The matrix $S$ induces a bounded solution of the vector Dyson equation, i.e., the unique solution of \eqref{eq: QVE} corresponding to $S$ is bounded, 
\begin{align}
    |m_i(z)| \leq P, \qquad i=1,\dots,D, \quad z\in\mathbb{H}.
\end{align}
\end{asm}

\begin{asm}\label{asm: S4 bdd mom}
The entries $h_{ij}$ of the random matrix $H$ have \textit{bounded moments}, i.e.
\begin{align}
    \mathbb{E}|h_{ij}|^k \leq \mu_k \sigma_{ij}^{k}, \qquad k\in\mathbb{N}, \quad i,j = 1,\dots,D.
\end{align}
\end{asm}

Sufficient conditions for Assumption~\ref{asm: S3 (bdd sol of QVE)} are given in \cite{AEK19} Theorem 6.1. The reason we make these assumptions is that under these conditions, 
Ajanki, Erdős and Krüger (2017) \cite{AEK17} proved the following theorem concerning the delocalization of the eigenvectors of $H$: 

\begin{thm}[\cite{AEK17} Corollary~1.14]\label{thm: AEK deloc}
Consider a sequence $(\PPP^{(D)})_{D\in\mathbb{N}}$ of probability distributions over the Hermitian $D\times D$ matrices with $H^{(D)}\sim \PPP^{(D)}$ satisfying Assumptions \ref{asm: S1 (flat)}-\ref{asm: S4 bdd mom}. 
Let $E_1^{(D)}\leq \ldots \leq E_D^{(D)}$ be the eigenvalues of the random matrix $H^{(D)}$ and $\phi_n^{(D)} \in \mathbb{C}^D$ a normalized eigenvector of $H^{(D)}$ with eigenvalue $E_n^{(D)}$. 
Then for every sequence of unit vectors $b^{(D)}\in\mathbb{C}^D$ and every sequence $n^{(D)}\in\{1,\ldots,D\}$,
\begin{align}
   \left|\bigl\langle b^{(D)}\big| \phi_{n^{(D)}}^{(D)} \bigr\rangle\right| \prec \frac{1}{\sqrt{D}} \,. 
\end{align}
In particular, the eigenvectors are completely delocalized, i.e.,
\be\label{AEK2}
\bigl\|\phi_{n^{(D)}}^{(D)}\bigr\|_\infty \prec \frac{1}{\sqrt{D}} \,.
\ee
The function $D_0(\tau,\alpha)$ implicit in the $\prec$ symbol in \eqref{AEK2} depends only on the constants $p,P,L,(\mu_k)_{k\in\mathbb{N}}$ from Assumptions \ref{asm: S1 (flat)}-\ref{asm: S4 bdd mom}.
\end{thm}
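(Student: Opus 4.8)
The plan is to deduce the bound for $|\langle b^{(D)}|\phi_{n^{(D)}}^{(D)}\rangle|$, and in particular \eqref{AEK2}, from a \emph{local law} for the resolvent $G(z):=(H^{(D)}-z)^{-1}$ of $H=H^{(D)}$, following the standard route in random matrix theory. First I would establish the deterministic approximation of $G$: using the Schur complement formula for the diagonal entries $G_{ii}$ together with concentration of quadratic forms, one shows that $G_{ii}(z)$ approximately solves $-1/G_{ii}(z)=z+\sum_j\sigma_{ij}^2\,G_{jj}(z)$, so that $G_{ii}(z)$ is close to the solution $m_i(z)$ of the vector Dyson equation \eqref{eq: QVE}; more generally, $\langle b|G(z)|b\rangle$ is close to $\sum_i|b_i|^2 m_i(z)$ (the \emph{isotropic} local law). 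The goal is that these approximations hold with an error stochastically dominated by $(D\eta)^{-1/2}$, uniformly for $z=E+i\eta$ with $E$ in a compact set and $\eta\geq D^{-1+\gamma}$ for every fixed $\gamma>0$.

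The technical heart, and the main obstacle, is the stability of \eqref{eq: QVE}: after subtracting the Dyson equation from the approximate self-consistent equation for $(G_{ii})$, one must invert the \emph{stability operator} $1-m(z)^2 S$ on $\CCC^D$ with quantitative bounds. This is exactly where the hypotheses enter. Assumption~\ref{asm: S3 (bdd sol of QVE)} gives $|m_i(z)|\leq P$, keeping the self-energy term under control; Assumption~\ref{asm: S2 (un prim)} (uniform primitivity of $S$) is what yields a quantitative bound on $(1-m^2 S)^{-1}$ and prevents the local law from degenerating; Assumption~\ref{asm: S1 (flat)} ($\sigma_{ij}^2\leq 1/D$) fixes the natural energy scale and controls the random error terms; and Assumption~\ref{asm: S4 bdd mom} supplies the high-moment (fluctuation-averaging) estimates needed to turn expected cancellations into $\prec$-bounds. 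This is essentially the content of the analysis of the vector Dyson equation in \cite{AEK19} combined with a self-consistent bootstrap down to the scale $\eta\sim D^{-1+\gamma}$, not something one rederives in a few lines; the fact that the threshold $D_0(\tau,\alpha)$ depends only on $p,P,L,(\mu_k)$ is inherited from the corresponding uniformity of these inputs.

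Granting the local law, delocalization follows from a short spectral-decomposition argument. For a normalized eigenvector $\phi_n:=\phi_n^{(D)}$ with eigenvalue $E_n$, a unit vector $b\in\CCC^D$, and $z=E_n+i\eta$, the spectral decomposition of $G$ gives
\begin{align}
\IM\langle b|G(z)|b\rangle &= \sum_k\frac{\eta\,|\langle b|\phi_k\rangle|^2}{(E_k-E_n)^2+\eta^2}\geq\frac{|\langle b|\phi_n\rangle|^2}{\eta}\,,
\end{align}
hence $|\langle b|\phi_n\rangle|^2\leq\eta\,\IM\langle b|G(E_n+i\eta)|b\rangle$. Choosing $\eta=D^{-1+\gamma}$ and invoking the isotropic local law together with $|m_i|\leq P$ yields $\IM\langle b|G(z)|b\rangle\prec P$, so $|\langle b|\phi_n\rangle|^2\prec D^{-1+\gamma}$; since $\gamma>0$ is arbitrary this is precisely $|\langle b|\phi_n\rangle|\prec D^{-1/2}$, and specializing $b$ to a coordinate vector gives \eqref{AEK2}. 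The one care point is that the bound must hold simultaneously for \emph{all} eigenvalues $E_n$ (and for the prescribed sequences $b^{(D)},n^{(D)}$): one first proves the local law uniformly in $z$ on a fine grid covering the relevant spectral window, extends it to all $z$ in that window via the Lipschitz continuity of $z\mapsto G(z)$, and takes a union bound, exactly as in the cited work.
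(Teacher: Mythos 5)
This theorem is not proved in the paper at all: it is quoted verbatim as Corollary~1.14 of \cite{AEK17}, so there is no internal proof to compare your argument against. Judged on its own terms, your outline is an accurate reconstruction of how the cited result is actually obtained. The final deduction is correct and complete as written: for $z=E_n+i\eta$ the $k=n$ term of the spectral decomposition gives $\IM\langle b|G(z)|b\rangle\geq |\langle b|\phi_n\rangle|^2/\eta$, so an isotropic local law of the form $\IM\langle b|G(z)|b\rangle\prec 1$ down to $\eta=D^{-1+\gamma}$ yields $|\langle b|\phi_n\rangle|^2\prec D^{-1+\gamma}$ for every $\gamma>0$, which is exactly the meaning of $|\langle b|\phi_n\rangle|\prec D^{-1/2}$; the grid-plus-Lipschitz argument and the union bound handle simultaneity over all $n$. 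You also correctly locate where each of Assumptions~\ref{asm: S1 (flat)}--\ref{asm: S4 bdd mom} enters (flatness for the scale and error terms, uniform primitivity for the stability operator, boundedness of $m$ for the self-energy, bounded moments for fluctuation averaging), and the claimed uniformity of $D_0(\tau,\alpha)$ in only $p,P,L,(\mu_k)$ is indeed inherited from the uniformity of the local law in these constants.

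The one thing to be clear about is that your proposal is a proof \emph{outline}, not a proof: the entire analytic content lives in the isotropic local law and the stability analysis of the vector Dyson equation, which you explicitly take as a black box from \cite{AEK17,AEK19}. That is a legitimate stance here, since the paper under review does exactly the same thing one level up (it imports the whole corollary), but if you intended a self-contained argument you would still owe the bootstrap from large $\eta$ down to $\eta\sim D^{-1+\gamma}$, the fluctuation-averaging mechanism, and the quantitative inversion of $1-m^2S$ under uniform primitivity. A minor point worth adding for completeness: to apply the local law at the random point $z=E_n+i\eta$ one also needs the spectrum to be confined to a fixed compact set with high probability, which follows from the moment assumptions but should be stated.
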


With the help of Theorem \ref{thm: AEK deloc} we find the following lower bounds for the $M_{\mu B}$:

\begin{thm}[Lower bounds for $|M_{\mu B}|$]\label{thm: lb AEK} 
Consider a sequence $(\PPP^{(D)})_{D\in\mathbb{N}}$ of probability distributions over the Hermitian $D\times D$ matrices with $H^{(D)}\sim \PPP^{(D)}$ satisfying Assumptions \ref{asm: S1 (flat)}-\ref{asm: S4 bdd mom} and let $D_0:(0,\infty)^2\to\mathbb{N}$ be the function provided by Theorem~\ref{thm: AEK deloc} for \eqref{AEK2}. Let $\tau>0,\alpha>1$,  
$D\geq D_0(\tau,\alpha)$, and let $B$ be a Hermitian $D\times D$ matrix. Then 
with probability at least $1-D^{-\alpha+1}$ it holds for every macro state $\mu$ that
\begin{align}
    |M_{\mu B}| \geq \max\left\{b_{\min}^{+}, \frac{\tr(B^{+})}{d_\mu}\left(1-\frac{D-d_\mu}{D^{1-2\tau}}\right)\right\} - \min\left\{b_{\max}^{-}, \frac{\tr(B^{-})}{d_\mu}\right\}.
\end{align}
In particular, if $B=P_\nu$ for some macro state $\nu$, then
\begin{align}
    M_{\mu\nu} &\geq \frac{d_\nu}{d_\mu} \left(1-\frac{D-d_\mu}{D^{1-2\tau}}\right),\label{ineq: lb Mmunu AEK 1}
\end{align}
and, moreover,
\begin{align}
    M_{\mu\nu} &\geq 1-\frac{D-d_\nu}{D^{1-2\tau}} \,.\label{ineq: lb Mmunu AEK 2}
\end{align}
\end{thm}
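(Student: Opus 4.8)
\textbf{Setup.} The plan is to mimic the structure of the proof of Theorem~\ref{thm: main gen}, replacing the no-gaps delocalization bound from Theorem~\ref{thm: no gaps extended} by the sup-norm delocalization bound from Theorem~\ref{thm: AEK deloc}. Write $H=H^{(D)}=\sum_n E_n \pr{\phi_n}$ with $(\phi_n)$ an orthonormal eigenbasis; since $\PPP^{(D)}$ has continuously distributed entries (implicit in the Wigner-type assumption with bounded moments, or else one argues directly that resonances have measure zero), the eigenvalues are a.s.\ distinct, so that
\be
M_{\mu B} = \frac{1}{d_\mu}\sum_n \langle\phi_n|P_\mu|\phi_n\rangle\,\langle\phi_n|B|\phi_n\rangle\,,
\ee
and, splitting $B=B^+-B^-$ and using the reverse triangle inequality exactly as in \eqref{ineq: MmuA splitting},
\be
|M_{\mu B}| \;\geq\; \frac{1}{d_\mu}\sum_n \langle\phi_n|P_\mu|\phi_n\rangle\,\langle\phi_n|B^+|\phi_n\rangle \;-\; \frac{1}{d_\mu}\sum_m \langle\phi_m|P_\mu|\phi_m\rangle\,\langle\phi_m|B^-|\phi_m\rangle\,.
\ee

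\textbf{The key estimate.} The heart of the argument is a lower bound on $\langle\phi_n|P_\mu|\phi_n\rangle$ that holds simultaneously for all $n$. Write $\langle\phi_n|P_\mu|\phi_n\rangle = 1 - \langle\phi_n|P_\mu^\perp|\phi_n\rangle$ where $P_\mu^\perp = I - P_\mu$ projects onto the complement, which has dimension $D-d_\mu$. By \eqref{Pnuphinj}, $\langle\phi_n|P_\mu^\perp|\phi_n\rangle = \sum_{j\notin I_\mu}|\langle\phi_n|j\rangle|^2 \leq (D-d_\mu)\,\|\phi_n\|_\infty^2$. By Theorem~\ref{thm: AEK deloc}, for the given $\tau,\alpha$ and every fixed $n$ (taking $b^{(D)}=|j\rangle$ and the sequence $n^{(D)}\equiv n$, or directly using \eqref{AEK2}), for $D\geq D_0(\tau,\alpha)$ we have $\PPP(\|\phi_n\|_\infty > D^{\tau-1/2}) \leq D^{-\alpha}$. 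A union bound over the $D$ eigenvectors gives that with probability at least $1-D\cdot D^{-\alpha} = 1-D^{-\alpha+1}$, \emph{every} eigenvector satisfies $\|\phi_n\|_\infty^2 \leq D^{2\tau-1}$, hence
\be
\langle\phi_n|P_\mu|\phi_n\rangle \;\geq\; 1 - (D-d_\mu)D^{2\tau-1} \;=\; 1 - \frac{D-d_\mu}{D^{1-2\tau}}\qquad\text{for all }n\,.
\ee

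\textbf{Assembling the bounds.} On this event of probability $\geq 1-D^{-\alpha+1}$, bound the first sum from below in two ways — once by $b_{\min}^+ d_\mu$ (since $\langle\phi_n|P_\mu|\phi_n\rangle\leq 1$ and $\langle\phi_n|B^+|\phi_n\rangle\geq b_{\min}^+$ after summing, using $\sum_n\langle\phi_n|P_\mu|\phi_n\rangle = d_\mu$ and that $B^+$ is positive), and once by $\bigl(1-\tfrac{D-d_\mu}{D^{1-2\tau}}\bigr)\tr(B^+)$ using the uniform lower bound on $\langle\phi_n|P_\mu|\phi_n\rangle$ together with $\sum_n\langle\phi_n|B^+|\phi_n\rangle = \tr(B^+)$; take the max. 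Bound the second (subtracted) sum from above by $\min\{b_{\max}^- d_\mu,\ \tr(B^-)\}$, in parallel to \eqref{ineq: Tmain 3}--\eqref{ineq: Tmain 4}. Dividing by $d_\mu$ gives the stated bound on $|M_{\mu B}|$. For the special case $B=P_\nu$, one has $B^-=0$, $b_{\min}^+=0$, $\tr(B^+)=d_\nu$, which yields \eqref{ineq: lb Mmunu AEK 1} directly; and applying the same uniform estimate with $\mu$ replaced by $\nu$ (to lower-bound $\langle\phi_n|P_\nu|\phi_n\rangle$ inside $M_{\mu\nu} = \frac{1}{d_\mu}\sum_n\langle\phi_n|P_\mu|\phi_n\rangle\langle\phi_n|P_\nu|\phi_n\rangle$ and then using $\sum_n\langle\phi_n|P_\mu|\phi_n\rangle = d_\mu$) yields \eqref{ineq: lb Mmunu AEK 2}.

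\textbf{Main obstacle.} There is no serious obstacle here — the argument is short. The one point requiring a little care is the \emph{uniformity over $n$}: Theorem~\ref{thm: AEK deloc} as stated concerns a fixed sequence $n^{(D)}$, so one needs the union bound over all $D$ eigenvectors, which is why the failure probability degrades from $D^{-\alpha}$ to $D^{-\alpha+1}$ and why $\alpha>1$ is assumed (so that this is still $o(1)$). One should also note that the event on which the bound holds is $D$-independent in description but its probability depends on $D$; this is consistent with the ``sequence of matrices'' viewpoint adopted in this section and is why the conclusion is phrased for $D\geq D_0(\tau,\alpha)$ rather than for a single fixed $D$.
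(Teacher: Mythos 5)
Your proposal is correct and follows essentially the same route as the paper's proof: the same splitting $B=B^+-B^-$ with the reverse triangle inequality, the same pointwise bound $\langle\phi_n|P_\mu|\phi_n\rangle \geq 1-(D-d_\mu)\|\phi_n\|_\infty^2$ obtained from Theorem~\ref{thm: AEK deloc} via a union bound over the $D$ eigenvectors (with the same degradation from $D^{-\alpha}$ to $D^{-\alpha+1}$), and the same two-sided assembly of the sums, including the derivation of \eqref{ineq: lb Mmunu AEK 2} by applying the uniform estimate to $\langle\phi_n|P_\nu|\phi_n\rangle$ instead of $\langle\phi_n|P_\mu|\phi_n\rangle$.
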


\begin{proof}
Let $\tau>0,\alpha>0,D\geq D_0(\tau,\alpha)$. Since $D_0$ does not depend on the sequence $n^{(D)}$,
\begin{align}
    \mathbb{P}^{(D)}\Bigl(\|\phi_n^{(D)}\|_{\infty} > D^{\tau} D^{-1/2}\Bigr) \leq D^{-\alpha}
\end{align}
for all $n=1,\dots,D$. Writing $\PPP$ for $\PPP^{(D)}$ and $\phi_n$ for $\phi_n^{(D)}$, we obtain that
\begin{subequations}
\begin{align}
    \mathbb{P}\left(\forall n: \|\phi_n\|_{\infty} \leq D^{-1/2+\tau}\right) &= 1-\mathbb{P}\left(\exists n: \|\phi_n\|_{\infty} > D^{-1/2+\tau}\right)\\
    &\geq 1-\sum_{n=1}^D \mathbb{P}\left(\|\phi_n\|_{\infty} > D^{-1/2+\tau}\right)\\
    &\geq 1-D^{-\alpha+1}. 
\end{align}
\end{subequations}
Now assume that $\|\phi_n\|_{\infty} \leq D^{-1/2+\tau}$ for all $n=1,\dots,D$. Then
\begin{subequations}
\begin{align}
    \langle\phi_n|P_\mu|\phi_n\rangle &= 1-\sum_{\mu'\neq \mu} \langle\phi_n|P_{\mu'}|\phi_n\rangle\\
    &= 1- \sum_{l\in [D]\backslash I_\mu} |\phi_n(l)|^2\\
    &\geq 1-\frac{D-d_\mu}{D^{1-2\tau}}\label{ineq: AEK Pmu}
\end{align}
\end{subequations}
where $I_\mu = \{d_1+\dots+d_{\mu-1}+1,\dots,d_1+\dots+d_\mu\}$. 

As in the proof of Theorem~\ref{thm: main new}, we have that
\begin{align}
    |M_{\mu B}| \geq \frac{1}{d_\mu}\left(\sum_n \langle\phi_n|P_\mu|\phi_n\rangle\langle\phi_n|B^{+}|\phi_n\rangle - \sum_m \langle\phi_m|P_\mu|\phi_m\rangle \langle\phi_m|B^{-}|\phi_m\rangle\right),
\end{align}
where $B^{+}$ and $B^-$ denote the positive and negative part of $B$, respectively.

We find that 
\begin{align}
    \sum_n \langle\phi_n|P_\mu|\phi_n\rangle\langle\phi_n|B^{+}|\phi_n\rangle &\geq \tr(B^{+}) \left(1-\frac{D-d_\mu}{D^{1-2\tau}}\right),
\end{align}
and together with \eqref{ineq: Tmain 1}, \eqref{ineq: Tmain 3} and \eqref{ineq: Tmain 4} we obtain that
\begin{align}
    |M_{\mu B}| \geq \max\left\{b_{\min}^{+}, \frac{\tr(B^{+})}{d_\mu}\left(1-\frac{D-d_\mu}{D^{1-2\tau}}\right)\right\} - \min\left\{b_{\max}^{-}, \frac{\tr(B^{-})}{d_\mu}\right\}.
\end{align}
In particular, if $B=P_\nu$ for some macro state $\nu$, then $B^+=P_\nu$, $B^-=0$, $\tr(B^+)=d_\nu$, $b_{\min}^{\pm}=b_{\max}^-=0$, $b_{\max}^+=1$, and thus
\begin{align}
    M_{\mu\nu} \geq \frac{d_\nu}{d_\mu}\left(1-\frac{D-d_\mu}{D^{1-2\tau}}\right).
\end{align}
Alternatively, in the case that $B=P_\nu$ we can apply the estimate in \eqref{ineq: AEK Pmu} to $\langle\phi_n|P_\nu|\phi_n\rangle$, which immediately yields
\begin{align}
    M_{\mu\nu} = \frac{1}{d_\mu}\sum_n \langle\phi_n|P_\mu|\phi_n\rangle \langle\phi_n|P_\nu|\phi_n\rangle \geq 1-\frac{D-d_\nu}{D^{1-2\tau}}.
\end{align}
\end{proof}

If the macro state $\mu$ resp. $\nu$ is such that $(D-d_\mu)D^{-1+2\tau} >1$ resp. $(D-d_\nu)D^{-1+2\tau} >1$, then the lower bound in \eqref{ineq: lb Mmunu AEK 1} resp. \eqref{ineq: lb Mmunu AEK 2} becomes negative and therefore useless since we always have the trivial bound $M_{\mu\nu} \geq 0$. However, if $\mu$ or $\nu$ is the equilibrium macro space in the sense that the corresponding macro space is extremely dominant, more precisely, if $d_{\textup{eq}} = D-o(D^{1-2\tau})$, then the lower bounds for the $M_{\mu\nu}$ are nontrivial. If $\nu=\textup{eq}$, then \eqref{ineq: lb Mmunu AEK 2} implies that $M_{\mu\nu} \gtrsim 1 \approx \frac{d_{\textup{eq}}}{D}$ and if $\mu=\textup{eq}$, then \eqref{ineq: lb Mmunu AEK 1} shows that $M_{\mu\nu} \gtrsim \frac{d_\nu}{d_\mu} \approx \frac{d_\nu}{D}$, in agreement with our expectations.

\section{Consequences of the Eigenstate Thermalization Hypothesis \label{sec: ETH}}

Another result, due to Cipolloni, Erdős and Henheik (CEH) \cite{CEH23}, shows that Wigner matrices (i.e., $H=H^*$ such that $h_{ij}$ for $i\leq j$ are centered, independent random variables with bounded moments, the $h_{ij}$ for $i<j$ are identically distributed, and the $h_{ii}$ are identically distributed) satisfy a version of the eigenstate thermalization hypothesis (ETH) that implies that the eigenvectors are delocalized. Although the matrices we are most interested in, the band matrices, are not Wigner matrices, we make explicit in this section which lower bounds on $M_{\mu\nu}$ (essentially versions of $M_{\mu\nu}\approx d_\nu/D$) would follow from the ETH in the version formulated by CEH (Proposition~\ref{prop: lb ETH}) and what they would imply about the relative error of generalized normal typicality (Corollary~\ref{cor: rel err ETH}). After all, as mentioned in the beginning of Section~\ref{sec: Improved LB}, it is believed that for band matrices with sufficiently wide band, all eigenvectors are delocalized.

We begin by formulating the precise condition:

\begin{defn}[ETH according to CEH \cite{CEH23}]
We say that a sequence $(\PPP^{(D)})_{D\in\mathbb{N}}$ of probability distributions over the Hermitian $D\times D$ matrices satisfies the CEH-version of ETH if for every sequence $(B^{(D)})_{D\in\mathbb{N}}$ of $D\times D$ matrices with $\|B^{(D)}\|\leq 1$ and every $\gamma>0$ and $\xi>0$, there is $\tilde{D}_0\in\mathbb{N}$ such that for $D\geq \tilde{D}_0$, it has probability at least $1-D^{-\gamma}$ that
\begin{align}
    \max_{i,j\in [D]} \biggl|\langle\phi_i^{(D)} | B^{(D)}|\phi_j^{(D)}\rangle - \frac{\tr(B^{(D)})}{D} \delta_{ij}\biggr|  \leq \frac{D^\xi}{D} \tr(|\mathring{B}^{(D)}|^2)^{1/2},\label{ineq: ETH}
\end{align}
where $\phi_1^{(D)},\ldots,\phi_D^{(D)}$ is any orthonormal eigenbasis of $H^{(D)}\sim\PPP^{(D)}$ and $\mathring{B}^{(D)} = B^{(D)} - \tr(B^{(D)})/D$ denotes the traceless part of $B$.
\end{defn}

In that case, we obtain in particular that for any sequence $(B^{(D)})$ of Hermitian $D\times D$ matrices, any $\xi>0$ and $\gamma>0$, sufficiently large $D$, $B=B^{(D)}$ and every orthonormal eigenbasis $\phi_1,\ldots,\phi_D$ of $H=H^{(D)}$, 
\begin{align}
    \langle\phi_n|B|\phi_n\rangle \geq \frac{\tr(B)}{D}-\frac{D^\xi}{D}\tr(|\mathring{B}|^2)^{1/2} \label{ineq: B ETH lb}
\end{align}
for all $n=1,\ldots,D$ simultaneously with probability at least $1-D^{-\gamma}$. Now recall that, if $H$ is non-degenerate, then
\be
   M_{\mu B} = \frac{1}{d_\mu}\sum_n \langle\phi_n|P_\mu|\phi_n\rangle\langle\phi_n|B|\phi_n\rangle\,.
\ee

\begin{prop}[Lower bound for $|M_{\mu B}|$]\label{prop: lb ETH}
Let $\xi>0$, let $H$ be a Hermitian $D\times D$  matrix with orthonormal eigenbasis $\{\phi_1,\ldots,\phi_D\}$, and let $B$ be a Hermitian $D\times D$ matrix such that \eqref{ineq: B ETH lb} is satisfied for all $n=1,\ldots,D$.
Then, for every macro state $\mu$,
\begin{align}
    |M_{\mu B}| \geq \frac{|\tr(B)|}{D}-\frac{D^\xi}{D} \tr(|\mathring{B}|^2)^{1/2}.\label{ineq: MmuB CEK1}
\end{align}
In particular,
\begin{align}
    M_{\mu\nu} \geq \frac{d_\nu}{D}\left(1-\frac{D^\xi }{\sqrt{d_\nu}}\right).
    \label{ineq: MmuB CEK2}
\end{align}
for every macro state $\nu$. Moreover,
\begin{align}
 M_{\mu\nu} \geq \frac{d_\nu}{D}\left(1-\frac{D^\xi }{\sqrt{d_\mu}}\right).
\end{align}
\end{prop}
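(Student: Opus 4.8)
The plan is to start from the representation $M_{\mu B}=\frac{1}{d_\mu}\sum_{n}\langle\phi_n|P_\mu|\phi_n\rangle\,\langle\phi_n|B|\phi_n\rangle$ recalled just before the statement (valid since $H$ may be taken non-degenerate) and to observe that the coefficients $\langle\phi_n|P_\mu|\phi_n\rangle=\|P_\mu\phi_n\|^2$ lie in $[0,1]$ and sum to $\tr(P_\mu)=d_\mu$. Hence $M_{\mu B}$ is a convex combination of the diagonal matrix elements $\langle\phi_n|B|\phi_n\rangle$, $n=1,\dots,D$.

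To prove \eqref{ineq: MmuB CEK1}, I would first upgrade the one-sided hypothesis \eqref{ineq: B ETH lb} to a two-sided estimate: applying \eqref{ineq: B ETH lb} with $-B$ in place of $B$, and using $\mathring{(-B)}=-\mathring{B}$ (so that $\tr(|\mathring{(-B)}|^2)=\tr(|\mathring{B}|^2)$), yields the matching upper bound, so that $\bigl|\langle\phi_n|B|\phi_n\rangle-\tr(B)/D\bigr|\le \frac{D^{\xi}}{D}\tr(|\mathring{B}|^2)^{1/2}$ for every $n$. Since $M_{\mu B}$ is a convex combination of these diagonal entries, it lies within the same distance of $\tr(B)/D$, and the reverse triangle inequality then gives $|M_{\mu B}|\ge \frac{|\tr(B)|}{D}-\frac{D^{\xi}}{D}\tr(|\mathring{B}|^2)^{1/2}$, which is \eqref{ineq: MmuB CEK1}.

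For \eqref{ineq: MmuB CEK2} I would specialize $B=P_\nu$: here $\tr(B)=d_\nu\ge 0$, $\mathring{B}=P_\nu-\frac{d_\nu}{D}$ is Hermitian, and a one-line computation gives $\tr(|\mathring{B}|^2)=\tr(\mathring{B}^2)=d_\nu-\frac{d_\nu^2}{D}=d_\nu\bigl(1-\tfrac{d_\nu}{D}\bigr)\le d_\nu$; plugging this into \eqref{ineq: MmuB CEK1} gives $M_{\mu\nu}\ge \frac{d_\nu}{D}-\frac{D^{\xi}}{D}\sqrt{d_\nu}=\frac{d_\nu}{D}\bigl(1-\frac{D^{\xi}}{\sqrt{d_\nu}}\bigr)$. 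For the last bound I would instead apply \eqref{ineq: B ETH lb} directly to $B=P_\mu$ (again $\tr(|\mathring{P_\mu}|^2)\le d_\mu$), obtaining $\langle\phi_n|P_\mu|\phi_n\rangle\ge \frac{d_\mu}{D}\bigl(1-\frac{D^{\xi}}{\sqrt{d_\mu}}\bigr)$ for every $n$, and then estimate
\begin{align*}
M_{\mu\nu}=\frac{1}{d_\mu}\sum_n\langle\phi_n|P_\nu|\phi_n\rangle\,\langle\phi_n|P_\mu|\phi_n\rangle &\ge \frac{1}{d_\mu}\cdot\frac{d_\mu}{D}\Bigl(1-\frac{D^{\xi}}{\sqrt{d_\mu}}\Bigr)\sum_n\langle\phi_n|P_\nu|\phi_n\rangle\\
&= \frac{d_\nu}{D}\Bigl(1-\frac{D^{\xi}}{\sqrt{d_\mu}}\Bigr),
\end{align*}
using $\sum_n\langle\phi_n|P_\nu|\phi_n\rangle=\tr(P_\nu)=d_\nu$. (Equivalently, one can invoke the symmetry $d_\mu M_{\mu\nu}=d_\nu M_{\nu\mu}$ and apply \eqref{ineq: MmuB CEK2} with $\mu$ and $\nu$ interchanged.)

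The argument is elementary and I do not expect a genuine obstacle; the only points needing a little care are (i) that \eqref{ineq: MmuB CEK1} requires the two-sided ETH bound, which is why one also feeds $-B$ into \eqref{ineq: B ETH lb}, and (ii) that the lower bound $\frac{d_\mu}{D}(1-D^{\xi}/\sqrt{d_\mu})$ on $\langle\phi_n|P_\mu|\phi_n\rangle$ may be negative, which is harmless here because it is multiplied by the nonnegative numbers $\langle\phi_n|P_\nu|\phi_n\rangle$ before summing. As usual, these bounds carry information only when their right-hand sides are positive (i.e.\ $d_\nu>D^{2\xi}$, resp.\ $d_\mu>D^{2\xi}$), since $M_{\mu\nu}\ge 0$ holds trivially.
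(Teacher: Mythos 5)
Your proof is correct and follows essentially the same route as the paper's: the convex-combination reading of $M_{\mu B}$, a two-sided ETH bound combined with the reverse triangle inequality for \eqref{ineq: MmuB CEK1}, the estimate $\tr(|\mathring{B}|^2)\le d_\nu$ for $B=P_\nu$ in \eqref{ineq: MmuB CEK2}, and applying \eqref{ineq: B ETH lb} to $P_\mu$ (equivalently, the symmetry $d_\mu M_{\mu\nu}=d_\nu M_{\nu\mu}$) for the final bound. Your explicit derivation of the matching upper bound by feeding $-B$ into \eqref{ineq: B ETH lb} (where the paper only says ``similarly''), and your trace computation $\tr(|\mathring{P_\nu}|^2)=d_\nu(1-d_\nu/D)$, are if anything slightly more careful than the paper's own write-up.
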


\begin{proof}
From \eqref{ineq: B ETH lb} we obtain that
\begin{align}
    M_{\mu B} = \frac{1}{d_\mu}\sum_n \langle\phi_n|P_\mu|\phi_n\rangle\langle\phi_n|B|\phi_n\rangle
    \geq\frac{\tr(B)}{D}-\frac{D^\xi}{D} \tr(|\mathring{B}|^2)^{1/2}.
\end{align}
Similarly one finds that
\begin{align}
    M_{\mu B} \leq \frac{\tr(B)}{D}+\frac{D^\xi}{D}\tr(|\mathring{B}|^2)^{1/2}
\end{align}
and therefore
\begin{align}
    \left|M_{\mu B}-\frac{\tr(B)}{D} \right| \leq \frac{D^\xi}{D}\tr(|\mathring{B}|^2)^{1/2}.
\end{align}
With the help of the reverse triangle inequality we obtain \eqref{ineq: MmuB CEK1}. If $B=P_\nu$ for some macro state $\nu$ we have $\tr(B)=d_\nu$, $|\mathring{B}|^2 = P_\nu-2 P_\nu d_\nu/D+d_\nu^2/D^2$, therefore $\tr(|\mathring{B}|^2) = d_\nu-2d_\nu^2/D+d_\nu^2/D^2 \leq d_\nu$ and \eqref{ineq: MmuB CEK2} thus follows immediately from \eqref{ineq: MmuB CEK1}.

Concerning the last bound, observe that \eqref{ineq: B ETH lb} for $B=P_\mu$ implies
\begin{align}
 M_{\mu\nu}= \frac{1}{d_\mu} \sum_n \langle\phi_n|P_\mu|\phi_n\rangle \langle\phi_n|P_\nu|\phi_n\rangle \geq \frac{d_\nu}{d_\mu}\left(\frac{d_\mu}{D}-\frac{D^\xi\sqrt{d_\mu}}{D}\right).
\end{align}    
\end{proof}

For large dimensions $D$ we find for any macro states $\mu$ and $\nu$ that $M_{\mu\nu} \gtrsim \frac{d_\nu}{D}$ (provided that $d_\nu$ is large enough such that $\frac{d_\nu}{D}\gg D^{\xi-1}\sqrt{d_\nu}$), in agreement with our expectations.
For the relative errors of the $\|P_\nu\psi_t\|^2$, 
we obtain the following.

\begin{cor}\label{cor: rel err ETH}
    Suppose that $(\PPP^{(D)})$ satisfies the CEH-version of ETH, and that it is a continuous distribution for every $D$. Let $\varepsilon,\delta,\xi,\gamma>0$, and let $\mu$ and $\nu$ be arbitrary macro states such that $\sqrt{d_\nu}\geq 2 D^\xi$ or $\sqrt{d_\mu}\geq 2 D^\xi$, i.e.,
    \begin{align}
        s_\nu \geq 2\xi s_{\mathrm{mc}} + \frac{2k_B}{N}\ln 2 \quad \mathrm{or}\quad s_\mu \geq 2\xi s_{\mathrm{mc}}+\frac{2k_B}{N}\ln 2.
    \end{align} 
    Then, for sufficiently large $D$ and with probability at least $1-D^{-\gamma}$, $(1-\varepsilon)$-most $\psi_0\in\mathbb{S}(\Hilbert_\mu)$ are such that for $(1-\delta)$-most $t\in[0,\infty)$,
    \begin{align}
        \frac{\bigl|\|P_\nu\psi_t\|^2-M_{\mu\nu} \bigr|}{M_{\mu\nu}} &\leq \frac{8}{\sqrt{\varepsilon\delta}}\exp\left(-\frac{N}{k_B}\left(\max\{s_\mu,s_\nu\}-s_{\mathrm{mc}}+\frac{1}{2}\min\{s_\nu,s_\mu\}\right)\right).
    \end{align}
\end{cor}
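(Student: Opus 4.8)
The plan is to combine, in parallel to the derivation of Corollary~\ref{cor: GNT rel err}, the lower bound for $M_{\mu\nu}$ furnished by the CEH-version of ETH (Proposition~\ref{prop: lb ETH}) with the upper bound on the \emph{absolute} error of generalized normal typicality (Theorem~\ref{thm: GNT}). First I would record that, since each $\PPP^{(D)}$ is a continuous distribution, the set of Hermitian matrices with degenerate eigenvalues or eigenvalue gaps has $\PPP^{(D)}$-measure zero (Appendix~\ref{app: no res}); hence with probability $1$ we may assume $D_E=D_G=1$, so that $H^{(D)}$ is non-degenerate, $M_{\mu\nu}$ is given by \eqref{Mmunuexpression}, and Theorem~\ref{thm: GNT} applies with $D_ED_G=1$.

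Next I would apply the CEH-version of ETH to the two constant sequences $B^{(D)}=P_\nu$ and $B^{(D)}=P_\mu$ — here using, as discussed in Remark~\ref{rmk: rel error small} for the analogous statement Corollary~\ref{cor: real dim rel}, that one speaks of the ``same'' macro states $\mu,\nu$ along the sequence, so that $d_\mu=\exp(s_\mu N/k_B)$, $d_\nu=\exp(s_\nu N/k_B)$, $D=\exp(s_{\textup{mc}}N/k_B)$ are well defined. For $D$ sufficiently large this gives, with probability at least $1-D^{-\gamma}$ (after absorbing the two exceptional events, shrinking $\gamma$ slightly or enlarging $D$ if one wants to keep the constant $8$), that \eqref{ineq: B ETH lb} holds for all eigenvectors simultaneously with $B=P_\nu$ and with $B=P_\mu$; hence by Proposition~\ref{prop: lb ETH} one has $M_{\mu\nu}\geq\tfrac{d_\nu}{D}\bigl(1-D^\xi/\sqrt{d_\nu}\bigr)$ and $M_{\mu\nu}\geq\tfrac{d_\nu}{D}\bigl(1-D^\xi/\sqrt{d_\mu}\bigr)$ on that event. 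The hypothesis $\sqrt{d_\nu}\geq 2D^\xi$ or $\sqrt{d_\mu}\geq 2D^\xi$ forces one of the two parentheses to be $\geq 1/2$, so $M_{\mu\nu}\geq d_\nu/(2D)>0$.

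Then I would invoke Theorem~\ref{thm: GNT} with $D_E=D_G=1$: for $(1-\varepsilon)$-most $\psi_0\in\mathbb{S}(\Hilbert_\mu)$ and $(1-\delta)$-most $t\in[0,\infty)$ one has $\bigl|\|P_\nu\psi_t\|^2-M_{\mu\nu}\bigr|\leq\tfrac{4}{\sqrt{\varepsilon\delta}}\bigl(\tfrac{1}{d_\mu}\min\{1,d_\nu/d_\mu\}\bigr)^{1/2}$. Dividing by the lower bound $M_{\mu\nu}\geq d_\nu/(2D)$ and simplifying by means of the elementary identity $\tfrac{1}{d_\nu}\bigl(\tfrac{1}{d_\mu}\min\{1,\tfrac{d_\nu}{d_\mu}\}\bigr)^{1/2}=\bigl(\max\{d_\mu,d_\nu\}\sqrt{\min\{d_\mu,d_\nu\}}\bigr)^{-1}$ (a two-line case distinction according to whether $d_\nu\geq d_\mu$), the right-hand side becomes $\tfrac{8}{\sqrt{\varepsilon\delta}}\,\tfrac{D}{\max\{d_\mu,d_\nu\}\sqrt{\min\{d_\mu,d_\nu\}}}$; inserting $d_\mu=\exp(s_\mu N/k_B)$, $d_\nu=\exp(s_\nu N/k_B)$, $D=\exp(s_{\textup{mc}}N/k_B)$ produces exactly the claimed exponential bound.

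I do not expect a genuine obstacle; the argument is an essentially mechanical combination of results already established. The one point requiring care is the bookkeeping across the two probabilistic frameworks: Theorem~\ref{thm: GNT} is a fixed-$D$ statement, whereas the CEH-version of ETH is asymptotic (it only yields ``for $D$ sufficiently large'' with no explicit threshold), so the conclusion is likewise only ``for sufficiently large $D$,'' and one must keep straight which claims hold ``with probability $\geq 1-D^{-\gamma}$'' (the ETH event, hence the lower bound on $M_{\mu\nu}$), which hold almost surely ($D_E=D_G=1$), and which hold ``for $(1-\varepsilon)$-most $\psi_0$ and $(1-\delta)$-most $t$'' (the absolute-error estimate). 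A second, minor point is to carry out the min/max identity above explicitly so that the constant $8$ and the exponent come out precisely as stated.
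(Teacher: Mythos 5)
Your proposal is correct and follows essentially the same route as the paper: the lower bound $M_{\mu\nu}\geq d_\nu/(2D)$ from Proposition~\ref{prop: lb ETH} under the hypothesis $\sqrt{d_\nu}\geq 2D^\xi$ or $\sqrt{d_\mu}\geq 2D^\xi$, combined with Theorem~\ref{thm: GNT} at $D_E=D_G=1$ (justified via Appendix~\ref{app: no res}), followed by the same $\min/\max$ simplification and substitution of the specific entropies. If anything, you are slightly more careful than the paper's proof, which only cites the bound \eqref{ineq: MmuB CEK2} (covering the $\sqrt{d_\nu}\geq 2D^\xi$ case) whereas the ``or $\sqrt{d_\mu}\geq 2D^\xi$'' alternative requires the second bound of Proposition~\ref{prop: lb ETH}, which you invoke explicitly.
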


\begin{proof}
    It follows immediately from \eqref{ineq: MmuB CEK2} that
    \begin{align}
        M_{\mu\nu} \geq \frac{d_\nu}{2D}.
    \end{align}
    Together with the upper bound for the absolute error from Theorem~\ref{thm: GNT}, we obtain that $(1-\varepsilon)$-most $\psi_0\in\mathbb{S}(\Hilbert_\mu)$ are such that for $(1-\delta)$-most $t\in[0,\infty)$,
    \begin{align}
        \frac{\bigl|\|P_\nu\psi_t\|^2-M_{\mu\nu} \bigr|}{M_{\mu\nu}} \leq \frac{8}{\sqrt{\varepsilon\delta}} \frac{D}{d_\nu\sqrt{d_\mu}} \left(\min\left\{1,\frac{d_\nu}{d_\mu}\right\}\right)^{1/2}
    \end{align}
    The claim now follows immediately from the definition of $s_\mu$, $s_\nu$ and $s_{\mathrm{mc}}$.
\end{proof}

The relative errors in Corollary~\ref{cor: rel err ETH} are small for large $N$ if $s_{\mathrm{mc}}<\max\{s_\mu,s_\nu\}+\min\{s_\mu,s_\nu\}/2$, i.e., we recover the condition we found in the case of normal typicality, see also Remark~\ref{rmk: rel error small}.

\appendix

\section{Appendix - No Resonances \label{app: no res}}

In this appendix we provide a proof of the fact that Hermitian matrices whose joint distribution of their entries is absolutely continuous with respect to the Lebesgue measure have, with probability 1, no degeneracies and no resonances (i.e., also the eigenvalue gaps are non-degenerate). This fact is widely known, but we could not find a proof in the literature.

\begin{prop}[No resonances]\label{prop: no res}
    Let $H$ be a random Hermitian $D\times D$ matrix with eigenvalues $\lambda_1,\dots,\lambda_D$ such that the joint distribution of its entries is absolutely continuous with respect to the Lebesgue measure. Then, 
    \begin{align}
        \mathbb{P}\left(\lambda_i-\lambda_j = \lambda_k-\lambda_l \mbox{ for some } (i\neq j \mbox{ or } k\neq l) \mbox{ and } (i\neq k \mbox{ or } j\neq l)\right)=0.
    \end{align}
\end{prop}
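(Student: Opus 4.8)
The plan is to contain the event in question inside the zero set of a single non-vanishing real polynomial on the $D^2$-dimensional real vector space of Hermitian $D\times D$ matrices (coordinatized by the $D$ real diagonal entries together with the real and imaginary parts of the superdiagonal entries). Since, by hypothesis, the law of $H$ on this space is absolutely continuous with respect to Lebesgue measure, and the zero set of a nonzero polynomial is Lebesgue-null, this immediately yields the conclusion.

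Concretely, for a Hermitian matrix $H$ with eigenvalues $\lambda_1,\dots,\lambda_D$ listed with multiplicity, I would set
\[
  G(H):=\prod_{(i,j,k,l)\in S}\bigl(\lambda_i-\lambda_j-\lambda_k+\lambda_l\bigr),
\]
where $S\subset[D]^4$ is exactly the index set admitted in the proposition, $S=\{(i,j,k,l):(i\neq j\text{ or }k\neq l)\text{ and }(i\neq k\text{ or }j\neq l)\}$. The crucial point is that $G$ is independent of the enumeration of the eigenvalues: the diagonal action $(i,j,k,l)\mapsto(\pi i,\pi j,\pi k,\pi l)$ of $\pi\in S_D$ maps $S$ bijectively onto itself and only permutes the factors, so $G$ is a symmetric polynomial in $\lambda_1,\dots,\lambda_D$, hence an integer-coefficient polynomial in their elementary symmetric functions. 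The latter are, up to sign, the coefficients of $\det(xI-H)$ (equivalently, sums of principal minors of $H$), which are real polynomials in the $D^2$ real coordinates of $H$. Thus $G(H)=g(H)$ for a real polynomial $g$ on $\mathbb{R}^{D^2}$.

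To see $g\not\equiv0$, I would restrict to diagonal matrices $H=\mathrm{diag}(t_1,\dots,t_D)$, where $g$ specializes to $\prod_{(i,j,k,l)\in S}(t_i-t_j-t_k+t_l)$. Each factor is a \emph{nonzero} linear form in $t_1,\dots,t_D$: the form $t_i-t_j-t_k+t_l$ is the zero polynomial precisely when the positive and negative index multisets agree, i.e.\ when ($i=j$ and $k=l$) or ($i=k$ and $j=l$), and both cases are excluded from $S$. Since $\mathbb{R}[t_1,\dots,t_D]$ is an integral domain, this finite product of nonzero forms is nonzero, so $g$ is nonzero on the coordinate subspace of diagonal matrices and hence nonzero.

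Finally, the event in the proposition is contained in $\{g=0\}$: if $\lambda_i-\lambda_j=\lambda_k-\lambda_l$ for some $(i,j,k,l)\in S$, then a factor of $G(H)$ vanishes, so $g(H)=0$; in particular this also captures a repeated eigenvalue $\lambda_a=\lambda_b$ with $a\neq b$, via the admissible tuple $(a,b,l,l)\in S$. As $\{g=0\}$ is Lebesgue-null and the law of $H$ is absolutely continuous, the event has probability zero. I expect the only genuinely delicate step to be the reduction of $G$ to an honest polynomial $g$ in the matrix entries: it hinges on the $S_D$-invariance of $S$ (which is also precisely why the somewhat baroque-looking index restrictions in the statement are the natural ones to impose), whereas the remaining steps are routine.
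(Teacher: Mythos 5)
Your proof is correct, but it takes a genuinely different route from the paper's. The paper parametrizes Hermitian matrices by the spectral decomposition map $\varphi:\mathbb{R}^D\times \bigl(U(D)/U(1)^D\bigr)\to S(D)$, observes that the resonance condition defines a finite union of hyperplanes (hence a null set) in the eigenvalue factor $\mathbb{R}^D$, and then invokes the fact that a smooth map between manifolds of equal dimension carries null sets to null sets; this requires checking smoothness of $\varphi$, the manifold structure of the quotient $U(D)/U(1)^D$, and a dimension count. You instead encode the resonance condition as the vanishing of the single product $G=\prod_{(i,j,k,l)\in S}(\lambda_i-\lambda_j-\lambda_k+\lambda_l)$, use the $S_D$-invariance of $S$ to see that $G$ is symmetric in the eigenvalues and hence (via the fundamental theorem of symmetric polynomials and the characteristic polynomial) a real polynomial $g$ in the $D^2$ real matrix coordinates, verify $g\not\equiv 0$ on the diagonal matrices by noting that $S$ excludes precisely the tuples for which the linear form $t_i-t_j-t_k+t_l$ vanishes identically, and conclude since the zero set of a nonzero polynomial is Lebesgue-null. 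All of these steps check out, including the identification of the excluded index tuples with the identically vanishing forms and the inclusion of the degenerate-eigenvalue case via tuples of the form $(a,b,l,l)$. Your argument is more elementary and self-contained (no Lie-group quotients or manifold theory), and it yields slightly more: the resonance set is contained in a proper real algebraic subvariety of $S(D)$. The paper's geometric argument, on the other hand, would apply verbatim to any Lebesgue-null (not necessarily algebraic) condition on the eigenvalue tuple. Either proof is acceptable.
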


We begin with some preparations for the proof. Let $S(D)$ denote the set of Hermitian $D\times D$ matrices and $U(D)$ the unitary group of $D\times D$ matrices, here regarded as orthonormal bases of the underlying Hilbert space. We define $\chi: \mathbb{R}^D \times U(D) \to S(D)$ by
\begin{align}
    \chi(\lambda_1,\dots,\lambda_D,\psi_1,\dots,\psi_D) = \sum_{j=1}^D \lambda_j |\psi_j\rangle \langle\psi_j|,\label{eq: chi}
\end{align}
where $\psi_j$ denotes the $j$th column of the unitary matrix $[\psi_1,\dots,\psi_D]$. Since the matrix $\chi(\lambda_1,\dots,\lambda_D,\psi_1,\dots,\psi_D)$ remains constant when the phase of the $\psi_i$ is changed, $\chi$ defines a mapping $\varphi: \mathbb{R}^D\times U(D)/U(1)^D \to S(D)$. 
\begin{lemma}\label{lemma: loc diffeo}
The map $\varphi$ defined above is smooth.
\end{lemma}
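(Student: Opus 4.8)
The plan is to realize $\varphi$ as the map induced by $\chi$ on the quotient $\RRR^D\times\bigl(U(D)/U(1)^D\bigr)$ and then to invoke the standard principle that a smooth map which is constant along the fibres of a surjective submersion descends to a smooth map on the base. The first, routine, ingredient is that $\chi$ itself is smooth: regarding $U(D)$ as an embedded submanifold of $\CCC^{D\times D}\cong\RRR^{2D^2}$ and $S(D)$ as a real vector space, each matrix entry of $\chi(\lambda_1,\dots,\lambda_D,\psi_1,\dots,\psi_D)=\sum_{j=1}^D\lambda_j\pr{\psi_j}$ is a polynomial in the $\lambda_j$ and in the real and imaginary parts of the components of the $\psi_j$, so $\chi\in C^\infty\bigl(\RRR^D\times U(D),\,S(D)\bigr)$.

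Next I would set up the quotient. Let $U(1)^D$ act on $U(D)$ on the right by $[\psi_1,\dots,\psi_D]\cdot(e^{i\theta_1},\dots,e^{i\theta_D})=[e^{i\theta_1}\psi_1,\dots,e^{i\theta_D}\psi_D]$, i.e.\ by right multiplication with the diagonal unitary $\mathrm{diag}(e^{i\theta_1},\dots,e^{i\theta_D})$. This action is free, since $U\,\mathrm{diag}(e^{i\theta_j})=U$ forces $\mathrm{diag}(e^{i\theta_j})=I$ and hence each $\theta_j\in 2\pi\mathbb{Z}$, and it is proper because $U(1)^D$ is compact. By the quotient manifold theorem, $U(D)/U(1)^D$ carries a unique smooth structure for which the canonical projection $\pi\colon U(D)\to U(D)/U(1)^D$ is a surjective submersion (in fact a principal $U(1)^D$-bundle), and therefore $\mathrm{id}_{\RRR^D}\times\pi$ is a surjective submersion as well.

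Then comes the descent. Since $\pr{e^{i\theta_j}\psi_j}=\pr{\psi_j}$ for every phase, $\chi$ is invariant under the above action, hence constant on the fibres of $\mathrm{id}\times\pi$; this is precisely what makes $\varphi$ well defined, with $\varphi\circ(\mathrm{id}\times\pi)=\chi$. Because $\pi$ admits smooth local sections $s$, on the corresponding open subsets of the base one has $\varphi=\chi\circ(\mathrm{id}\times s)$, a composition of smooth maps; smoothness being a local property, $\varphi\in C^\infty\bigl(\RRR^D\times(U(D)/U(1)^D),\,S(D)\bigr)$.

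The only step carrying genuine content is the manifold structure on the quotient, and there the sole points to verify are freeness and properness of the torus action — both of which are immediate, as indicated above. Everything else (smoothness of $\chi$, the phase-invariance, and the descent through a submersion) is bookkeeping. If one prefers to avoid the quotient manifold theorem, one can instead cover $U(D)/U(1)^D$ by explicit charts, e.g.\ by normalising the phase of the first nonzero component of each column $\psi_j$, thereby producing explicit smooth local sections of $\pi$, and check smoothness of $\varphi$ chart by chart; this is more hands-on but less transparent, and I would only fall back on it if a self-contained argument were desired.
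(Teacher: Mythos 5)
Your proposal is correct and follows essentially the same route as the paper: smoothness of $\chi$ via its polynomial entries and restriction to the embedded submanifold $U(D)\subset\CCC^{D\times D}$, followed by descent to the quotient $U(D)/U(1)^D$. If anything, your descent step (quotient manifold theorem for the free, proper $U(1)^D$-action plus smooth local sections of the submersion $\pi$) is more complete than the paper's, which justifies the descent only by remarking that $\psi\mapsto\pr{\psi}$ remains smooth on $(\CCC^D\backslash\{0\})/U(1)$.
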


\begin{proof}
    First note that the function $\chi$ in \eqref{eq: chi} defined on $\mathbb{R}^D\times \mathbb{C}^{D\times D} \to \mathbb{C}^{D\times D}$ (extended in the obvious way) is smooth since all components of $\chi(\lambda_1,\dots,\lambda_D,\psi_1,\dots,\psi_D)$ are polynomials in the $\lambda_j$ and the $(\psi_j)_k$. Since $U(D)$ is a (embedded) submanifold of $\mathbb{C}^{D\times D}$, the restriction of $\chi$ to $\mathbb{R}^D\times U(D)$ is also smooth \cite[Prop.~5.27]{Lee}. It remains to show that also the replacing of $U(D)$ by the quotient $U(D)/U(1)^D$ does not destroy the smoothness. To this end, consider the map
    \begin{align}
        f: \mathbb{C}^D\backslash \{0\} \to \mathbb{C}^{D\times D},\; \psi \mapsto f(\psi) = |\psi\rangle\langle\psi|.
    \end{align}
    This map is obviously smooth and remains smooth when considered as a map on $(\mathbb{C}^D\backslash\{0\})/U(1)$. Thus, we can conclude that $\varphi$ is smooth.
\end{proof}

\begin{proof}[Proof of Proposition \ref{prop: no res}]
    Since $U(1)^D$ is a closed subset of $U(D)$ (because limits of diagonal matrices are diagonal), and since $U(1)^D$ is a Lie subgroup of $U(D)$, it follows that the set $U(D)/U(1)^D$ of left cosets is a smooth manifold \cite[Thm.~21.17]{Lee}.
   It is well known that $\dim U(D)=D^2$, $\dim U(D)/U(1)^D = D^2-D$, $\dim S(D)=D^2$ and thus $\dim \left(\mathbb{R}^D \times (U(D)/U(1)^D)\right) = \dim S(D)$. The spaces $U:=\mathbb{R}^D \times (U(D)/U(1)^D)$ and $V:=S(D)$ are therefore smooth manifolds of the same dimension and $\varphi\big|_{\mathbb{R}^D \times (U(D)/U(1)^D)}: U\to V$ is smooth by Lemma~\ref{lemma: loc diffeo}. 
    Since the set 
    \begin{multline}
        N_{\mathrm{res}} := \Bigl\{(x_1,\dots,x_D)\in\mathbb{R}^D:\\ 
        x_i-x_j = x_k - x_l \mbox{ for some } (i\neq j \mbox{ or } k\neq l) \mbox{ and } (i\neq k \mbox{ or } j\neq l)\Bigr\}
    \end{multline}
    is a null set in $\RRR^D$, it follows that $N_{\mathrm{res}}\times (U(D)/U(1)^D)$ is a null set in $\RRR^D\times (U(D)/U(1)^D)$. Generally, if $f:U\to V$ is a smooth mapping between smooth manifolds $U,V$ of equal dimension and $N\subset U$ is a null set, then $f(N)$ is a null set in $V$ \cite[Thm.~6.9]{Lee}. Thus, 
    \begin{align}
        \varphi(N_{\mathrm{res}}\times (U(D)/U(1)^D)) = \chi(N_{\mathrm{res}} \times U(D))
    \end{align}
    is a null set in $S(D)$. Altogether we have shown that the set of Hermitian matrices that have resonances has measure zero. Finally, the absolute continuity of the joint distribution of the entries of $H$ with respect to the Lebesgue measure immediately proves the claim.
\end{proof}

\section{Appendix - Numerical Examples \label{sec: numerics}}

We briefly describe the numerical   simulations we used to create Figures~\ref{fig:numerical example 1} and \ref{fig:numerical example 2}. These simulations serve for illustrating our results and stating some further conjectures. 
We partition the $D$-dimensional Hilbert space $\Hilbert := \mathbb{C}^D = \bigoplus_{\nu=1}^4\mathbb{C}^{d_\nu}$ into four macro spaces $\Hilbert_\nu$ of dimension $d_\nu$ with $d_\nu\ll d_{\nu+1}$ for $\nu=1,2,3$; more precisely, we choose $d_\nu = 2\times 10^{\nu-1}$. Then $\Hilbert_1$ is spanned by the first $d_1$ canonical basis vectors, $\Hilbert_2$ by the next $d_2$ canonical basis vectors and so on, and $\Hilbert_4$, the largest macro space in the decomposition, corresponds to the ``equilibrium space.'' The Hamiltonian $H=(h_{jk})$ is modelled by a Hermitian random $D\times D$ matrix with a band structure which means that it couples neighboring macro spaces more strongly than distant ones, see also Figure \ref{fig:bandmatrix}. The entries of $H$ satisfy $h_{jj}\sim\mathcal{N}(0,\sigma_{jj}^2)$ and $h_{jk}\sim\mathcal{N}(0,\sigma_{jk}^2/2)+i\mathcal{N}(0,\sigma_{jk}^2/2)$ for $j\neq k$, where
\begin{align}
    \sigma_{jk}^2 := \exp(-s|j-k|)
\end{align}
with some parameter $s>0$ that controls the bandwidth of the random matrix, i.e., the variances decrease exponentially in the distance from the diagonal. Note that this model is not covered by our examples in Section \ref{subsec: ex} since the $\sigma_{jk}$ cannot be bounded by a positive $D$-independent constant or by $D^{\alpha}$ for some $\alpha\in\mathbb{R}$ from below. However, it also suggests that similar results should hold in more general situations than in the ones we were able to study.

\bigskip

\noindent \textbf{Acknowledgments.}
We thank László Erdős and Roman Vershynin for helpful discussions.
C.V.\ gratefully acknowledges financial support by the German Academic Scholarship Foundation. This work was supported by the Deutsche Forschungsgemeinschaft (DFG, German Research Foundation) -- TRR 352 -- Project-ID 470903074.

\bibliographystyle{plainurl}
\bibliography{literature_GNT.bib}

\end{document}